\pgfplotsset{compat=1.16}
\spnewtheorem*{postponedProof}{Proof of}{\bfseries}{\itshape}
\def\@citecolor{blue}%
\def\@urlcolor{blue}%
\def\@linkcolor{blue}%
\def\orcidID#1{{\href{http://orcid.org/#1}{\protect\raisebox{-1.25pt}{\protect\includegraphics{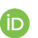}}}}}}{}
\global\long\def\defEq{\mathrel{\coloneqq}}%
\global\long\def\varname#1{\mathsf{#1}}%
\global\long\def\ccsIdentifier#1{\mathsf{#1}}%
\global\long\def\ccsInm#1{#1}%
\global\long\def\rel#1{\mathcal{#1}}%
\global\long\def\bigo{\mathcal{O}}%
\newcommand*{\relSize}[1]{\lvert\mathord{#1}\rvert}
\newcommand{\powerSet}[1]{\mathbf{2}^{#1}}
\newcommand{\nats}{\mathbb{N}}
\newcommand{\ints}{\mathbb{Z}}
\newcommand{\domain}{\operatorname{\mathrm{dom}}}
\newcommand*{\vectorComponents}[2][n]{({#2}_1,\ldots,{#2}_{#1})}
\newcommand*{\Min}{\operatorname{\mathrm{Min}}}
\newcommand*{\gameMoveX}[1]{\mathrel{\smash{\xrightarrowtail{\scriptscriptstyle#1}}}}%
\newcommand*{\gameMove}{\gameMoveX{\hspace*{0.5em}}}%
\newcommand*{\game}{\mathcal{G}}%
\newcommand*{\gameSpectroscopy}{\mathcal{G}_\triangle}%
\newcommand*{\gameSpectroscopyClever}{\mathcal{G}_\blacktriangle}%
\newcommand*{\attackerPos}[2][]{{{\color{gray}(}#2{\color{gray})}}_\mathtt{a}^{\color{gray}\smash{\scriptscriptstyle#1}}}
\newcommand*{\defenderPos}[2][]{{{\color{gray}(}#2{\color{gray})}}_\mathtt{d}^{\color{gray}\smash{\scriptscriptstyle#1}}}
\newcommand*{\attackerSubscript}{{\operatorname{a}}}
\newcommand*{\defenderSubscript}{{\operatorname{d}}}
\newcommand*{\energies}{\mathbf{En}}
\newcommand*{\energyUpdates}{\mathbf{Up}}%
\newcommand*{\energyUpdate}{\mathsf{upd}}%
\newcommand*{\energyUpdateInv}{\mathsf{upd}^{-1}}%
\newcommand*{\updMin}[1]{\mathtt{min}_{\{\!#1\!\}}}
\newcommand*{\energyLevel}{\mathsf{EL}}%
\newcommand*{\attackerWin}{\mathsf{Win}_\attackerSubscript}
\newcommand*{\attackerWinMin}{\attackerWin^{\scriptscriptstyle\min}}
\newcommand*{\defenderWinMax}{\mathsf{Win}_\defenderSubscript^{\scriptscriptstyle\max}}
\newcommand*{\proc}{\mathcal{P}}
\newcommand*{\system}{\mathcal{S}}%
\newcommand*{\action}[1]{\mathit{#1}}
\newcommand*{\actions}{\Sigma}%
\newcommand*{\step}[1]{\mathrel{\smash{\xrightarrow{#1}}}}%
\newcommand*{\initials}{\mathcal{I}}
\newcommand{\hml}{\mathsf{HML}}
\newcommand{\hmlA}{\hml[\actions]}
\newcommand{\hmlObs}[1]{\langle#1\rangle}
\newcommand{\hmlAnd}[2]{{\bigwedge_{#1 \in #2}}}
\newcommand{\hmlAndS}{{\bigwedge}}
\newcommand{\hmlTrue}{\mathsf{T}}
\newcommand{\hmlNeg}{\neg}
\newcommand{\hmlSemantics}[3]{{\llbracket #1 \rrbracket}^{#2}_{#3}}
\newcommand{\hmlStrategies}{\mathsf{Strat}}
\newcommand{\expr}{\mathsf{expr}}
\let\obs\undefined
\newcommand*{\eqName}[1]{\mathrm{#1}}
\newcommand*{\obs}[1]{\mathcal{O}_{\mathrm{#1}}}
\newcommand*{\bEquiv}[1]{\sim_{\mathrm{#1}}}
\newcommand*{\bPreord}[1]{\preceq_{\mathrm{#1}}}
\newcommand{\refDef}[1]{Definition~\ref{#1}}
\newcommand{\refExample}[1]{Example~\ref{#1}}
\newcommand{\refThm}[1]{Theorem~\ref{#1}}
\newcommand{\refProp}[1]{Proposition~\ref{#1}}
\newcommand{\refLem}[1]{Lemma~\ref{#1}}
\newcommand{\refSec}[1]{Section~\ref{#1}}
\newcommand{\refSubsec}[1]{Subsection~\ref{#1}}
\newcommand{\refFig}[1]{Figure~\ref{#1}}
\newcommand{\refAlgo}[1]{Algorithm~\ref{#1}}
\newcommand{\refLine}[1]{Line~\ref{#1}}
\newcommand*{\ie}{\text{i.e.\,}}
\newcommand*{\cf}{\text{cf.\,}}
\colorlet{darkgreen}{green!60!black}
\colorlet{darkred}{red!50!black}
\newbox\xrat@below
\newbox\xrat@above
\newcommand{\xrightarrowtail}[2][]{%
  \setbox\xrat@below=\hbox{\ensuremath{\scriptstyle #1}}%
  \setbox\xrat@above=\hbox{\ensuremath{\scriptstyle #2}}%
  \pgfmathsetlengthmacro{\xrat@len}{max(\wd\xrat@below,\wd\xrat@above)+.6em}%
  \mathrel{\tikz [>->,baseline=-.58ex,line width=0.43pt]
                 \draw (0,0) -- node[below=-2pt] {\box\xrat@below}
                                node[above=-2pt] {\box\xrat@above}
                       (\xrat@len,0) ;}}
\newcommand*{\centerfloat}{%
  \parindent \z@
  \leftskip \z@ \@plus 1fil \@minus \textwidth
  \rightskip\leftskip
  \parfillskip \z@skip}
\begin{document}
\title{Process Equivalence Problems as Energy Games}
%
%
\author{Benjamin Bisping \orcidID{0000-0002-0637-0171}}
\institute{Technische Universit\"at Berlin, Berlin, Germany\\
\email{benjamin.bisping@tu-berlin.de}~\textbullet~\href{https://bbisping.de}{bbisping.de}}
%
%
\maketitle              
\begin{abstract}

  We characterize all common notions of behavioral equivalence by \emph{one} 6-dimensional energy game,
  where energies bound capabilities of an attacker trying to tell processes apart.
  The def\-en\-der-win\-ning initial credits exhaustively determine which preorders and equivalences from the (strong) linear-time--branching-time spectrum relate processes.

  \qquad
  The time complexity is exponential, which is optimal due to trace equivalence being covered.
  This complexity improves drastically on our previous approach for deciding groups of equivalences where exponential sets of distinguishing HML formulas are constructed on top of a su\-per-ex\-po\-nen\-tial reachability game.
  In experiments using the VLTS benchmarks, the algorithm performs on par with the best similarity algorithm.

  \keywords{
    Bisimulation \and
    Energy games \and
    Process equivalence spectrum}

\end{abstract}
\section{Introduction}

\begin{figure}[b!]
  \begin{adjustbox}{center, trim={10mm 0 20mm 5mm}}
    \begin{tikzpicture}[scale=.9, transform shape,->,auto,node distance=1.7cm, rel/.style={dashed,font=\it, blue},
        ext/.style={line width=1pt},
        internal/.style={draw=gray}]
      \node (P0) {$\ccsIdentifier{Pe}$};
      \node (P1)[below left of=P0] {$\cdot$};
      \node (P2)[below right of=P0] {$\cdot$};
      \node (P3)[below left of=P1] {$\cdot$};
      \node (P4)[below right of=P1] {$\cdot$};
      \node (P5)[below right of=P2] {$\cdot$};
      \node (P6)[below left of=P4] {$\cdot$};
      \node (P7)[below right of=P4] {$\cdot$};
      \node (P8)[below left of=P3] {$\color{BrickRed}\circ$};
      \node (P9)[below of=P3, node distance=1.2cm] {$\cdot$};
      \node (P10)[below of=P5, node distance=1.2cm] {$\cdot$};
      \node (P11)[below right of=P5] {$\color{BlueViolet}\circ$};
      \node (P12)[below left of=P6] {$\color{BrickRed}\circ$};
      \node (P13)[below right of=P7] {$\color{BlueViolet}\circ$};
      \node (P14)[below left of=P9] {$\color{BrickRed}\circ$};
      \node (P15)[below right of=P10] {$\color{BlueViolet}\circ$};

      \path
      (P0) edge [internal] node {} (P1)
      (P0) edge [internal] node {} (P2)
      (P1) edge [internal] node {} (P3)
      (P1) edge [internal] node {} (P4)
      (P2) edge [internal] node {} (P4)
      (P2) edge [internal] node {} (P5)
      (P3) edge node[swap, pos=.3] {$\action{ec_A}$} (P8)
      (P3) edge [internal] node {} (P9)
      (P4) edge [internal] node {} (P6)
      (P4) edge [internal] node {} (P7)
      (P5) edge [internal] node {} (P10)
      (P5) edge [pos=.3] node {$\action{ec_B}$} (P11)
      (P6) edge node[swap, pos=.3] {$\action{ec_A}$} (P12)
      (P7) edge [pos=.3] node {$\action{ec_B}$} (P13)
      (P8) edge [internal] node {} (P14)
      (P8) edge [bend left=70] node {$\action{lc_A}$} (P0)
      (P9) edge [swap,pos=.3] node {$\action{ec_A}$} (P14)
      (P9) edge [internal] node {} (P6)
      (P10) edge [pos=.3] node {$\action{ec_B}$} (P15)
      (P10) edge [internal] node {} (P7)
      (P11) edge [internal] node {} (P15)
      (P11) edge [bend right=70, swap] node {$\action{lc_B}$} (P0)
      (P12) edge [bend left=13, swap, pos=.3] node {$\action{lc_A}$} (P5)
      (P13) edge [bend right=13, pos=.3] node {$\action{lc_B}$} (P3)
      (P14) edge [internal] node {} (P12)
      (P14) edge [bend left=85] node {$\action{lc_A}$} (P2)
      (P15) edge [internal] node {} (P13)
      (P15) edge [bend right=100, swap] node {$\action{lc_B}$} (P1)
      ;

      \node (M0)[left of=P0, node distance=6.5cm] {$\ccsIdentifier{Mx}$};
      \node (M1)[below left of=M0, node distance=2.5cm] {$\color{BrickRed}\circ$};
      \node (M2)[below right of=M0, node distance=2.5cm] {$\color{BlueViolet}\circ$};

      \path
      (M0) edge [bend left=15] node {$\action{ec_A}$} (M1)
      (M0) edge [bend right=15, swap] node {$\action{ec_B}$} (M2)
      (M1) edge [bend left=15] node {$\action{lc_A}$} (M0)
      (M2) edge [bend right=15, swap] node {$\action{lc_B}$} (M0)
      ;
    \end{tikzpicture}
  \end{adjustbox}
  \caption{A specification of mutual exclusion $\ccsIdentifier{Mx}$, and Peterson's protocol $\ccsIdentifier{Pe}$.}
  \label{fig:ts-peterson}
\end{figure}

Many verification tasks can be understood along the lines of ``how equivalent'' two models are.
\refFig{fig:ts-peterson} replicates a standard example, known for instance from the textbook \emph{Reactive Systems}~\cite{ails2007reactiveSystems}:
A specification of mutual exclusion $\ccsIdentifier{Mx}$ as two alternating users $A$ and $B$ entering their critical section $\action{ec_A} / \action{ec_B}$ and leaving $\action{lc_A} / \action{lc_B}$ before the other may enter;
and the transition system of Peterson's~\cite{peterson1981mythsMutex} mutual exclusion algorithm $\ccsIdentifier{Pe}$, minimized by weak bisimilarity, with internal steps $\color{gray}\step{}$ due to the coordination that needs to happen.
For $\ccsIdentifier{Pe}$ to faithfully implement mutual exclusion, it should behave somewhat similarly to $\ccsIdentifier{Mx}$.

Semantics in concurrent models must take nondeterminism into account.
Setting the degree to which nondeterminism counts induces equivalence notions with subtle differences:
$\ccsIdentifier{Pe}$ and $\ccsIdentifier{Mx}$ \emph{weakly simulate} each other, meaning that a tree of options from one process can be matched by a similar tree of the other.
This implies that they have the same \emph{weak traces}, that is, matching paths.
However, they are not weakly \emph{bi-}similar, which would require a higher degree of symmetry than mutual simulation, namely, matching absence of options.
There are many more such notions.
Van Glabbeek's \emph{linear-time--branching-time spectrum}~\cite{glabbeek1990ltbt1} (\cf~\refFig{fig:ltbt-spectrum}) brings order to the hierarchy of equivalences.
But it is notoriously difficult to navigate.
In our example, one might wonder: Are there notions relating the two \emph{besides} mutual simulation?

Our recent algorithm for \emph{linear-time--branching-time spectroscopy} by Bisping, Nestmann, and Jansen~\cite{bisping2021ltbtsTacas,bjn2022decidingAllBehavioralEqs} is capable of answering equivalence questions for finite-state systems by \emph{deciding the spectrum of behavioral equivalences in one go}.
In theory, that is.
In practice, the algorithm of~\cite{bjn2022decidingAllBehavioralEqs} runs out of memory when applied to the weak transition relation of even small examples like $\ccsIdentifier{Pe}$.
The reason for this is that saturating transition systems with the closure of weak steps adds a lot of nondeterminism.
For instance, $\ccsIdentifier{Pe}$ may reach 10 different states by internal steps ($\color{gray}\step{}^*$).
The spectroscopy algorithm of~\cite{bjn2022decidingAllBehavioralEqs} builds a bisimulation game where the defender wins if the game starts at a pair of equivalent processes.
To allow all attacks relevant for the spectrum, the \cite{bjn2022decidingAllBehavioralEqs}-game must consider partitionings of state sets reached through nondeterminism.
There are 115,975 ways of partitioning 10~objects.
As a consequence, the game graph of~\cite{bjn2022decidingAllBehavioralEqs} comparing $\ccsIdentifier{Pe}$ and $\ccsIdentifier{Mx}$ has 266,973~game positions.
On top of each postion,~\cite{bjn2022decidingAllBehavioralEqs} builds sets of distinguishing formulas of Hennessy--Milner modal logic (HML)~\cite{hm1980hml,glabbeek1990ltbt1} with minimal expressiveness.
These sets may grow exponentially.
Game over!

\subsubsection{Contributions.}

In this paper, we adapt the spectroscopy approach of~\cite{bisping2021ltbtsTacas,bjn2022decidingAllBehavioralEqs} to render small verification instances like $\ccsIdentifier{Pe}/\ccsIdentifier{Mx}$ feasible.
The key ingredients that will make the difference are:
understanding the spectrum purely through \emph{depth-properties of HML formulas};
using \emph{multidimensional energy games}~\cite{fjls2011energyGamesMulti} instead of reachability games;
and exploiting the considered spectrum to drastically \emph{reduce the branching-degree} of the game as well as the height of the energy lattice.
\refFig{fig:overview} lays out the algorithm with pointers to key parts of this paper.

\begin{figure}[t]
  \centering
  \begin{tikzpicture}[->,auto,node distance=3.6cm,
    algstep/.style={minimum width=2.5cm, minimum height=1cm,rectangle,draw=black,align=center,rounded corners}]
    \node[algstep] (TS) {Transition\\system};
    \node[algstep, below of=TS, node distance=1.7cm] (SEG) {Spectroscopy\\energy game};
    \node[algstep, right of=SEG] (WB) {Attacker win-\\ning budgets};
    \node[algstep, above of=WB, node distance=1.7cm] (PE) {Set of preorders\\\& equivalences};
    \node[algstep, gray, right of=PE] (DF) {Distinguishing\\HML formulas};
    \path
      (TS) edge node {\ref{subsec:spectroscopy-game} / \ref{subsec:more-clever}} (SEG)
      (SEG) edge node {\ref{subsec:computation-winning}} (WB)
      (WB) edge[-, decorate, decoration=snake, gray] node[swap] {\ref{subsec:game-correctness}} (DF)
      (PE) edge[-, decorate, decoration=snake, gray] node {\ref{subsec:notions-equivalence}} (DF)
      (WB) edge node {\ref{subsec:obtaining-eqs}} (PE);
  \end{tikzpicture}
  \caption{Overview of the computations $\rightarrow$ and correspondences $\sim$ we will discuss.}
  \label{fig:overview}
\end{figure}

\begin{itemize}
  \item \refSubsec{subsec:notions-equivalence} explains how the \emph{linear-time--branching-time spectrum can be understood in terms of six dimensions of HML expressiveness}, and \refSubsec{subsec:energy-games} introduces a class of \emph{declining energy games} fit for our task.
  \item In \refSubsec{subsec:spectroscopy-game}, we describe the novel \emph{spectroscopy energy game}, and, in \refSubsec{subsec:game-correctness}, \emph{prove it to characterize all notions of equivalence definable by the six dimensions}.
  \item \refSubsec{subsec:more-clever} shows that a \emph{more clever game with only linear branching-factor} still covers the spectrum.
  \item \refSubsec{subsec:computation-winning} provides an algorithm to \emph{compute winning initial energy levels for declining energy games with $\updMin{\dots}$}, which enables \emph{decision of the whole considered spectrum in $\smash{2^{\bigo(\relSize{\proc})}}$} for systems with $\relSize{\proc}$ processes (\refSubsec{subsec:complexity}).
  \item In \refSubsec{subsec:obtaining-eqs}, we add fine print on \emph{how to obtain equivalences and distinguishing formulas} in the algorithm.
  \item \refSec{sec:implementation} compares to \cite{bjn2022decidingAllBehavioralEqs} and \cite{rt2010efficientSimulation} through \emph{experiments with the widely used VLTS benchmark suite}~\cite{garavel2017vlts}.
  The experiments also reveal insights about the suite itself.
\end{itemize}

\section{Distinctions and Equivalences in Transition Systems}
\label{sec:background}

Two classic concepts of system analysis form the background of this paper:
\emph{Hen\-ne\-ssy--Milner logic} (HML) interpreted over \emph{transition systems} goes back to Hennessy and Milner~\cite{hm1980hml} investigating observational equivalence in operational semantics. Van Glabbeek's \emph{linear-time--branching-time spectrum}~\cite{glabbeek1990ltbt1} arranges all common notions of equivalence as a hierarchy of HML sublanguages.

\subsection{Transition Systems and Hennessy--Milner Logic}
\begin{definition}[Labeled transition system]
  \label{def:transition-system}
  A \emph{labeled transition system} is a tuple $\system=(\proc,\actions,\step{})$ where $\proc$ is the set of \emph{processes}, $\actions$ is the set of \emph{actions}, and ${\step{}}\subseteq \proc\times\actions\times \proc$ is the \emph{transition relation}.

  By $\initials(p)$ we denote the \emph{actions enabled initially} for a process $p \in \proc$, that is, $\initials(p) \mathrel{\defEq} \{a \in \actions \mid \exists p' \ldotp p \step{a} p'\}$.
  We lift the steps to sets with $P \step{a} P'$ iff $P' = \{p' \mid \exists p \in P \ldotp p \step{a} p' \}$.
\end{definition}

\noindent
Hennessy--Milner logic expresses \emph{observations} that one may make on such a system.
The set of formulas true of a process offers a denotation for its semantics.

\definecolor{obsColor}{RGB}{218, 0, 99}
\definecolor{conjColor}{RGB}{242, 75, 42}
\definecolor{mainPosColor}{RGB}{200, 180, 0}
\definecolor{otherPosColor}{RGB}{143, 209, 79}
\definecolor{negObsColor}{RGB}{18, 205, 212}
\definecolor{negsColor}{RGB}{101, 44, 179}

\begin{figure}[t]
  \centering
  \begin{tikzpicture}[auto,node distance=2.02cm,align=center]

    \node (B){bisimulation $\eqName{B}$\\$(\textcolor{obsColor}{\infty}, \textcolor{conjColor}{\infty}, \textcolor{mainPosColor}{\infty}, \textcolor{otherPosColor}{\infty}, \textcolor{negObsColor}{\infty}, \textcolor{negsColor}{\infty})$};
    \node (S2)[below of=B, node distance=1.2cm]{$2$-nested simulation $\eqName{2S}$\\$(\textcolor{obsColor}{\infty}, \textcolor{conjColor}{\infty}, \textcolor{mainPosColor}{\infty}, \textcolor{otherPosColor}{\infty}, \textcolor{negObsColor}{\infty}, \textcolor{negsColor}{1})$};
    \node (RS)[below left of=S2]{ready simulation $\eqName{RS}$~\\$(\textcolor{obsColor}{\infty}, \textcolor{conjColor}{\infty}, \textcolor{mainPosColor}{\infty}, \textcolor{otherPosColor}{\infty}, \textcolor{negObsColor}{1}, \textcolor{negsColor}{1})$};
    \node (RT)[below right of=RS]{readiness traces $\eqName{RT}$\\$(\textcolor{obsColor}{\infty}, \textcolor{conjColor}{\infty}, \textcolor{mainPosColor}{\infty}, \textcolor{otherPosColor}{1}, \textcolor{negObsColor}{1}, \textcolor{negsColor}{1})$};
    \node (FT)[below left of=RT]{failure traces $\eqName{FT}$\\$(\textcolor{obsColor}{\infty}, \textcolor{conjColor}{\infty}, \textcolor{mainPosColor}{\infty}, 0, \textcolor{negObsColor}{1}, \textcolor{negsColor}{1})$};
    \node (R)[below right of=RT]{readiness $\eqName{R}$\\$(\textcolor{obsColor}{\infty}, \textcolor{conjColor}{2}, \textcolor{mainPosColor}{1}, \textcolor{otherPosColor}{1}, \textcolor{negObsColor}{1}, \textcolor{negsColor}{1})$};
    \node (PF)[above right of=R]{~possible futures $\eqName{PF}$\\$(\textcolor{obsColor}{\infty}, \textcolor{conjColor}{2}, \textcolor{mainPosColor}{\infty}, \textcolor{otherPosColor}{\infty}, \textcolor{negObsColor}{\infty}, \textcolor{negsColor}{1})$};
    \node (IF)[below right of=R]{impossible futures $\eqName{IF}$\\$(\textcolor{obsColor}{\infty}, \textcolor{conjColor}{2}, 0, 0, \textcolor{negObsColor}{\infty}, \textcolor{negsColor}{1})$};
    \node (S)[left of=FT, node distance=3cm]{simulation $\eqName{1S}$\\$(\textcolor{obsColor}{\infty}, \textcolor{conjColor}{\infty}, \textcolor{mainPosColor}{\infty}, \textcolor{otherPosColor}{\infty}, 0, 0)$};
    \node (RV)[below left of=R]{revivals $\eqName{RV}$\\$(\textcolor{obsColor}{\infty}, \textcolor{conjColor}{2}, \textcolor{mainPosColor}{1}, 0, \textcolor{negObsColor}{1}, \textcolor{negsColor}{1})$};
    \node (F)[below right of=RV]{failures $\eqName{F}$\\$(\textcolor{obsColor}{\infty}, \textcolor{conjColor}{2}, 0, 0, \textcolor{negObsColor}{1}, \textcolor{negsColor}{1})$};
    \node (T)[below left of=F]{traces $\eqName{T}$\\$(\textcolor{obsColor}{\infty}, \textcolor{conjColor}{1},0,0,0,0)$};
    \node (E)[below of=T, node distance=1.2cm]{enabledness $\eqName{E}$\\$(\textcolor{obsColor}{1}, \textcolor{conjColor}{1},0,0,0,0)$};

    \path
    (B) edge (S2)
    (S2) edge (RS)
    (S2) edge (PF)
    (RS) edge (S)
    (RS) edge (RT)
    (RT) edge (R)
    (RT) edge (FT)
    (PF) edge (R)
    (PF) edge (IF)
    (S) edge (T)
    (FT) edge (RV)
    (R) edge (RV)
    (IF) edge (F)
    (RV) edge (F)
    (F) edge (T)
    (T) edge (E)
    ;
  \end{tikzpicture}
  \caption{Hierarchy of equivalences/preorders becoming finer towards the top.\label{fig:ltbt-spectrum}
  }
\end{figure}

\begin{definition}[Hennessy--Milner logic]
  \label{def:hml}
  The \emph{syntax of Hennessy--Milner logic} over a set $\actions$ of actions, $\hmlA$, is defined by the grammar:
  \begin{align*}
    \varphi {} ::= {} & \hmlObs{a}\varphi & \text{with } a \in \actions \\
        | \quad & \hmlAndS \{\psi, \psi, ...\} \\
    \psi {} ::= {}    & \hmlNeg \varphi \mid \varphi.
  \end{align*}
  Its semantics $\;\smash{\hmlSemantics{\;\cdot\;}{\system}{}}$ over a transition system $\system=(\proc,\actions,\step{})$ is given as the set of processes where a formula ``is true'' by:
  \begin{align*}
    \hmlSemantics{\hmlObs{a}\varphi}{\system}{} \defEq &\;
      \{p \in \proc \mid
        \exists p' \in \hmlSemantics{\varphi}{\system}{} \ldotp p \step{a} p'\}
    \\[0.25\baselineskip]
    \smash{\hmlSemantics{\hmlAnd{i}{I}\psi_i}{\system}{}} \defEq &\;
      \bigcap \{ \hmlSemantics{\psi_i}{\system}{} \mid
        i \in I \land
        \nexists \varphi \ldotp \psi_i = \hmlNeg \varphi \} \\
      \setminus &\;
      \bigcup \{ \hmlSemantics{\varphi}{\system}{} \mid
        \exists i \in I \ldotp
        \psi_i = \hmlNeg \varphi \}.
  \end{align*}
\end{definition}
HML basically extends propositional logic with a modal observation operation.
Conjunctions then bound trees of future behavior.
Positive conjuncts mean lower bounds, negative ones impose upper bounds.
For the scope of this paper, finite bounds suffice, \ie, conjunctions are finite-width.
The empty conjunction $\hmlTrue \defEq \hmlAndS \varnothing$ is usually omitted in writing.

We use Hennessy--Milner logic to capture \emph{differences} between program behaviors.
Depending on how much of its expressiveness we use, different notions of equivalence are characterized.

\begin{definition}[Distinguishing formulas and preordering languages]
  \label{def:distinguishing-formula}
  A formula $\varphi \in \hmlA$ is said to \emph{distinguish} two processes $p,q \in \proc$ iff $p \in \hmlSemantics{\varphi}{\system}{}$ and $q \notin \smash{\hmlSemantics{\varphi}{\system}{}}$\!.
  A sublanguage of Hennessy--Milner logic, $\obs{\mathit X} \subseteq \hmlA$, either distinguishes two processes, $p \not\bPreord{\mathit X} q$, if it contains a distinguishing formula, or preorders them otherwise.
  If processes are preordered in both directions, $p \bPreord{\mathit X} q$ and $q \bPreord{\mathit X} p$, then they are considered $X\!$-equivalent, $p \bEquiv{\mathit X} q$.
\end{definition}

\noindent
\refFig{fig:ltbt-spectrum} charts the \emph{linear-time--branching-time spectrum}.
If processes are preordered/equated by one notion of equivalence, they also are preordered/equated by every notion below.
We will later formally characterize the notions through \refProp{prop:language-prices}.
For a thorough presentation, we point to~\cite{glabbeek2001ltbtsiReport}.
For those familiar with the spectrum, the following example serves to refresh memories.

\begin{figure}[t]
  \center
  \begin{tikzpicture}[transform shape,->,auto,node distance=2cm, rel/.style={dashed,font=\it, blue},
      ext/.style={line width=1pt},
      internal/.style={draw=gray}]
    \node (S0)[minimum size=.7cm] {$\ccsIdentifier{S}$};
    \node (Div)[minimum size=.7cm, below right of=S0] {$\ccsIdentifier{Div}$};
    \node (S1)[minimum size=.7cm, above right of=Div] {$\ccsIdentifier{S'}$};

    \path
    (S0) edge [internal, loop left] node {$\color{gray}\tau$} (S0)
    (S0) edge [bend right=10, internal, swap] node {$\color{gray}\tau$} (Div)
    (S0) edge [bend left=10] node {$\action{ec_A}$} (Div)
    (S1) edge [internal, loop right] node {$\color{gray}\tau$} (S1)
    (S1) edge [bend left=10] node {$\action{ec_A}$} (Div)
    (Div) edge [internal, loop right] node {$\color{gray}\tau$} (Div)
    ;
  \end{tikzpicture}
  \caption{Example system of internal decision $\color{gray}\step{\tau}$ against an action $\step{\action{ec_A}}$.}
  \label{fig:ts-example}
\end{figure}

\begin{example}
  \label{exa:distinguishing-formulas}
  \refFig{fig:ts-example} shows a tiny slice of the weak-step-saturated version of our initial example from \refFig{fig:ts-peterson} that is at the heart of why $\ccsIdentifier{Pe}$ and $\ccsIdentifier{Mx}$ are not bisimula\-tion-equivalent.
  The difference between $\ccsIdentifier{S}$ and $\ccsIdentifier{S'}$ is that $\ccsIdentifier{S}$ can internally transition to $\ccsIdentifier{Div}$ (labeled $\color{gray}\step{\tau}$) without ever performing an $\action{ec_A}$ action.
  We can express this difference by the formula
  $\varphi_\ccsIdentifier{S} \defEq \hmlObs{\tau}\hmlAndS\{\hmlNeg\hmlObs{ec_A}\}$,
  meaning ``after $\tau$, $\action{ec_A}$ may be impossible.''
  It is true for $\ccsIdentifier{S}$, but not for $\ccsIdentifier{S'}$.
  Knowing a distinguishing formula means that $\ccsIdentifier{S}$ and $\ccsIdentifier{S'}$ cannot be bisimilar by the Hennessy--Milner theorem.
  The formula $\varphi_\ccsIdentifier{S}$ is called a \emph{failure} (or \emph{refusal}) as it specifies a set of actions that are disabled after a trace.
  In the other direction of comparison, the negation
  $\varphi_\ccsIdentifier{S'} \defEq \hmlAndS\{ \hmlNeg \hmlObs{\tau}\hmlAndS\{ \hmlNeg \hmlObs{ec_A} \} \}$
  distinguishes $\ccsIdentifier{S'}$ from $\ccsIdentifier{S}$.
  The differences between the two processes cannot be expressed in HML without negation.
  Therefore the processes are simulation-equivalent, or \emph{similar}, as similarity is characterized by the positive fragment of HML.
\end{example}

\subsection{Price Spectra of Behavioral Equivalences}
\label{subsec:notions-equivalence}

For algorithms exploring the linear-time--branching-time spectrum, it is convenient to have a representation of the spectrum in terms of numbers or ``prices'' of formulas as in~\cite{bjn2022decidingAllBehavioralEqs}.
We, here, use six dimensions to characterize the notions of equivalence depicted in \refFig{fig:ltbt-spectrum}.
The numbers define the HML observation languages that characterize the very preorders/equivalences. Intuitively, the colorful numbers mean: (\textcolor{obsColor}{1})~Formula modal depth of observations. (\textcolor{conjColor}{2})~Formula nesting depth of conjunctions. (\textcolor{mainPosColor}{3})~Maximal modal depth of deepest positive clauses in conjunctions. (\textcolor{otherPosColor}{4})~Maximal modal depth of the other positive clauses in conjunctions. (\textcolor{negObsColor}{5})~Maximal modal depth of negative clauses in conjunctions. (\textcolor{negsColor}{6})~Formula nesting depth of negations. More formally:

\begin{definition}[Energies]
  \label{def:energies}
  We denote as \emph{energies}, $\energies$, the set of $N$-dimensional vectors $(\nats)^N$\!, and as \emph{extended energies}, $\energies_\infty$, the set $(\nats\cup\{\infty\})^N$.

  We compare energies component-wise, \ie, $\vectorComponents[N]{e} \leq \vectorComponents[N]{f}$ iff $e_i \leq f_i$ for each $i$.
  Least upper bounds $\sup$ are defined as usual as component-wise supremum,
  as are greatest lower bounds $\inf$.
\end{definition}


\begin{figure}[t]
  \centering
  \begin{tikzpicture}
    \node(Obs1) {$\hmlObs\tau$};
    \node(Conj0)[right=2cm of Obs1.center] {$\land$};
    \node(ObsEC1)[above right=1cm and 3cm of Conj0.center] {$\hmlObs{ec_A}$};
    \node(ObsLC1)[right=2cm of ObsEC1.center] {$\hmlObs{lc_A}$};
    \node(And1)[right=2cm of ObsLC1.center] {$\land$};
    \node(ObsEC2)[right=3cm of Conj0.center] {$\hmlObs{\tau}$};
    \node(And2)[right=2cm of ObsEC2.center] {$\land$};
    \node(Neg)[below right=1cm and 3cm of Conj0.center] {$\neg$};
    \node(NegObsEC)[right=2cm of Neg.center] {$\hmlObs{ec_B}$};
    \node(And3)[right=2cm of NegObsEC.center] {$\land$};

    \node(E1) [above left=10mm and 1mm of Obs1.center]{$e_{\textcolor{obsColor}{1}} = 3$};
    \node(E2) [above left=6mm and 1mm of Obs1.center]{$e_{\textcolor{conjColor}{2}} = 2$};
    \node(E3) [above right=3mm and 0mm of Conj0.center]{$e_{\textcolor{mainPosColor}{3}} = 2$};
    \node(E4) [above right=3mm and 20mm of Conj0.center]{$e_{\textcolor{otherPosColor}{4}} = 1$};
    \node(E5) [below right=2mm and 20mm of Conj0.center]{$e_{\textcolor{negObsColor}{5}} = 1$};
    \node(E6) [below left=2mm and 1mm of Obs1.center]{$e_{\textcolor{negsColor}{6}} = 1$};

    \draw[draw=obsColor] ([yshift=12mm]Obs1.center) circle[fill, radius=1pt] to ([yshift=12mm]Conj0.center) to[out=0,in=180] ([yshift=12mm]ObsEC1.center) circle[fill, radius=1pt] to ([yshift=12mm]ObsLC1.center) circle[fill, radius=1pt] to ([yshift=12mm]And1.center);
    \draw[draw=conjColor] ([yshift=8mm]Obs1.center) to ([yshift=8mm]Conj0.center) circle[fill, radius=1pt] to[out=0,in=180] ([yshift=8mm]ObsEC1.center) to ([yshift=8mm]ObsLC1.center) to ([yshift=8mm]And1.center) circle[fill, radius=1pt];
    \draw[draw=mainPosColor] ([yshift=2mm]Conj0) to[out=0,in=180] ([yshift=4mm]ObsEC1.center) circle[fill, radius=1pt] to ([yshift=4mm]ObsLC1.center) circle[fill, radius=1pt] to ([yshift=4mm]And1.center);
    \draw[draw=otherPosColor] ([yshift=2mm]Conj0) to[out=0,in=180] ([yshift=4mm]ObsEC2.center) circle[fill, radius=1pt] to ([yshift=4mm]And2.center);
    \draw[draw=negObsColor] ([yshift=-2mm]Conj0) to[out=0,in=180] ([yshift=4mm]Neg.center) to  ([yshift=4mm]NegObsEC.center) circle[fill, radius=1pt] to ([yshift=4mm]And3.center);
    \draw[draw=negsColor] ([yshift=-4mm]Obs1.center) to ([yshift=-4mm]Conj0.center) to[out=0,in=180] ([yshift=-4mm]Neg.center) circle[fill, radius=1pt] to ([yshift=-4mm]NegObsEC.center) to ([yshift=-4mm]And3.center);

    \draw[thick] (Obs1) to (Conj0) to[out=0,in=180] (ObsEC1) to (ObsLC1) to (And1);
    \draw[thick] (Conj0) to (ObsEC2) to (And2);
    \draw[thick] (Conj0) to[out=0,in=180] (Neg) to (NegObsEC) to (And3);
  \end{tikzpicture}
  \caption{Pricing $e$ of formula $\hmlObs{\tau}\hmlAndS\{\hmlObs{ec_A}\hmlObs{lc_A}\hmlTrue, \hmlObs{\tau}\hmlTrue, \hmlNeg\hmlObs{ec_B}\hmlTrue\}$.\label{fig:formula-price}}
\end{figure}

\begin{definition}[Formula prices]
  \label{def:formula-prices}
  The \emph{expressiveness price} $\expr \colon \hmlA \rightarrow (\nats)^6$ of a formula interpreted as $6$-dimensional energies is defined recursively by:
  \begin{align*}
    \expr(\hmlObs{a}\varphi) & \defEq
    \begin{pmatrix}
      \textcolor{obsColor}{1\,+\,} \expr_1(\varphi)\\
      \expr_2(\varphi)\\
      \expr_3(\varphi)\\
      \expr_4(\varphi)\\
      \expr_5(\varphi)\\
      \expr_6(\varphi)
    \end{pmatrix}
    \qquad
    \expr(\hmlNeg\varphi) \defEq
    \begin{pmatrix}
      \expr_1(\varphi)\\
      \expr_2(\varphi)\\
      \expr_3(\varphi)\\
      \expr_4(\varphi)\\
      \expr_5(\varphi)\\
      \textcolor{negsColor}{1\,+\,} \expr_6(\varphi)
    \end{pmatrix}
    \\
    \textstyle\expr(\bigwedge\limits_{i \in I}\psi_i) & \defEq
    \sup\Biggl(\big\{\!
    \begin{pmatrix}
      0\\
      \textcolor{conjColor}{1\,+\,} \sup_{i \in I} \expr_2(\psi_i)\\
      \textcolor{mainPosColor}{\sup_{i \in \mathit{Pos}}}\, \expr_1(\psi_i)\\
      \textcolor{otherPosColor}{\sup_{i \in \mathit{Pos} \setminus \mathit{R}}}\, \expr_1(\psi_i)\\
      \textcolor{negObsColor}{\sup_{i \in \mathit{Neg}}}\, \expr_1(\psi_i)\\
      0
    \end{pmatrix} \!\big\} \cup \{ \expr(\psi_i) \mid i \in I \}
    \Biggr)
    \\
    &
    \begin{array}{rl}
      \mathit{Neg} & \defEq \{i \in I \mid \exists \varphi'_i \ldotp \psi_i = \hmlNeg \varphi'_i \}\\
      \mathit{Pos} & \defEq I \setminus \mathit{Neg}\\
      \mathit{R} & \defEq
      \begin{cases}
        \varnothing & \text{if } \mathit{Pos} = \varnothing\\
        \{ r \} & \text{for some $r \in \mathit{Pos}$ where $\expr_1(\psi_r)$ maximal for $\mathit{Pos}$.}
      \end{cases}
    \end{array}
  \end{align*}
\end{definition}

\noindent
\refFig{fig:formula-price} gives an example how the prices compound.
The colors of the lines match those used for the dimensions and their updates in the other figures.
Circles mark the points that are counted.
The formula itself expresses a so-called \emph{ready-trace} observation:
We observe a trace $\tau \cdot \mathit{ec_A} \cdot \mathit{lc_A}$ and, along the way, may check what other options would have been enabled or disabled.
Here, we check that $\tau$ is enabled and $\mathit{ec_B}$ is disabled after the first $\tau$-step.
With the pricing, we can characterize all standard notions of equivalence:

\begin{proposition}
  \label{prop:language-prices}
  On finite systems,
  the languages of formulas with prices below the coordinates given in \refFig{fig:ltbt-spectrum} characterize the named notions of equivalence,
  that is, $p \bPreord{\mathit{X}} q$ with respect to equivalence $X$, iff no $\varphi$ with $\expr(\varphi) \leq e_X$ distinguishes $p$ from $q$.
\end{proposition}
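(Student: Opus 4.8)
The plan is to reduce the claim to a per-notion correspondence between the price-bounded languages and the classical modal characterizations of the spectrum. For each notion $X$ in \refFig{fig:ltbt-spectrum}, van Glabbeek's account (see \cite{glabbeek2001ltbtsiReport}) supplies an HML sublanguage $\obs{X} \subseteq \hmlA$ whose distinguishing power defines $\bPreord{X}$: by \refDef{def:distinguishing-formula}, $p \bPreord{X} q$ holds iff no formula of $\obs{X}$ is true of $p$ but false of $q$. Since a preorder induced by a sublanguage depends only on which process pairs it can distinguish, it suffices to show that, for every $X$, the two languages $\obs{X}$ and $\{\varphi \mid \expr(\varphi) \le e_X\}$ have the same distinguishing power on finite systems.

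First I would establish that the price bound exactly carves out the grammar of $\obs{X}$ up to semantic equivalence. Concretely, I would prove by structural induction on $\varphi$ that $\expr(\varphi) \le e_X$ implies $\varphi$ lies in $\obs{X}$, after a normalization that discards conjuncts equivalent to $\hmlTrue$ and flattens nested conjunctions; both steps preserve semantics and only lower the price. The induction tracks the three clauses of \refDef{def:formula-prices}: each observation $\hmlObs{a}$ increments coordinate $1$ and thus consumes modal-depth budget; each negation increments coordinate $6$; and each conjunction increments coordinate $2$ while coordinates $3,4,5$ record the modal depths reachable below it. Reading the finite entries of $e_X$ off \refFig{fig:ltbt-spectrum} then pins down the admissible formula shapes: for instance $e_{\eqName{T}} = (\infty,1,0,0,0,0)$ forces every conjunction to range over depth-$0$ clauses, leaving exactly the linear observation chains characterizing traces, whereas $e_{\eqName{RS}}=(\infty,\infty,\infty,\infty,1,1)$ permits arbitrary positive recursion but caps negation depth and negated-observation depth at $1$, which is precisely the ready-simulation fragment.

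Conversely, I would show that every formula of $\obs{X}$ rewrites to a semantically equivalent formula of price $\le e_X$. Because the classical fragments are themselves defined by these same depth restrictions, this direction reduces to checking that $\expr$ is monotone under the observation, conjunction, and negation constructors, so that a formula built within the depth bounds of $\obs{X}$ never exceeds $e_X$. Combining the two inclusions yields equal distinguishing power, and the classical characterizations then give the stated biconditional for $\bPreord{X}$, hence also for $\bEquiv{X}$.

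The main obstacle is the interplay of coordinates $3$, $4$, and $5$ together with the exempted clause $\mathit{R}$ in \refDef{def:formula-prices}. This is where the spectrum is genuinely subtle: allowing a single maximal positive conjunct to recurse into full modal depth (coordinate $3$) while all other positive conjuncts stay depth-bounded (coordinate $4$) is exactly what separates the simulation-like notions from the decorated-trace notions — ready simulation from readiness traces, or revivals from failures. I would therefore devote the bulk of the argument to verifying that the $\sup_{i \in \mathit{Pos}}$ versus $\sup_{i \in \mathit{Pos}\setminus\mathit{R}}$ split faithfully models ``keep simulating along one branch while only snapshotting ready or failure sets along the others'', checking each adjacent pair in \refFig{fig:ltbt-spectrum} to confirm that the coordinate thresholds land exactly on the boundaries between the notions.
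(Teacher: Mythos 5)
First, a framing remark: the paper never proves \refProp{prop:language-prices}. It is asserted with a pointer to van Glabbeek's report, and the appendix merely notes that it \emph{would} give an alternative route to \refLem{lem:bisimulation-game} on finite-state systems. So there is no in-paper proof to compare against; judged on its own, your plan has the right skeleton --- per-notion equality of distinguishing power between the classical fragment $\obs{X}$ and the price-bounded language, with the crux correctly located in the coordinate-$3$/$4$ split and the exempted clause $\mathit{R}$ --- but it has two genuine gaps.

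The first gap: your argument never uses the hypothesis ``on finite systems,'' and it cannot be dropped. Van Glabbeek's characteristic languages are infinitary --- simulation- and bisimulation-like observations are closed under conjunctions over \emph{arbitrary} index sets, and refusal/ready sets may be infinite subsets of $\actions$ --- whereas the HML of \refDef{def:hml} is explicitly finite-width, and $\expr$ is defined only on it. Hence your second inclusion cannot ``reduce to checking that $\expr$ is monotone under the constructors'': the formulas of $\obs{X}$ witnessing a distinction may have no price at all. The missing lemma is that on a finite system every infinitary formula of $\obs{X}$ can be replaced by a finite-width one within the same depth bounds and with the same distinctions (e.g., because denotations live in the finite lattice $\powerSet{\proc}$, every infinite conjunction has a finite subconjunction with the same intersection, and only the finitely many actions occurring in ${\step{}}$ matter for refusals). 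Already for $X = \eqName{B}$, where $e_B$ bounds nothing, the proposition \emph{is} the Hennessy--Milner theorem for finitary HML, which fails without image-finiteness; a proof that never invokes finiteness proves too much.

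The second gap is in your first inclusion. You propose to rewrite any $\varphi$ with $\expr(\varphi) \leq e_X$ into a \emph{semantically equivalent} member of $\obs{X}$, but such a member need not exist. Take $X = \eqName{RS}$ and $\varphi = \hmlAndS\{\hmlNeg\hmlAndS\{\hmlObs{a}\hmlTrue, \hmlObs{b}\hmlTrue\}\}$: its price $(1,3,1,1,1,1)$ is below $e_{\eqName{RS}} = (\infty,\infty,\infty,\infty,1,1)$, since the negated clause has modal depth $1$ and negation nesting is $1$, yet semantically $\varphi$ is the \emph{disjunction} ``$a$ disabled or $b$ disabled,'' whose denotation is in general a proper union that the disjunction-free classical ready-simulation fragment cannot express. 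Your normalization (dropping trivial conjuncts, flattening nested conjunctions) does not touch such formulas. What is true, and what you actually need, is the weaker statement about distinguishing power: if a price-bounded $\varphi$ distinguishes $p$ from $q$, then some formula of $\obs{X}$ does --- here, since $q$ satisfies the inner conjunction and $p$ does not, some conjunct $\hmlObs{c}\hmlTrue$ with $c \in \{a,b\}$ holds for $q$ and fails for $p$, so $\hmlAndS\{\hmlNeg\hmlObs{c}\hmlTrue\} \in \obs{RS}$ already separates them. The whole argument should therefore be organized around preservation of distinctions rather than semantic equivalence of languages; with that reorganization and the finiteness lemma above, your coordinate-by-coordinate case analysis is the right plan.
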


\begin{figure}[t]
  \centering
  \vspace*{-5mm}
  \begin{tikzpicture}[]
    \begin{scope}[rotate=45, scale=1]
      \draw[help lines] (0,0) grid (4.5,4.5);
      \draw[thick, opacity=0.5, text opacity=.8] (-.25,2) -- (5,2) node[pos=0,left] {conjunctions = \textcolor{conjColor}{2}};
      \draw[thick, opacity=0.5, text opacity=.8] (1,-.25) -- (1,5) node[pos=0,right] {neg.\ clause depth = \textcolor{negObsColor}{1}};
      \draw[draw=none, fill=pink, fill opacity=0.1, text opacity=1]
        (-.05,-.05) rectangle (5,5) node[pos=.74, align=center] {bisimulation};
      \draw[draw=negObsColor, fill=negObsColor, opacity=0.7, fill opacity=0.2, text opacity=1]
        (-.05,-.05) rectangle (5,2.05) node[pos=.73, align=center] {(im)possible\\future};
      \draw[draw=orange, fill=orange, opacity=0.4, fill opacity=0.2, text opacity=1]
          (-.05,-.05) rectangle (1.05,5) node[pos=.73, align=center] {ready\\simulation};
      \draw[draw=green, fill=green, opacity=0.7, fill opacity=0.2, text opacity=1]
        (-.05,-.05) rectangle (1.05,2.05) node[below, align=center] {failure\\/ready};
      \draw[draw=yellow, fill=yellow, opacity=0.7, fill opacity=0.3, text opacity=1]
        (-.05,-.05) rectangle (.05,5) node[pos=.65] {simulation};
      \draw[draw=obsColor, fill=obsColor, opacity=0.7, fill opacity=0.2, text opacity=1]
        (-.05,-.05) rectangle (.05,1) node[below, align=center] {enabledness\\/trace};
      \draw[draw=red, fill=none, opacity=0.5, text opacity=1, line width=2pt]
        (1.95,3.5) -- (1.95, 2.95) node[above, align=center] {$\varphi_\ccsIdentifier{S'}$} -- (2.5,2.95);
      \draw[draw=red, fill=none, opacity=0.5, text opacity=1, line width=2pt]
        (0.95,2.5) -- (0.95, 1.95) node[above, align=center] {$\varphi_\ccsIdentifier{S}$} -- (1.5,1.95);
    \end{scope}
    \begin{scope}[transparency group]
      \fill[transform canvas={rotate=45, scale=1},path fading=west, color=white]
      (4,-.25) rectangle (5.05,5.05);
      \fill[transform canvas={rotate=45, scale=1},path fading=south, color=white]
      (-.25,4) rectangle (5.01,5.01);
    \end{scope}
  \end{tikzpicture}
  \vspace*{-5mm}
  \caption{Cut through the price lattice with dimensions 2 (conjunction nesting) and 5~(negated observation depth).}
  \label{fig:price-lattice}
\end{figure}

\begin{example}
  \label{exa:distinction-prices}
  The formulas of \refExample{exa:distinguishing-formulas} have prices:
  $\expr(\hmlObs{\tau}\hmlAndS\{\hmlNeg\hmlObs{ec_A}\}) = (2,2,0,0,1,1)$ for $\varphi_\ccsIdentifier{S}$
  and
  $\expr(\hmlAndS\{ \hmlNeg \hmlObs{\tau}\hmlAndS\{ \hmlNeg \hmlObs{ec_A} \} \}) = (2,3,0,0,2,2)$ for $\varphi_\ccsIdentifier{S'}$.
  The prices of the two are depicted as red marks in \refFig{fig:price-lattice}.
  This also visualizes how $\varphi_\ccsIdentifier{S'}$ is a counterexample for bisimilarity
  and how $\varphi_\ccsIdentifier{S}$ is a counterexample for failure and finer preorders.
  Indeed the two preorders are coarsest ways of telling the processes apart.
  So, $\ccsIdentifier{S}$ and $\ccsIdentifier{S'}$ are equated by all preorders \emph{below} the marks, \ie similarity, $\ccsIdentifier{S} \bEquiv{1S} \ccsIdentifier{S'}$, and coarser preorders ($\ccsIdentifier{S} \bEquiv{T} \ccsIdentifier{S'}$, $\ccsIdentifier{S} \bEquiv{E} \ccsIdentifier{S'}$).
  This carries over to the initial example of Peterson's mutex protocol from \refFig{fig:ts-peterson}, where weak simulation, $\ccsIdentifier{Pe} \bEquiv{1WS} \ccsIdentifier{Mx}$, is the most precise equivalence.
  Practically, this means that the specification $\ccsIdentifier{Mx}$ has liveness properties not upheld by the implementation $\ccsIdentifier{Px}$.
\end{example}

\begin{remark}
  \refDef{def:formula-prices} deviates from our previous formula pricing of~\cite{bjn2022decidingAllBehavioralEqs} in a crucial way:
  We only collect the \emph{maximal depths of positive clauses}, whereas~\cite{bjn2022decidingAllBehavioralEqs} tracks \emph{numbers of deep and flat positive clauses} (where a flat clause is characterized by an observation depth of $1$).
  Our change to a purely ``depth-guided'' spectrum will allow us to characterize the spectrum by an energy game and to eliminate the Bell-numbered blow up from the game's branching-degree.
  The special treatment of the deepest positive branch is necessary to address revival, failure trace, and ready trace semantics, which are popular in the CSP community~\cite{roscoe2009revivalsHierarchy,fhrr2004stuckfreeConformance}.
\end{remark}

\section{An Energy Game of Distinguishing Capabilities}
\label{sec:priced-game}

Conventional equivalence problems ask whether a pair of processes is related by a specific equivalence.
These problems can be abstracted into a more general ``spectroscopy problem'' to determine the set of equivalences from a spectrum that relate two processes as in~\cite{bjn2022decidingAllBehavioralEqs}.
This section captures the spectrum of \refFig{fig:ltbt-spectrum} by one rather simple energy game.

\subsection{Energy Games}
\label{subsec:energy-games}

Multidimensional energy games are played on graphs labeled by vectors to be added to (or subtracted from) a vector of ``energies'' where one player must pay attention to the energies not being exhausted.
We plan to encode the distinction capabilities of the semantic spectrum as energy levels in an energy game enriched by $\updMin{\dots}$-operations that takes minima of components.
This way, energy levels where the defender has a winning strategy will correspond to equivalences that hold.
We will just need updates decrementing or maintaining energy levels.

\begin{definition}[Energy updates]
  \label{def:updates}
  The set of \emph{energy updates}, $\energyUpdates$, contains vectors $\vectorComponents[N]{u} \in \energyUpdates$ where each component is of the form
  \begin{itemize}
    \item $u_k \in \{-1, 0\}$, or
    \item $u_k = \mathtt{min}_D$ where $D \subseteq \{1, \ldots, N\}$ and $k \in D$.
  \end{itemize}
  Applying an update to an energy, $\energyUpdate(e, u)$, where $e = \vectorComponents[N]{e} \in \energies$ (or $\energies_\infty$) and $u = \vectorComponents[N]{u} \in \energyUpdates$, yields a new energy vector $e'$ where $k$th components $e'_k \defEq e_k + u_k$ for $u_k \in \ints$ and $e'_k \defEq \min_{d\in D} e_d$ for $u_k = \mathtt{min}_D$. Updates that would cause any component to become negative are illegal.
\end{definition}

\begin{definition}[Games]
  \label{def:energy-game}
  An $N$-dimensional \emph{declining energy game} $\game[g_{0},e_0] = (G, G_\defenderSubscript, \gameMove, w, g_0, e_0)$ is played on a directed graph uniquely labeled by energy updates consisting of
  \begin{itemize}
    \item a set of \emph{game positions} $G$, partitioned into
    \begin{itemize}
      \item a set of \emph{defender positions} $G_\defenderSubscript\subseteq G$
      \item a set of \emph{attacker positions} $G_\attackerSubscript\defEq G \setminus G_\defenderSubscript$,
    \end{itemize}
    \item a relation of \emph{game moves} $\operatorname{\gameMove}\subseteq G \times G$,
    \item a \emph{weight function} for the moves $w \colon (\operatorname{\gameMove}) \to \energyUpdates$,
    \item an \emph{initial position} $g_{0}\in G$, and
    \item an \emph{initial energy budget} vector $e_{0} \in \energies_\infty$.
  \end{itemize}
  \noindent
  The notation $g \gameMoveX{u} g'$ stands for $g \gameMove g'$ and $w(g,g') = u$.
\end{definition}

\begin{definition}[Plays, energies, and wins]
  \label{def:plays-wins}
  We call the (finite or infinite) paths $\rho = g_{0}g_{1}\ldots\in G^{\infty}$ with $g_{i}\gameMoveX{u_i} g_{i+1}$ \emph{plays} of $\game[g_{0},e_{0}]$.

  The \emph{energy level} of a play $\rho$ at round $i$, $\energyLevel_{\rho}(i)$, is recursively defined as
  $\energyLevel_{\rho}(0) \defEq e_0$ and otherwise as
  $\energyLevel_{\rho}(i+1) \defEq \energyUpdate(\energyLevel_{\rho}(i), u_i)$.
  If we omit the index, $\energyLevel_{\rho}$, this refers to the final energy level of a finite run $\rho$, \ie, $\energyLevel_{\rho}(\relSize{\rho} - 1)$.

  Plays where energy levels become undefined (negative) are won by the defender.
  So are infinite plays.
  If a finite play is stuck (\ie, $g_{0}\dots g_{n} \centernot{\gameMove}$), the stuck player loses: The defender wins if $g_{n}\in G_\attackerSubscript$, and the attacker wins if $g_{n}\in G_\defenderSubscript$.
\end{definition}

\begin{proposition}
  \label{prop:declining-energies}
  In this model, energy levels can only decline.
  \begin{enumerate}
    \item Updates may only decrease energies, $\energyUpdate(e, u) \leq e$.
    \item Energy level changes are monotonic:
      If $\energyLevel_{\rho g} \leq \energyLevel_{\sigma g}$ and $g\gameMove g'$
      then $\energyLevel_{\rho g g'} \leq \energyLevel_{\sigma g g'}$.
    \item If $e_0 \leq e'_0$ and $\game[g_0, e_0]$ has non-negative play $\rho$, then $\game[g_0, e'_0]$ also has non-negative play $\rho$.
  \end{enumerate}
\end{proposition}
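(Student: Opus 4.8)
The plan is to establish the three claims in sequence, since each relies on the previous one, and to reduce every step to componentwise reasoning about the two shapes an update component can take according to \refDef{def:updates}. For the first claim I would fix an energy $e = \vectorComponents[N]{e}$ and an update $u = \vectorComponents[N]{u} \in \energyUpdates$ and verify $\energyUpdate(e,u)_k \leq e_k$ coordinate by coordinate. If $u_k \in \{-1, 0\}$, then $\energyUpdate(e,u)_k = e_k + u_k \leq e_k$ because $u_k \leq 0$. If $u_k = \mathtt{min}_D$, then $\energyUpdate(e,u)_k = \min_{d \in D} e_d$, and since $k \in D$ by the side condition of \refDef{def:updates}, this minimum is bounded above by $e_k$. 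Hence $\energyUpdate(e,u) \leq e$.

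For the second claim the key observation is that the graph is \emph{uniquely} labeled by updates, so the move $g \gameMove g'$ carries one fixed update $u = w(g,g')$ that is applied to both final energy levels: $\energyLevel_{\rho g g'} = \energyUpdate(\energyLevel_{\rho g}, u)$ and $\energyLevel_{\sigma g g'} = \energyUpdate(\energyLevel_{\sigma g}, u)$. It therefore suffices to show that $\energyUpdate(\cdot, u)$ is monotone, i.e.\ that $e \leq f$ implies $\energyUpdate(e,u) \leq \energyUpdate(f,u)$. I would again argue coordinatewise: for $u_k \in \{-1,0\}$ we have $e_k + u_k \leq f_k + u_k$, and for $u_k = \mathtt{min}_D$ we have $\min_{d\in D} e_d \leq \min_{d \in D} f_d$ since $e_d \leq f_d$ for every $d \in D$. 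Instantiating with $e = \energyLevel_{\rho g}$ and $f = \energyLevel_{\sigma g}$ yields the claim.

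For the third claim I would note that $\rho = g_0 g_1 \ldots$ is a path in the underlying graph shared by $\game[g_0, e_0]$ and $\game[g_0, e_0']$, the two games differing only in the starting budget. Writing $\energyLevel^{e}_\rho(i)$ for the energy level along $\rho$ computed from initial budget $e$, I prove by induction on the round $i$ that $\energyLevel^{e_0}_\rho(i) \leq \energyLevel^{e_0'}_\rho(i)$ and that both are defined. The base case is $e_0 \leq e_0'$. In the step, non-negativity of $\rho$ in $\game[g_0,e_0]$ ensures $\energyLevel^{e_0}_\rho(i+1)$ is defined and $\geq 0$, while monotonicity from the second claim gives $\energyLevel^{e_0}_\rho(i+1) \leq \energyLevel^{e_0'}_\rho(i+1)$, so the latter is likewise $\geq 0$ and hence defined. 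Thus $\rho$ remains non-negative under $e_0'$, and since the argument treats every finite prefix it extends to infinite $\rho$ as well.

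None of these steps is genuinely difficult; the only point requiring care is the interaction of the ordering with the legality condition of \refDef{def:updates}, namely that an update is illegal if it drives some component below zero. The argument must exploit that legality propagates \emph{upward}: because the result of an update on the larger energy dominates its result on the smaller one, any update keeping the smaller level non-negative keeps the larger one non-negative too. This is precisely what lets the induction of the third claim proceed without the play under the increased budget ever becoming undefined.
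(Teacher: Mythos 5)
Your proof is correct. The paper states \refProp{prop:declining-energies} without any proof, treating it as immediate from \refDef{def:updates} and \refDef{def:plays-wins}, and your componentwise verification---using $k \in D$ to bound the $\mathtt{min}_D$-components, unique edge-labeling to get a single update for both plays in the monotonicity step, and the upward propagation of legality in the induction for the third claim---is exactly the routine argument the paper leaves implicit.
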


\begin{definition}[Strategies and winning budgets]
  \label{def:strategies}
  An \emph{attacker strategy} is a map from play prefixes ending in attacker positions to next game moves
  $s \colon (G^* \times G_\attackerSubscript) \to G$ with $s(g_0 \ldots g_a) \in (g_a \gameMove \cdot)$.
  Similarly, a \emph{defender strategy} names moves starting in defender states.
  If all plays consistent with a strategy $s$ ensure a player to win,
  $s$ is called a \emph{winning strategy} for this player.
  The player with a winning strategy for $\game[g_{0},e_0]$ is said to \emph{win} $\game$ from position $g_{0}$ with initial energy budget $e_0$.
  We call $\attackerWin(g) = \{e \mid \game[g, e] \text{ is won by the attacker}\}$ the \emph{attacker winning budgets}.
\end{definition}

\begin{proposition}
  \label{prop:up-win-budgets}
  The attacker winning budgets at positions are upward-closed with respect to energy, that is,
  $e \in \attackerWin(g)$ and $e \leq e'$ implies $e' \in \attackerWin(g)$.
\end{proposition}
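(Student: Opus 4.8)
The plan is to show that the attacker can simply reuse a winning strategy for the cheaper budget, since, intuitively, a larger initial budget can only help the player who must keep energies from being exhausted. The two facts that make this work are already available: by \refProp{prop:declining-energies} energy levels evolve monotonically in the initial budget, and by \refDef{def:strategies} an attacker strategy is a map from \emph{position} prefixes to moves, so it is oblivious to the concrete energy level and may be transplanted verbatim from $\game[g,e]$ to $\game[g,e']$.

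Concretely, I would fix an attacker winning strategy $s$ for $\game[g,e]$ and argue that this same $s$ wins $\game[g,e']$ whenever $e \leq e'$. Take an arbitrary play $\rho'$ of $\game[g,e']$ consistent with $s$. Since $s$ depends only on the sequence of positions, copying the defender's choices yields a play $\rho$ of $\game[g,e]$ consistent with $s$ with the identical sequence of positions and moves; only the initial budget differs. Because $s$ is winning in $\game[g,e]$, this $\rho$ is an attacker win there, which by \refDef{def:plays-wins} means it is finite, ends in a stuck defender position, and never has an undefined (negative) energy level. The terminal position and the finiteness are features of the shared position/move sequence and hence independent of the budget. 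For the energy, \refProp{prop:declining-energies}(3) applied to the non-negative play $\rho$ and $e \leq e'$ shows that the same play stays non-negative from $e'$; so $\rho'$ is finite, non-negative, and ends in a stuck defender position, i.e.\ an attacker win. As $\rho'$ was arbitrary, $s$ is winning for $\game[g,e']$, and therefore $e' \in \attackerWin(g)$.

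The one step needing care is the monotone, legality-preserving evolution of energy underlying \refProp{prop:declining-energies}(3), which I would unfold by induction along the play: if the $e$-level is non-negative and bounded above by the $e'$-level, then applying the common update $u_i$ keeps both defined and preserves the inequality, since each component form behaves monotonically---for $u_k \in \{-1,0\}$ the shift preserves both order and non-negativity, and for $u_k = \mathtt{min}_D$ the component-wise minimum of non-negative, order-respecting values is again non-negative and order-respecting. I would also note that infinite plays pose no problem: an infinite $s$-consistent play in $\game[g,e']$ would project to an infinite $s$-consistent play in $\game[g,e]$, contradicting that $s$ wins the latter (infinite plays being defender wins). This is the only genuine subtlety; everything else is bookkeeping on the shared position sequence.
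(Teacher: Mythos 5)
Your proof is correct, and it takes exactly the route the paper intends: the paper states \refProp{prop:up-win-budgets} without proof, treating it as an immediate consequence of \refProp{prop:declining-energies} together with the fact that strategies in \refDef{def:strategies} depend only on position sequences, which is precisely the strategy-transfer argument you spell out. Your only looseness is that the win/loss conditions should formally be evaluated on \emph{maximal} (completed) plays consistent with $s$, but this is bookkeeping and does not affect the argument.
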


\noindent
This means we can characterize the set of winning attacker budgets in terms of minimal winning budgets $\attackerWinMin(g) = \Min(\attackerWin(g))$, where $\Min(S)$ selects minimal elements $\{ e \in S \mid \nexists e' \in S \ldotp e' \leq e \land e' \neq e \}$.
Clearly, $\attackerWinMin$ must be an antichain and thus finite due to the energies being well-partially-ordered~\cite{kruskal1972wqo}.
Dually, we may consider the \emph{maximal} energy levels winning for the defender, $\defenderWinMax \colon G \to \powerSet{\energies_\infty}$ where we need extended energies to bound won half-spaces.

\subsection{The Spectroscopy Energy Game}
\label{subsec:spectroscopy-game}

Let us now look at the ``spectroscopy energy game'' at the center of our contribution.
Figure~\ref{fig:spec-game} gives a graphical representation.
The intuition is that the attacker shows how to construct formulas that distinguish a process $p$ from every $q$ in a set of processes $Q$.
The energies limit the expressiveness of the formulas.
The first dimension bounds for how many turns the attacker may challenge observations of actions.
The second dimension limits how often they may use conjunctions to resolve nondeterminism.
The third, fourth, and fifth dimensions limit how deeply observations may nest underneath a conjunction, where the fifth stands for negated clauses, the third for one of the deepest positive clauses and the fourth for the other positive clauses.
The last dimension limits how often the attacker may use negations to enforce symmetry by swapping sides.
The moves closely match productions in the grammar of \refDef{def:hml} and prices in \refDef{def:formula-prices}.

\begin{definition}[Spectroscopy energy game]
  \label{def:spectroscopy-game}
  For a system $\system=(\proc,\actions,\step{})$,
  the $6$-dimensional \emph{spectroscopy energy game} $\gameSpectroscopy^\system[g_0,e_0]=(G,G_\defenderSubscript,\gameMove,w,g_{0},e_0)$
  consists of
  \begin{itemize}
    \item \emph{attacker positions} $\attackerPos{p,Q} \in G_\attackerSubscript$,
    \item \emph{attacker clause positions} $\attackerPos[\land]{p,q} \in G_\attackerSubscript$,
    \item \emph{defender conjunction positions} $\defenderPos{p,Q,Q_*} \in G_\defenderSubscript$,
  \end{itemize}
  where $p, q \in \proc$ and $Q, Q_* \in \powerSet{\proc}$\!, and six kinds of moves:

  \medskip \noindent
  \begin{tabularx}{\linewidth}{@{}l@{\;}l@{\;}c@{\;}l@{\;}X@{}}
    \hspace{\labelsep}\textminus\hspace{\labelsep}\emph{observation moves}
    & $\attackerPos{p,Q}$
    & $\gameMoveX{(-1,0,0,0,0,0)}$
    & $\mathmakebox[4em][l]{\attackerPos{p^\prime, Q'}}$
      if $p \step{\ccsInm{a}} p^\prime$\!, $Q \step{\ccsInm{a}} Q'$\!,
    \\
    \hspace{\labelsep}\textminus\hspace{\labelsep}\emph{conj.~challenges}
    & $\attackerPos{p,Q}$
    & $\gameMoveX{(0,-1,0,0,0,0)}$
    & $\defenderPos{p,Q \setminus Q_*, Q_*}$
      \; if $Q_* \subseteq Q$,
    \\
    \hspace{\labelsep}\textminus\hspace{\labelsep}\emph{conj.~revivals}
    & $\defenderPos{p,Q,Q_*}$
    & $\gameMoveX{(\updMin{1,3},0,0,0,0,0)}$
    & $\mathmakebox[4em][l]{\attackerPos{p,Q_*}}$
      if $Q_* \neq \varnothing$,
    \\
    \hspace{\labelsep}\textminus\hspace{\labelsep}\emph{conj.~answers}
    & $\defenderPos{p,Q,Q_*}$
    & $\gameMoveX{(0,0,0,\updMin{3,4},0,0)}$
    & $\mathmakebox[4em][l]{\attackerPos[\land]{p,q}}$
      if $q \in Q$,
    \\
    \hspace{\labelsep}\textminus\hspace{\labelsep}\emph{positive decisions}
    & $\attackerPos[\land]{p,q}$
    & $\gameMoveX{(\updMin{1,4},0,0,0,0,0)}$
    & $\mathmakebox[4em][l]{\attackerPos{p,\{q\}},}$ and
    \\
    \hspace{\labelsep}\textminus\hspace{\labelsep}\emph{negative decisions}
    & $\attackerPos[\land]{p,q}$
    & $\gameMoveX{(\updMin{1,5},0,0,0,0,-1)}$
    & $\mathmakebox[4em][l]{\attackerPos{q,\{p\}}}$
      if $p \neq q$.
    \\
  \end{tabularx}
\end{definition}

\begin{figure}[t]
  \centering
  \begin{tikzpicture}[>->,shorten <=1pt,shorten >=0.5pt,auto,node distance=2cm, rel/.style={dashed,font=\it},
    posStyle/.style={draw, inner sep=1ex,minimum size=1cm,minimum width=2cm,anchor=center,draw,black,fill=gray!5}]
      \node[posStyle]
        (Att){$\attackerPos{p,Q}$};
      \node[ellipse, draw, inner sep=1ex, minimum size=1cm,minimum width=2cm,fill=gray!5]
        (Def) [right = 2.2cm of Att] {$\defenderPos{p,Q \setminus Q_*,Q_*}$};
      \node[posStyle]
        (AttConj) [below right = 1.2cm and 3cm of Att] {$\attackerPos[\land]{p,q}$};
      \node[posStyle, dashed]
        (AttSwap) [right = 3cm of AttConj] {$\attackerPos{q,\{p\}}$};
      \node[posStyle, dashed]
        (AttContinue) [above = 1cm of AttSwap] {$\attackerPos{p,\{q\}}$};
      \node[posStyle, dashed]
        (AttObs) [above = 1cm of AttContinue] {$\attackerPos{p^\prime,Q^\prime}$};

      \path
        (Att) edge[bend left=28]
          node[pos=.3, align=center, label={[label distance=0.0cm]-25:$\textcolor{gray}{\textcolor{obsColor}{-1},0,0,0,0,0}$}] {$p\step{a}p^\prime$\\ $Q\step{a}Q^\prime$} (AttObs)
        (Att) edge
          node[label={[label distance=0cm]-90:$\textcolor{gray}{0,\textcolor{conjColor}{-1},0,0,0,0}$}] {$Q_* \subseteq Q$} (Def)
        (Def) edge[bend left=15]
          node[align=right, swap, pos=.1] {$q \in Q \setminus Q_*$\\$\textcolor{gray}{0,0,0,\updMin{\textcolor{mainPosColor}{3},\textcolor{otherPosColor}{4}},0,0}$} (AttConj)
        (AttConj) edge[bend left=10]
          node[label={[label distance=0.0cm]-75:$\textcolor{gray}{\updMin{\textcolor{obsColor}{1},\textcolor{otherPosColor}{4}},0,0,0,0,0}$}] {} (AttContinue)
        (AttConj) edge[bend right=15]
          node[align=center, label={[label distance=0.0cm]-90:$\textcolor{gray}{\updMin{\textcolor{obsColor}{1},\textcolor{negObsColor}{5}},0,0,0,0,\textcolor{negsColor}{-1}}$}] {} (AttSwap)
        (Def) edge[bend left=10]
          node[align=center, label={[label distance=0.0cm]-30:$\textcolor{gray}{\updMin{\textcolor{obsColor}{1},\textcolor{mainPosColor}{3}},0,0,0,0,0}$}] {$p' = p$\\$Q'=Q_* \neq \varnothing$} (AttObs);
  \end{tikzpicture}
  \caption{
    Schematic spectroscopy game $\gameSpectroscopy$ of \refDef{def:spectroscopy-game}.}
  \label{fig:spec-game}
\end{figure}

\noindent
The spectroscopy energy game is a bisimulation game in the tradition of Stirling~\cite{stirling1993modal}.

\begin{lemma}[Bisimulation game, proof see~%
  \ifthenelse{\boolean{fullversion}}
    {page~\pageref{prf:bisimulation-game}}
    {\cite{bisping2023equivalenceEnergyGamesReport}}]
  \label{lem:bisimulation-game}
  $p_0$ and $q_0$ are bisimilar
  iff the defender wins $\gameSpectroscopy[\attackerPos{p_0, \{q_0\}}, e_0]$ for every initial energy budget $e_0$,
  \ie if $(\infty, \infty, \infty, \infty, \infty, \infty) \in \defenderWinMax(\attackerPos{p_0, \{q_0\}})$.
\end{lemma}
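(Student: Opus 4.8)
The plan is to first trivialize the energy bookkeeping and then treat the two implications separately. Write $\top = (\infty,\infty,\infty,\infty,\infty,\infty)$. Under budget $\top$ no update can drive a component negative, since $\infty - 1 = \infty$ and a $\updMin{\dots}$ of $\infty$'s stays $\infty$; hence $\gameSpectroscopy[\attackerPos{p_0,\{q_0\}},\top]$ degenerates to the plain reachability game on its positions, with energies never deciding a play. The equivalence between ``the defender wins for every $e_0$'' and ``$\top \in \defenderWinMax(\attackerPos{p_0,\{q_0\}})$'' is then immediate from \refProp{prop:up-win-budgets} together with \refProp{prop:declining-energies}: attacker winning budgets are upward-closed, so defender winning budgets are downward-closed, and $\top$ dominates every $e_0$. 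It therefore suffices to characterize when the defender wins from $\attackerPos{p_0,\{q_0\}}$ with budget $\top$.

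For the direction ``bisimilar $\Rightarrow$ defender wins'' I would exhibit a defender strategy maintaining the invariant that at every attacker position $\attackerPos{p,Q}$ reached there is some $q^\ast \in Q$ with $p \bEquiv{B} q^\ast$. This holds initially. Under an observation move $\attackerPos{p,Q} \gameMove \attackerPos{p',Q'}$ along $p \step{a} p'$, bisimilarity of $p$ and $q^\ast$ yields a matching $q^\ast \step{a} q'^\ast$ with $p' \bEquiv{B} q'^\ast$ and $q'^\ast \in Q'$, so the invariant survives \emph{every} attacker choice. At a defender position $\defenderPos{p,Q\setminus Q_\ast,Q_\ast}$ produced by a conjunction challenge, the defender keeps the witness alive: if $q^\ast \in Q_\ast$ it plays the revival to $\attackerPos{p,Q_\ast}$, and otherwise it answers with $q^\ast$, reaching $\attackerPos[\land]{p,q^\ast}$; from there any attacker decision lands in $\attackerPos{p,\{q^\ast\}}$ or, using symmetry of $\bEquiv{B}$, in $\attackerPos{q^\ast,\{p\}}$, both satisfying the invariant. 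The invariant guarantees the defender is never stuck, and since energies are irrelevant under $\top$, every play is infinite or ends with the attacker stuck, hence won by the defender.

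For ``defender wins $\Rightarrow$ bisimilar'' I would argue the contrapositive; since a position is won by at most one player, it suffices to show $p_0 \not\bEquiv{B} q_0 \Rightarrow$ the attacker wins. By the Hennessy--Milner theorem (on the image-finite systems under consideration) there is a $\varphi \in \hmlA$ distinguishing $p_0$ from $q_0$. I would then prove the stronger completeness claim: if $\varphi$ distinguishes $p$ from every $q \in Q$, \ie $p \in \hmlSemantics{\varphi}{\system}{}$ and $q \notin \hmlSemantics{\varphi}{\system}{}$ for all $q \in Q$, then the attacker wins $\gameSpectroscopy[\attackerPos{p,Q},\top]$, by induction on $\varphi$. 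An observation $\varphi = \hmlObs{a}\varphi'$ becomes an observation move to $\attackerPos{p',Q'}$ with $Q' = \{q'' \mid q \in Q,\, q \step{a} q''\}$, where $\varphi'$ distinguishes $p'$ from all of $Q'$. For $\varphi = \hmlAndS\{\psi_i\}$ the attacker issues a conjunction challenge with $Q_\ast = \varnothing$, reaching $\defenderPos{p,Q,\varnothing}$, so the defender is forced to answer some $q \in Q$; from $p \in \hmlSemantics{\varphi}{\system}{}$ and $q \notin \hmlSemantics{\varphi}{\system}{}$ one extracts a separating conjunct, either a positive $\psi_{i_q}=\varphi'$ with $p \in \hmlSemantics{\varphi'}{\system}{}$, $q \notin \hmlSemantics{\varphi'}{\system}{}$, or a negative $\psi_{i_q}=\hmlNeg\varphi'$ with $p \notin \hmlSemantics{\varphi'}{\system}{}$, $q \in \hmlSemantics{\varphi'}{\system}{}$. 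In the first case the attacker plays the positive decision to $\attackerPos{p,\{q\}}$, in the second the negative decision to $\attackerPos{q,\{p\}}$ (legal since the conjunct forces $p \neq q$), and in each branch the continuation is an instance of the claim for the strict subformula $\varphi'$, closing the induction; the base case $\varphi = \hmlTrue$ forces $Q = \varnothing$, where the attacker wins outright via the empty conjunction challenge to the stuck position $\defenderPos{p,\varnothing,\varnothing}$.

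I expect the conjunction handling to be the main obstacle: one must verify that playing $Q_\ast = \varnothing$ genuinely forces the defender through every conjunct, that the positive and negative clauses correspond exactly to the two attacker decisions, and that the set-indexed hypothesis ``distinguishes $p$ from all of $Q$'' is the right strengthening to survive observation moves (which collect \emph{all} $a$-successors into $Q'$). The remaining work is routine bookkeeping: checking the defender's invariant against each of the six move kinds, and noting that the attacker strategy built from $\varphi$ yields only finite plays, so no degenerate challenge can make a play cycle without progress.
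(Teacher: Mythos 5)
Your first direction is essentially the paper's own argument: the paper likewise builds a defender strategy from a bisimulation relation $\rel{R}$ and shows by induction on plays that an $\rel{R}$-witness survives every kind of move (observation moves by the simulation property, choices at $\defenderPos{p,Q,Q_*}$ by the strategy, decisions at $\attackerPos[\land]{p,q}$ by symmetry), so the defender is never stuck and wins at any energy. Your second direction is genuinely different. The paper never touches HML there: it defines $\rel{R} \defEq \{(p,q) \mid (\infty,\ldots,\infty) \in \defenderWinMax(\attackerPos{p,\{q\}})\}$ and checks directly that this relation is symmetric (the attacker can always force $\attackerPos{p,\{q\}} \gameMove \defenderPos{p,\{q\},\varnothing} \gameMove \attackerPos[\land]{p,q} \gameMove \attackerPos{q,\{p\}}$) and a simulation (after $\attackerPos{p,\{q\}} \gameMove \attackerPos{p',Q'} \gameMove \defenderPos{p',Q',\varnothing}$ the defender must name some $q' \in Q'$ from which they still win), so the defender-won pairs themselves form a bisimulation. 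You instead invoke the Hennessy--Milner theorem and re-derive, by induction on a distinguishing formula, the $(\infty,\ldots,\infty)$ instance of the paper's distinction-completeness result (\refLem{lem:distinction-completeness}). Your induction itself is sound: the $Q_* = \varnothing$ challenge does force the defender through every $q \in Q$, the positive and negative decisions do match the two clause polarities, and $p \neq q$ is guaranteed whenever a formula separates them. What your route costs is generality: the Hennessy--Milner theorem with the paper's finite-width conjunctions requires image-finiteness, an assumption that neither the lemma statement nor \refDef{def:transition-system} imposes --- your parenthetical ``on the image-finite systems under consideration'' smuggles in a hypothesis the lemma does not have. The paper is explicit about exactly this trade-off: the remark following its proof notes that deriving the lemma from the HML-based results (which is in spirit what you do) only covers finite-state systems, whereas the direct construction ``also goes through for infinite-state systems.'' So your argument establishes the lemma in the image-finite case relevant to the algorithm, but not the lemma as stated; for full generality, replace the Hennessy--Milner step by the paper's extraction of a bisimulation from the defender's winning region.
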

In other words, if there are initial budgets winning for the attacker, then the compared processes can be told apart.
For $\gameSpectroscopy$, the attacker ``unknown initial credit problem'' in energy games~\cite{vcdh2015complexityMeanPayoffEnergy} coincides with the ``apartness problem''~\cite{geuversJacobs2021apartness} for processes.

\newcommand*{\seprule}{\\[-5pt]\rule{1.3cm}{.5pt}\\[-1pt]}

 \begin{figure}[t!]
  \centerfloat
  \begin{tikzpicture}[auto,shorten <=1pt,shorten >=0.5pt,
    defender/.style={ellipse, inner sep=0ex},
    defenderWins/.style={draw=blue},
    position/.style={inner sep=5pt,align=center,anchor=center,draw,black,fill=gray!5,thick, node font=\small}]
    \begin{scope}[]
      \node[position, initial, initial text={}, label={above left:$\color{magenta}\hmlObs{\tau}\hmlAndS\{\hmlNeg\hmlObs{ec_A}\}$}] (s0s1) at(0,10) {
          $\attackerPos{\ccsIdentifier{S}, \{\ccsIdentifier{S'}\}}$
        \seprule
        (2,2,0,0,1,1)
      };
      \node[position] (s0s1c) at(5,11) {
        $\attackerPos[\land]{\ccsIdentifier{S}, \ccsIdentifier{S'}}$
        \seprule
        (2,2,0,2,1,1)\\(2,3,0,0,2,3)
      };
      \node[position, defender] (s0s1d) [above left = -.4cm and 1.3cm of s0s1c] {
        $\defenderPos{\ccsIdentifier{S}, \{\ccsIdentifier{S'}\}, \varnothing}$
        \seprule
        (2,2,2,2,1,1)\\(2,3,0,0,2,3)
      };
      \node[position, defenderWins] (DivDiv) [below right = .5cm and .5cm of s0s1] {
        $\attackerPos{\ccsIdentifier{Div}, \{ \ccsIdentifier{Div} \}}$
      };
      \node[position,label={[xshift=-.5cm]above right:$\color{magenta}\hmlAndS\{\hmlNeg\hmlObs{\tau}\hmlAndS\{\hmlNeg\hmlObs{ec_A}\}\}$}] (s1s0) at(10,10) {
        $\attackerPos{\ccsIdentifier{S'}, \{\ccsIdentifier{S}\}}$
        \seprule
        (2,3,0,0,2,2)
      };
      \node[position, label={above:$\color{magenta}\hmlNeg\hmlObs{\tau}\hmlAndS\{\hmlNeg\hmlObs{ec_A}\}$}] (s1s0c) [below = .8cm of s0s1c] {
        $\attackerPos[\land]{\ccsIdentifier{S'}, \ccsIdentifier{S}}$
        \seprule
        (2,2,0,0,2,2)
      };
      \node[position, defender, label={[xshift=1.2cm]above left:$\color{magenta}\hmlAndS\{\hmlNeg\hmlObs{\tau}\hmlAndS\{\hmlNeg\hmlObs{ec_A}\}\}$}] (s1s0d) [below left = 1cm and -.5cm of s1s0]  {
        $\defenderPos{\ccsIdentifier{S'}, \{\ccsIdentifier{S}\}, \varnothing}$
        \seprule
        (2,2,0,0,2,2)
      };
      \node[position] (s1s0Div) [below = 1cm of s1s0d]  {
        $\attackerPos{\ccsIdentifier{S'}, \{\ccsIdentifier{S}, \ccsIdentifier{Div}\}}$
        \seprule
        (2,3,0,0,2,2)
      };
      \node[position, defender] (s1Divds0) [below right = .1cm and 1.2cm of s1s0Div] {
        $\defenderPos{\ccsIdentifier{S'}, \{\ccsIdentifier{Div}\}, \{ \ccsIdentifier{S} \}}$
        \seprule
        (2,3,2,0,2,2)
      };
      \node[position, defender] (s1Divs0d) [below = 1.2cm of s1s0Div] {
        $\defenderPos{\ccsIdentifier{S'}, \{\ccsIdentifier{S}, \ccsIdentifier{Div}\}, \varnothing}$
        \seprule
        (2,2,0,0,2,2)
      };
      \node[position, defenderWins, dashed] (DivDiv2) [above right = .5cm and .5cm of s1s0Div] {
        $\attackerPos{\ccsIdentifier{Div}, \{ \ccsIdentifier{Div} \}}$
      };
      \node[position, label={above left:$\color{magenta}\hmlAndS\{\hmlNeg\hmlObs{ec_A}\}$}] (DivS1) [below = 1cm of s0s1] {
        $\attackerPos{\ccsIdentifier{Div}, \{ \ccsIdentifier{S'} \}}$
        \seprule
        (1,2,0,0,1,1)
      };
      \node[position, defender, label={above left:$\color{magenta}\hmlAndS\{\hmlNeg\hmlObs{ec_A}\}$}] (DivS1d) [below = 1cm of DivS1] {
        $\defenderPos{\ccsIdentifier{Div}, \{ \ccsIdentifier{S'} \}, \varnothing}$
        \seprule
        (1,1,0,0,1,1)
      };
      \node[position, label={above left:$\color{magenta}\hmlNeg\hmlObs{ec_A}$}] (DivS1c) [right = 1cm of DivS1d] {
        $\attackerPos[\land]{\ccsIdentifier{Div}, \ccsIdentifier{S'}}$
        \seprule
        (1,1,0,0,1,1)
      };
      \node[position, label={above left:$\color{magenta}\hmlObs{ec_A}$}] (S1Div) [below = 1cm of DivS1c] {
        $\attackerPos{\ccsIdentifier{S'}, \{ \ccsIdentifier{Div} \}}$
        \seprule
        (1,1,0,0,0,0)
      };
      \node[position, defender] (S1Divd) [below = 1cm of S1Div] {
        $\defenderPos{\ccsIdentifier{S'}, \{ \ccsIdentifier{Div} \}, \varnothing}$
        \seprule
        (1,1,1,1,0,0)\\(1,2,0,0,1,2)
      };
      \node[position] (S1Divc) [below right = 1cm and 1.8cm of S1Div] {
        $\attackerPos[\land]{\ccsIdentifier{S'}, \ccsIdentifier{Div}}$
        \seprule
        (1,1,0,1,0,0)\\(1,2,0,0,1,2)
      };
      \node[position, label={above left:$\color{magenta}\hmlTrue\vphantom{\hmlObs{ec_A}}$}] (DivEmp) [left = 1.2cm of S1Div] {
        $\attackerPos{\ccsIdentifier{Div}, \varnothing}$
        \seprule
        (0,1,0,0,0,0)
      };
      \node[position, defender, label={above left:$\color{magenta}\hmlTrue$}] (DivEmpd) [below = 1cm of DivEmp] {
        $\defenderPos{\ccsIdentifier{Div}, \varnothing, \varnothing}$
        \seprule
        (0,0,0,0,0,0)
      };
    \end{scope}
    \begin{scope}[>->,black!75,every node/.style={node font=\small}]
      \draw (s0s1) to[bend left=20] node {$0,-1$} (s0s1d);
      \draw (s0s1) to[out=230,in=200,looseness=3] node {$-1,0$} (s0s1);
      \draw (s0s1) to[bend left=20, swap] node {$-1,0$} (DivDiv);
      \draw[thick] (s0s1d) to[bend left=20] node {$0,0,0,\updMin{3,4}$} (s0s1c);
      \draw[thick] (s0s1c) to[bend right=15] node {$\updMin{1,4}$} (s0s1);
      \draw[thick] (s0s1c) to[bend left=20] node {$\updMin{1,5},0,0,0,0,-1$} (s1s0);
      \draw[thick] (s0s1) to node {$-1,0$} (DivS1);
      \draw[thick] (s1s0) to[bend left=20, swap] node {$0,-1$} (s1s0d);
      \draw (s1s0) to[pos=.35, bend left=22] node {$-1,0$} (s1s0Div);
      \draw (s1s0) to node {$-1,0$} (DivDiv2);
      \draw[thick] (s1s0d) to[pos=.3,bend left=30] node {$0,0,0,\updMin{3,4}$} (s1s0c);
      \draw (s1s0c) to[bend left=10] node {$\updMin{1,4}$} (s1s0);
      \draw[thick] (s1s0c) to[bend right=10, swap] node {$\updMin{1,5},0,0,0,0,-1$} (s0s1);
      \draw (s1s0Div) to[bend left=20] node {$0,-1$} (s1Divds0);
      \draw (s1s0Div) to[pos=.7, swap] node {$-1,0$} (DivDiv2);
      \draw (s1s0Div) to[out=230,in=200,looseness=3,pos=.3] node {$-1,0$} (s1s0Div);
      \draw[thick] (s1Divds0) to[bend right=45, swap] node {$\updMin{1,3}$} (s1s0);
      \draw[thick] (s1Divds0) to[bend left=40] node {$0,0,0,\updMin{3,4}$} (S1Divc);
      \draw[thick] (s1s0Div) to[bend left=20] node {$0,-1$} (s1Divs0d);
      \draw[thick] (s1Divs0d) to[bend left=15, swap, pos=.65] node {$0,0,0,\updMin{3,4}$} (s1s0c);
      \draw[thick] (s1Divs0d) to[bend left=20, pos=.35] node {$0,0,0,\updMin{3,4}$} (S1Divc);
      \draw[thick] (DivS1) to[swap] node {$0,-1$} (DivS1d);
      \draw (DivS1) to[out=230,in=200,looseness=3] node {$-1,0$} (DivS1);
      \draw[thick] (DivS1d) to[bend right=20, swap] node {$0,0,0,\updMin{3,4}$} (DivS1c);
      \draw (DivS1c.north) to[bend right=15] node[pos=.8] {$\updMin{1,4}$} (DivS1.south east);
      \draw[thick] (DivS1c) to[pos=.65] node {$\updMin{1,5},0,0,0,0,-1$} (S1Div);
      \draw (S1Div) to node {$0,-1$} (S1Divd);
      \draw[thick] (S1Div) to node {$-1,0$} (DivEmp);
      \draw (S1Div) to[out=230,in=200,looseness=3] node {$-1,0$} (S1Div);
      \draw[thick] (S1Divd) to[bend right=29, swap] node {$0,0,0,\updMin{3,4}$} (S1Divc);
      \draw[thick] (DivEmp) to[swap] node {$0,-1$} (DivEmpd);
      \draw[thick] (S1Divc) to[bend right=10] node[pos=.3, swap] {$\updMin{1,4}$} (S1Div);
      \draw[thick] ([xshift=-.1cm]S1Divc.north) to[bend right=40] node[pos=.89, swap] {$\updMin{1,5},0,0,0,0,-1$} (DivS1.east);
    \end{scope}
  \end{tikzpicture}
  \caption{\refExample{exa:example-game} spectroscopy energy game, minimal attacker winning budgets, and distinguishing formulas/clauses.
  (In order to reduce visual load, only the first components of the updates are written next to the edges. The other components are 0.)}
  \label{fig:example-game}
\end{figure}

\begin{example}
  \label{exa:example-game}
  \refFig{fig:example-game} shows the spectroscopy energy game starting at $\attackerPos{\ccsIdentifier{S},\{\ccsIdentifier{S'}\}}$ from \refExample{exa:distinguishing-formulas}.
  The lower part of each node displays the node's $\attackerWinMin$.
  The magenta HML formulas illustrate distinctions relevant for the correctness argument of the following \refSubsec{subsec:game-correctness}.
  \refSec{sec:computing-equivalences} will describe how to obtain attacker winning budgets and equivalences.
  The blue ``symmetric'' positions are definitely won by the defender---we omit the game graph below them.
  We also omit the move $\attackerPos{\ccsIdentifier{S'}, \{\ccsIdentifier{S}, \ccsIdentifier{Div}\}}\gameMoveX{(0,-1,0,0,0,0)}\defenderPos{\ccsIdentifier{S'}, \{\ccsIdentifier{S} \}, \{ \ccsIdentifier{Div}\}}$---it can be ignored as will be discussed in \refSubsec{subsec:more-clever}.
\end{example}

\subsection{Correctness: Tight Distinctions}
\label{subsec:game-correctness}

We will check that winning budgets indeed characterize what equivalences hold by constructing price-minimal distinguishing formulas from attacker budgets.

\begin{definition}[Strategy formulas]
  Given the set of winning budgets $\attackerWin$,
  the set of \emph{attacker strategy formulas} $\hmlStrategies$ for a position with given energy level $e$ is defined inductively as follows:
  \begin{description}[itemsep=4pt]
    \item $\hmlObs{b}\varphi \in \hmlStrategies(\attackerPos{p,Q}, e)$
      \; if \; 
      $\attackerPos{p,Q} \gameMoveX{u} \attackerPos{p',Q'}$,
      $e' = \energyUpdate(e, u) \!\in \attackerWin(\attackerPos{p',Q'})$,
      $p \step{b} p'$,
      $Q \step{b} Q'$,
      and $\varphi \in \hmlStrategies(\attackerPos{p',Q'}, e')$,
    \item $\varphi \!\in\! \hmlStrategies(\attackerPos{p,Q}, e)$
    \; if \; 
      $\attackerPos{p,Q} \!\gameMoveX{u} \defenderPos{p,Q,Q_*}$,
      $e'\! = \!\energyUpdate(e, u)\! \in\! \attackerWin(\defenderPos{p,Q,Q_*})$,
      and $\varphi \in \hmlStrategies(\defenderPos{p,Q,Q_*}, e')$,
    \item $\hmlAnd{q}{Q}\psi_q \in \hmlStrategies(\defenderPos{p,Q,\varnothing}, e)$
      \; if \; 
      $\defenderPos{p,Q,\varnothing} \!\gameMoveX{u_q}\! \attackerPos[\land]{p,q}$,
      $e_q \!=\! \energyUpdate(e, u_q) \!\in\! \attackerWin(\attackerPos[\land]{p,q})$
      and $\psi_q \in \hmlStrategies(\attackerPos[\land]{p,q}, e_q)$ for each $q \in Q$,
    \item $\hmlAnd{q}{Q \cup \{*\}}\psi_q \in \hmlStrategies(\defenderPos{p,Q,Q_*}, e)$
      \; if \; 
      $\defenderPos{p,Q,Q_*} \!\gameMoveX{u_q}\! \attackerPos[\land]{p,q}$,
      $e_q \!=\! \energyUpdate(e, u_q) \!\in\! \attackerWin(\attackerPos[\land]{p,q})$
      and $\psi_q \in \hmlStrategies(\attackerPos[\land]{p,q}, e_q)$ for each $q \in Q$,
      and if
      $\defenderPos{p,Q,Q_*} \!\gameMoveX{u_*}\! \attackerPos{p,Q_*}$,
      $e_* \!=\! \energyUpdate(e, u_*) \!\in\! \attackerWin(\attackerPos{p,Q_*})$, and
      $\psi_* \in \hmlStrategies(\attackerPos{p,Q_*}, e_*)$ is an observation,
    \item $\varphi \in \hmlStrategies(\attackerPos[\land]{p,q}, e)$
      \; if \; 
      $\attackerPos[\land]{p,q} \gameMoveX{u} \attackerPos{p,\{q\}}$,
      $e' = \energyUpdate(e, u) \in \attackerWin(\attackerPos{p,\{q\}})$
      and $\varphi \in \hmlStrategies(\attackerPos{p,\{q\}}, e')$ is an observation, and
    \item $\hmlNeg \varphi \in \hmlStrategies(\attackerPos[\land]{p,q}, e)$
      \; if \; 
      $\attackerPos[\land]{p,q} \gameMoveX{u} \attackerPos{q,\{p\}}$,
      $e' = \energyUpdate(e, u) \in \attackerWin(\attackerPos{q,\{p\}})$
      and $\varphi \in \hmlStrategies(\attackerPos{q,\{p\}}, e')$ is an observation.
  \end{description}
\end{definition}
Because of the game structure, we actually know the $u$ needed in each line of the definition.
It is $u = (-1,0,0,0,0,0)$ in the first case; $(0,-1,0,0,0,0)$ in the second; $(0,0,0,\updMin{3,4},0,0)$ in the third; $(0,0,0,\updMin{3,4},0,0)$ and $(\updMin{1,3},0,0,0,0,0)$ in the fourth; $(\updMin{1,4},0,0,0,0,0)$ in the fifth; and $(\updMin{1,5},0,0,0,0,-1)$ in last case.
$\hmlStrategies(\attackerPos[\land]{p,q}, \cdot)$ can contain negative clauses, which form no proper formulas on their own.

\begin{lemma}[Price soundness]
  \label{lem:price-soundness}
  $\varphi \in \hmlStrategies(\attackerPos{p,Q}, e)$ implies that $\expr(\varphi) \leq e$ and that $\expr(\varphi) \in \attackerWin(\attackerPos{p,Q})$.
\end{lemma}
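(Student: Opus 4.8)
The plan is to prove both claims simultaneously by induction on the derivation of $\varphi \in \hmlStrategies(\cdot,\cdot)$ (equivalently, on the structure of $\varphi$), strengthening the statement to range over all three kinds of game positions: besides the attacker positions $\attackerPos{p,Q}$ of the statement, I would carry an analogous invariant for the attacker clause positions $\attackerPos[\land]{p,q}$ and the defender conjunction positions $\defenderPos{p,Q,Q_*}$. The guiding principle is that each game move's update $u$ is the exact ``inverse'' of the corresponding increment in $\expr$: reading $u$ backwards recovers how the price grows when a subformula is wrapped by an observation or a negation, or collected into a conjunction. Concretely, for claim (i) the induction hypothesis already gives $\energyUpdate(e,u)\ge\expr(\varphi')$ for the relevant subformula $\varphi'$, and I would reassemble $\expr(\varphi)\le e$; for claim (ii) I would show $\energyUpdate(\expr(\varphi),u)\ge\expr(\varphi')$, so that the hypothesis $\expr(\varphi')\in\attackerWin(\cdot)$ together with upward-closure of winning budgets (\refProp{prop:up-win-budgets}) lifts to $\expr(\varphi)\in\attackerWin(\cdot)$.

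The routine cases are the observation move, the two decisions, and the negation. For $\varphi=\hmlObs{b}\varphi'$ the update $(-1,0,0,0,0,0)$ cancels the leading $\textcolor{obsColor}{1+}$ of $\expr$ in the first component; for a negative decision the $-1$ in the sixth component matches the $\textcolor{negsColor}{1+}$ of $\expr(\hmlNeg\varphi')$; and the $\updMin{1,4}$ and $\updMin{1,5}$ updates of the positive and negative decisions enforce exactly that a positive (resp.\ negative) clause's observation depth is bounded by the budget reserved for it in dimension $\textcolor{otherPosColor}{4}$ (resp.\ $\textcolor{negObsColor}{5}$). In each case the two inequalities reduce to a one-line component-wise check, using \refProp{prop:declining-energies} where energies are only compared monotonically.

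The real work is the conjunction case, assembled at a defender position. Because $\expr(\bigwedge_i\psi_i)$ is the component-wise supremum of a ``conjunction vector'' and all clause prices, it dominates every $\expr(\psi_i)$, which is what lets me lift each clause's winning budget for claim (ii). For claim (i) I would instead argue coordinate by coordinate: the conjunction challenge pays the single $\textcolor{conjColor}{+1}$ in dimension $\textcolor{conjColor}{2}$, while the $\updMin{3,4}$, $\updMin{1,3}$ and $\updMin{1,5}$ updates of the answer, revival and negative-decision moves witness that each clause's depth already sits below the coordinate the conjunction reserves for it. Two auxiliary facts are needed: an easy structural invariant $\expr_3(\varphi)\ge\expr_4(\varphi)$ (the deepest positive clause is at least as deep as the others), proved by a side induction, which makes $\min(\expr_3,\expr_4)=\expr_4$ when the answer update fires; and the observation that, since a conjunction is a \emph{set} of clauses, the attacker is free to choose which positive clause plays the role of the revival.

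This last point is where I expect the main obstacle. The price records the \emph{deepest} positive clause $R$ in dimension $\textcolor{mainPosColor}{3}$ and the remaining positive clauses in dimension $\textcolor{otherPosColor}{4}$, whereas the game routes the revival branch through dimension $\textcolor{mainPosColor}{3}$ and every conjunction-answer branch through dimension $\textcolor{otherPosColor}{4}$. If one naively replays the syntactic strategy formula with its distinguished revival clause, the budget $\expr(\varphi)$ can be insufficient whenever some answer clause happens to be deeper than the revival witness. The remedy for claim (ii) is to build the winning strategy so that the attacker re-splits $Q$ and plays the witness of the price-optimal clause $R$ as the revival, demoting the original revival witness to ordinary answers; since a conjunction is set-based this leaves the realized formula unchanged, and it aligns the game's dimension-$\textcolor{mainPosColor}{3}$/dimension-$\textcolor{otherPosColor}{4}$ usage with the price's split, making $\expr(\varphi)$ exactly sufficient. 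Checking that this re-routing is always available, and that the induction hypotheses at the clause and observation successors still apply after it, is the crux of the argument.
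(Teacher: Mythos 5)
Your overall strategy coincides with the paper's: a structural induction on $\varphi$ (the paper quantifies over arbitrary $p,Q,e$ and inlines the clause and defender positions instead of strengthening the invariant, which is cosmetic), claim (i) by reassembling component-wise inequalities from the induction hypothesis, and claim (ii) via the inductive characterization of $\attackerWin$ together with upward closure (\refProp{prop:up-win-budgets}). Your side invariant $\expr_3(\varphi)\geq\expr_4(\varphi)$ is true and unproblematic. More importantly, you correctly isolated the genuinely delicate point: the price excludes a \emph{deepest} positive clause from dimension 4, while the game's revival move is tied to the \emph{syntactic} revival clause $\psi_*$, and the two need not coincide. The paper's own proof treats this only implicitly: for claim (i) it writes the conjunction price as if $R=\{*\}$, which is a safe over-approximation (if the revival is not deepest, the deepest clause is an answer clause whose depth is bounded by $\min(e_1,e_3,e_4)\leq e_4$, and this bound dominates $\expr_1(\psi_*)$ as well); for claim (ii) it merely asserts in one sentence that the price is a supremum of budgets winning after the defender's moves. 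Read literally, i.e.\ replaying the derivation's split, that assertion can be false: for $\varphi=\hmlAndS\{\hmlObs{a}\hmlTrue,\,\hmlObs{a}\hmlObs{b}\hmlTrue\}$ with revival $\hmlObs{a}\hmlTrue$ for $Q_*=\{q\}$ and answer clause $\hmlObs{a}\hmlObs{b}\hmlTrue$ for $r$, one gets $\expr_4(\varphi)=1$, so after the conjunction answer and positive decision the attacker reaches $\attackerPos{p,\{r\}}$ with first component $1$ and cannot realize the depth-$2$ distinction. So your re-splitting --- promote the deepest positive clause to the revival --- is exactly the needed repair, and it mirrors what the paper does explicitly only in the proof of \refLem{lem:distinction-completeness}, where the revival set is chosen for a \emph{highest} positive clause.

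However, the step you label ``the crux'' and leave unchecked is a genuine gap, and it does not close by itself. After re-splitting, the demoted revival clause $\psi_*$ must be realized through conjunction answers, so you need $\expr(\psi_*)\in\attackerWin(\attackerPos{p,\{q\}})$ for every single $q\in Q_*$. The induction hypothesis for $\psi_*$ speaks only about the position $\attackerPos{p,Q_*}$ at which it was derived, and nothing in the lemma's induction lets you pass from a set position to its singleton sub-positions. To finish, you need an additional monotonicity lemma --- attacker winning budgets can only grow when the $Q$-component shrinks, proved by induction over winning strategy trees --- which the paper never states (it is invoked tacitly only in the proof of \refThm{thm:correctness-clever}); alternatively, you can bypass it by routing through \refLem{lem:distinction-soundness} and \refLem{lem:distinction-completeness}, which are proved independently of price soundness: $\psi_*$ distinguishes $p$ from each $q\in Q_*$, hence its price is winning at each $\attackerPos{p,\{q\}}$. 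Until one of these two arguments is supplied, the conjunction case of claim (ii) remains open in your proposal.
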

\begin{proof}
  By induction on the structure of $\varphi$ with arbitrary $p,Q,e$,
  exploiting the alignment of the definitions of winning budgets and formula prices.
  Full proof in~%
  \ifthenelse{\boolean{fullversion}}
    {the appendix on page~\pageref{prf:price-soundness}}
    {\cite{bisping2023equivalenceEnergyGamesReport}}.
\end{proof}

\begin{lemma}[Price completeness]
  \label{lem:price-completeness}
  $e_0 \in \attackerWin(\attackerPos{p_0,Q_0})$ implies there are elements in $\hmlStrategies(\attackerPos{p_0,Q_0}, e_0)$.
\end{lemma}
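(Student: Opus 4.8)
The plan is to induct on the attacker's winning strategy, viewed as a finite tree. First I would argue that every play consistent with an attacker-winning strategy is finite: infinite plays are won by the defender (\refDef{def:plays-wins}), so they cannot occur in such a strategy, and since the game is finitely branching on finite systems, K\"onig's lemma makes the strategy tree finite. (Concretely, plays cannot cycle, because along any cycle some energy component strictly decreases: observation moves lower dimension~1, conjunction challenges lower dimension~2, negative decisions lower dimension~6, and a short case analysis on the admissible position-type transitions $\attackerPos{p,Q}\to\defenderPos{p,Q',Q_*}\to\attackerPos[\land]{p,q}\to\attackerPos{p',Q'}$ shows every cycle passes through one of these.) Because the $\updMin{\ldots}$-updates never produce negative components (\refProp{prop:declining-energies}), the defender can never drive the energy below zero on its own move; hence the attacker wins a defender position iff \emph{all} its successor configurations are attacker-won, and an attacker position iff \emph{some} legal successor is. This inductive characterization of $\attackerWin$ is exactly what I induct over.

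On this finite tree I would construct a strategy formula by matching each move to the corresponding clause of the $\hmlStrategies$ definition, dualizing the reasoning of \refLem{lem:price-soundness}. The leaf is the stuck defender position $\defenderPos{p,\varnothing,\varnothing}$, which yields the empty conjunction $\hmlTrue$ through the third rule. At an attacker position $\attackerPos{p,Q}$ I take whichever winning move the fixpoint characterization provides --- an observation move (first rule, giving $\hmlObs{b}\varphi$) or a conjunction challenge (second rule) --- and apply the induction hypothesis to the successor. At a defender conjunction position the defender chooses, so I must account for \emph{every} answer: I collect the induction-hypothesis formulas $\psi_q$ for all $q\in Q$ into a conjunction (third/fourth rule), adding the revival conjunct $\psi_*$ obtained from $\attackerPos{p,Q_*}$ when $Q_*\neq\varnothing$.

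The hard part is the \emph{observation side-condition} built into the revival rule and both decision rules: the conjunct $\psi_*$ and the formulas emitted at clause positions $\attackerPos[\land]{p,q}$ must be observations $\hmlObs{b}\varphi$, whereas the induction hypothesis only supplies \emph{some} strategy formula, possibly a conjunction. I would discharge this by showing that whenever such a continuation is reached on a winning play, a winning \emph{observation} move is actually available at its target. The leverage is that all three continuations compare against a \emph{singleton} --- $\attackerPos{p,\{q\}}$, $\attackerPos{q,\{p\}}$, or $\attackerPos{p,Q_*}$ with the revival realizing the single deepest positive clause --- and a conjunction distinguishing a process from one fixed other process already has a single conjunct distinguishing them, which is itself one (possibly negated) observation. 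Feeding this back through the induction, I would show that at a winning clause position at least one decision lands on a target admitting a winning observation move (choosing the positive direction when $p$ can out-observe $q$ and paying dimension~6 for the negative direction otherwise), and that for a revival the attacker's free earlier choice of $Q_*$ in the conjunction challenge can be taken so its target is observed. I expect the genuine obstacle to be making this ``distinguishing-against-a-singleton collapses to one observing clause'' step precise and reconciling it with the energy bookkeeping, so that the selected observation move stays \emph{winning} for the post-update budget $e'$; the remaining five rules are then routine structural matching.
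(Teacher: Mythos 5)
Your overall skeleton coincides with the paper's: induct over the well-founded tree of plays consistent with the attacker's winning strategy, matching each move to a clause of $\hmlStrategies$. (Two small remarks on your finiteness argument: well-founded induction only needs every branch to be finite, which you already get from infinite plays being defender-won, so K\"onig's lemma is not needed; and your parenthetical claim that plays ``cannot cycle'' is false for infinite initial credits in $\energies_\infty$ --- a cyclic play is then simply an infinite, hence defender-won, play.) The problem is the step you yourself flag as the crux: the observation side-conditions in the revival and decision rules. There you offer no actual argument, and the route you sketch does not hold as stated. Your claim that ``whenever such a continuation is reached on a winning play, a winning observation move is actually available at its target'' is wrong in general: take $p$ simulated by $q$ but not trace-equivalent to it (say $p$ can only do $a$, while $q$ can do $a$ and also $a$ followed by $b$). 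The attacker wins $\attackerPos{p,\{q\}}$, but \emph{no} observation move from there is winning --- the only winning line is to re-challenge, let the defender answer, take the \emph{negative} decision to $\attackerPos{q,\{p\}}$, and observe on the swapped side. Likewise, your fallback claim that a conjunction distinguishing against a singleton ``has a single conjunct \ldots which is itself one (possibly negated) observation'' is not literally true (a conjunct may be a nested conjunction), and arguing at the level of distinguishing formulas here risks circularity with Lemmas~\ref{lem:distinction-soundness} and~\ref{lem:distinction-completeness}.

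What actually closes this gap --- and what the paper does --- is a global normalization of the \emph{strategy} before the induction starts: without loss of generality the attacker never plays a conjunction challenge immediately after a conjunction answer or revival, i.e.\ the attacker ``makes up their mind at the first conjunction of a sequence.'' This is justified by a strategy transformation whose soundness is exactly the energy bookkeeping you leave open: replacing, e.g., the detour $\attackerPos[\land]{p,q} \gameMove \attackerPos{p,\{q\}} \gameMove \defenderPos{p,\{q\},\varnothing} \gameMove \attackerPos[\land]{p,q} \gameMove \attackerPos{q,\{p\}}$ by the direct negative decision $\attackerPos[\land]{p,q} \gameMove \attackerPos{q,\{p\}}$ changes the arriving energy level from $(\min(e_1,e_4,e_5),\, e_2{-}1,\, e_3,\, \min(e_3,e_4),\, e_5,\, e_6{-}1)$ to the componentwise larger $(\min(e_1,e_5),\, e_2,\, e_3,\, e_4,\, e_5,\, e_6{-}1)$, so the remaining strategy still wins by upward closure of winning budgets (\refProp{prop:up-win-budgets}) together with monotonicity (\refProp{prop:declining-energies}). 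Under this normalization, the continuation after every decision and revival \emph{is} an observation move, so the first case of the induction hands you the required observation formula and the conjunction case assembles directly. In short: right plan and right diagnosis of where the difficulty sits, but the lemma's actual content --- discharging the observation side-conditions --- is missing, and the local claims you propose in its place are not correct.
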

\begin{proof}
  By induction on the tree of winning plays consistent with some attacker winning strategy
  implied by $e_0 \in \attackerWin(\attackerPos{p_0,Q_0})$.
  Full proof in~%
  \ifthenelse{\boolean{fullversion}}
    {the appendix on page~\pageref{prf:price-completeness}}
    {\cite{bisping2023equivalenceEnergyGamesReport}}.
\end{proof}

\begin{lemma}[Distinction soundness]
  \label{lem:distinction-soundness}
  Every $\varphi \in \hmlStrategies(\attackerPos{p,Q}, e)$ distinguishes $p$ from every $q \in Q$.
\end{lemma}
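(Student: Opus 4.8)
The plan is to prove a statement strengthened to cover all three position types simultaneously, by induction on the structure of the strategy formula (equivalently, on the derivation witnessing membership in $\hmlStrategies$). For attacker positions $\attackerPos{p,Q}$ the claim is as stated: $p \in \hmlSemantics{\varphi}{\system}{}$ and $q \notin \hmlSemantics{\varphi}{\system}{}$ for every $q \in Q$. For defender conjunction positions $\defenderPos{p,Q,Q_*}$ I would claim that the produced conjunction distinguishes $p$ from every $q \in Q \cup Q_*$. For attacker clause positions $\attackerPos[\land]{p,q}$ I would claim that the (possibly negative) clause distinguishes $p$ from $q$ in the appropriate sense: a positive clause $\varphi$ satisfies $p \in \hmlSemantics{\varphi}{\system}{}$ and $q \notin \hmlSemantics{\varphi}{\system}{}$, whereas a negative clause $\hmlNeg\varphi$ satisfies $p \notin \hmlSemantics{\varphi}{\system}{}$ and $q \in \hmlSemantics{\varphi}{\system}{}$, so that it behaves correctly inside the set-difference part of the conjunction semantics. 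The original lemma is exactly the attacker-position case.

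I would then run through the six defining clauses of $\hmlStrategies$. The observation case, $\hmlObs{b}\varphi \in \hmlStrategies(\attackerPos{p,Q}, e)$, is the heart of the argument: the positive direction $p \in \hmlSemantics{\hmlObs{b}\varphi}{\system}{}$ follows immediately from $p \step{b} p'$ together with the induction hypothesis $p' \in \hmlSemantics{\varphi}{\system}{}$; the negative direction uses that $Q \step{b} Q'$ collects \emph{all} $b$-successors of $Q$, so any hypothetical witness $q \step{b} q''$ with $q \in Q$ and $q'' \in \hmlSemantics{\varphi}{\system}{}$ would force $q'' \in Q'$, contradicting the hypothesis $q'' \notin \hmlSemantics{\varphi}{\system}{}$. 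The conjunction-challenge case merely reindexes $Q = (Q \setminus Q_*) \cup Q_*$ and appeals to the defender-position hypothesis.

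For the two defender-conjunction clauses I would verify the conjunction semantics of \refDef{def:hml} directly. Membership $p \in \hmlSemantics{\hmlAndS\{\dots\}}{\system}{}$ holds because $p$ satisfies every positive sub-clause and lies in none of the negated observations, which is precisely what the clause-position hypotheses (and, for the revival clause $\psi_*$, the attacker-position hypothesis applied to $\attackerPos{p,Q_*}$) provide. Non-membership of each $q \in Q$ follows because its dedicated clause $\psi_q$ is either a positive clause that $q$ fails or a negative clause $\hmlNeg\varphi_q$ with $q \in \hmlSemantics{\varphi_q}{\system}{}$, so $q$ is excised by the set-difference; in the revival clause, each $q_* \in Q_*$ additionally fails the positive observation $\psi_*$. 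The two clause-position cases are direct translations of the positive- and negative-decision moves into the satisfaction conditions above, using that $\hmlNeg$ flips membership.

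The main obstacle I anticipate is bookkeeping rather than depth: getting the generalized clause-position statement exactly right so that negative clauses---which are not stand-alone formulas---interface correctly with the set-difference clause of the conjunction semantics. The only genuinely load-bearing structural fact is the set-lifting of steps $Q \step{b} Q'$, which is what lets the universally quantified negative direction of the observation case go through.
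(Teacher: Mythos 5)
Your proposal is correct and follows essentially the same route as the paper's proof: structural induction on the formula, with the negative direction of the observation case resting on the set-lifting $Q \step{b} Q'$ collecting \emph{all} $b$-successors, and the conjunction case verified directly against the semantics with positive clauses failing for their dedicated $q$ and negative clauses $\hmlNeg\varphi_q$ holding at $q$ but not at $p$. The only difference is presentational: you make the invariants at defender and clause positions explicit as a strengthened mutual-induction statement, whereas the paper unfolds those intermediate positions inline within the conjunction case of a single induction on $\varphi$ --- the content is identical.
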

\begin{proof}
  By induction on the structure of $\varphi$ with arbitrary $p,Q,e$,
  exploiting that $\hmlStrategies$ can only construct formulas with the invariant that they are true for $p$ and false for each $q \in Q$.
  Full proof in~%
  \ifthenelse{\boolean{fullversion}}
    {the appendix on page~\pageref{prf:distinction-soundness}}
    {\cite{bisping2023equivalenceEnergyGamesReport}}.
\end{proof}

\begin{lemma}[Distinction completeness]
  \label{lem:distinction-completeness}
  If $\varphi$ distinguishes $p$ from every $q \in Q$, then $\expr(\varphi) \in \attackerWin(\attackerPos{p, Q})$.
\end{lemma}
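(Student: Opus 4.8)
The plan is to prove this by \emph{structural induction on $\varphi$}, universally quantified over $p$ and $Q$, in the mirror image of the price-soundness argument of \refLem{lem:price-soundness}. The engine driving the argument is upward-closure of winning budgets (\refProp{prop:up-win-budgets}): for each case I exhibit a concrete attacker move (or family of moves covering every defender reply) and show that, after the move's energy update, the remaining budget dominates the $\expr$-price of the continuation formula; the induction hypothesis then places that price in the attacker's winning budgets, and upward-closure lifts it to the actually-remaining budget. A useful preliminary is that $\expr_2(\psi)\geq 1$ for every formula (the grammar bottoms out at $\hmlTrue=\hmlAndS\varnothing$, whose price has second component $1$, and both other productions preserve or increase it). This lets me dispose of the degenerate case $Q=\varnothing$ at once: the attacker plays the conjunction challenge $\attackerPos{p,\varnothing}\gameMoveX{(0,-1,0,0,0,0)}\defenderPos{p,\varnothing,\varnothing}$, which is legal because $\expr_2(\varphi)\geq 1$, reaching a defender position with no outgoing move, so the attacker wins. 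Hence I may assume $Q\neq\varnothing$, which by \refDef{def:distinguishing-formula} gives $p\in\hmlSemantics{\varphi}{\system}{}$ and $q\notin\hmlSemantics{\varphi}{\system}{}$ for every $q\in Q$.

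The observation case $\varphi=\hmlObs{a}\varphi'$ is routine. Since $p\in\hmlSemantics{\varphi}{\system}{}$, there is $p\step{a}p'$ with $p'\in\hmlSemantics{\varphi'}{\system}{}$; letting $Q'$ be the $a$-successor set $Q\step{a}Q'$, no $q'\in Q'$ can satisfy $\varphi'$ (else its $a$-predecessor would satisfy $\hmlObs{a}\varphi'$), so $\varphi'$ distinguishes $p'$ from every $q'\in Q'$. The attacker plays the observation move $\attackerPos{p,Q}\gameMoveX{(-1,0,0,0,0,0)}\attackerPos{p',Q'}$; by \refDef{def:formula-prices} the update turns the budget $\expr(\varphi)$ into exactly $\expr(\varphi')$, which by the induction hypothesis lies in $\attackerWin(\attackerPos{p',Q'})$, so $\expr(\varphi)\in\attackerWin(\attackerPos{p,Q})$.

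The conjunction case $\varphi=\hmlAnd{i}{I}\psi_i$ is the heart of the proof. Let $\psi_r$ be a deepest positive conjunct (the $r\in\mathit{R}$ of \refDef{def:formula-prices}, empty if $\mathit{Pos}=\varnothing$) and set $Q_*\defEq\{q\in Q\mid q\notin\hmlSemantics{\psi_r}{\system}{}\}$, i.e.\ exactly the processes the deepest positive clause refutes; take $Q_*=\varnothing$ if there is no such $\psi_r$. The attacker opens with the conjunction challenge to $\defenderPos{p,Q\setminus Q_*,Q_*}$ (legal since $\expr_2(\varphi)\geq 1$) and must answer both defender options. If the defender \emph{revives} to $\attackerPos{p,Q_*}$ (possible only when $Q_*\neq\varnothing$), the attacker continues with $\psi_r$, which distinguishes $p$ from all of $Q_*$ by construction. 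If the defender \emph{answers} with some $q\in Q\setminus Q_*$, then $q$ satisfies $\psi_r$ and therefore fails some \emph{other} conjunct $\psi_{i(q)}$ with $i(q)\neq r$; at $\attackerPos[\land]{p,q}$ the attacker takes the positive decision to $\attackerPos{p,\{q\}}$ and continues with $\psi_{i(q)}$ if that conjunct is positive, or, if $\psi_{i(q)}=\hmlNeg\varphi'$, the negative decision to $\attackerPos{q,\{p\}}$ and continues with $\varphi'$ (which distinguishes $q$ from $p$, since $q\in\hmlSemantics{\varphi'}{\system}{}$ and $p\notin\hmlSemantics{\varphi'}{\system}{}$; note $p\neq q$, so the move exists). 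In each branch the induction hypothesis applied to the strictly smaller continuation formula, together with upward-closure, yields a win, and since the attacker beats every defender reply the challenged defender position is won.

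The main obstacle, and the real content, is the energy bookkeeping that makes the conjunction case go through, and it is exactly where the depth-guided repricing pays off. I must verify that the post-update budget dominates each continuation price. For the revival the $\updMin{1,3}$ update caps the first component at $\min(e_1,e_3)$, and since $\psi_r$ is a deepest positive clause $\expr_1(\psi_r)\leq\sup_{i\in\mathit{Pos}}\expr_1(\psi_i)\leq e_3$; for an answered positive $\psi_{i(q)}\in\mathit{Pos}\setminus\mathit{R}$ the composite $\updMin{3,4}$ then $\updMin{1,4}$ updates bound the surviving depth by $\min(e_1,e_3,e_4)$, which dominates $\expr_1(\psi_{i(q)})\leq\sup_{i\in\mathit{Pos}\setminus\mathit{R}}\expr_1(\psi_i)\leq e_4$; for an answered negative conjunct the $\updMin{1,5}$ update together with the $-1$ on the sixth component matches $\expr(\varphi')$ via $\expr_1(\psi_{i(q)})\leq\sup_{i\in\mathit{Neg}}\expr_1(\psi_i)\leq e_5$ and $e_6=\sup_i\expr_6(\psi_i)\geq 1$. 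The only component whose bound is not immediate is the fourth in the answered branches, where one needs $\expr_4(\psi)\leq\expr_3(\psi)$ for all $\psi$; this auxiliary monotonicity is a one-line induction on \refDef{def:formula-prices}. I expect the delicate point to be keeping the roles of $Q_*$ versus $Q\setminus Q_*$ aligned with dimensions $3$ and $4$ (so that the single distinguished clause $\mathit{R}$ is charged to dimension $3$ via the revival while all remaining positive clauses stay within dimension $4$), since this is precisely the correspondence that the special treatment of the deepest positive branch was designed to enable.
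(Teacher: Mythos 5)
Your proof is correct and follows essentially the same route as the paper's: structural induction on $\varphi$ with arbitrary $p,Q$, resolving the conjunction case by a challenge whose revival set $Q_*$ is determined by a deepest positive conjunct (charged to dimension 3), handling each answered $q$ by another positive or a negative conjunct (dimensions 4 and 5), and closing each branch with the induction hypothesis plus upward-closure of winning budgets. The differences are only presentational: the paper propagates prices backward through $\energyUpdateInv$ and compares the resulting supremum against $\expr(\varphi)$ at the end, whereas you push $\expr(\varphi)$ forward through the updates and check domination per branch, additionally making explicit the degenerate case $Q=\varnothing$ and the auxiliary monotonicity $\expr_4(\psi)\leq\expr_3(\psi)$, both of which the paper leaves implicit.
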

\begin{proof}
  By induction on the structure of $\varphi$ with arbitrary $p,Q$,
  exploiting the alignment of game structure and HML semantics
  and the fact that $\expr$ cannot ``overtake'' inverse updates.
  Full proof in~%
  \ifthenelse{\boolean{fullversion}}
    {the appendix on page~\pageref{prf:distinction-completeness}}
    {\cite{bisping2023equivalenceEnergyGamesReport}}.
\end{proof}

\begin{theorem}[Correctness]
  \label{thm:correctness}
  For any equivalence $X$ with coordinate $e_X$,
  $p \bPreord{X} q$, precisely if
  all $e_{pq} \in \attackerWinMin(\attackerPos{p, \{q\}})$ are above or incomparable, $e_{pq} \not \leq e_X$.
\end{theorem}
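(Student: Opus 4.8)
The plan is to read the theorem not as a fresh inductive argument but as the assembly of the four distinction/price lemmas together with \refProp{prop:language-prices} and the well-partial-order structure of energies. The target biconditional is $p \bPreord{X} q$ iff every $e_{pq} \in \attackerWinMin(\attackerPos{p,\{q\}})$ satisfies $e_{pq} \not\leq e_X$. By \refProp{prop:language-prices}, the left-hand side is equivalent to the statement that no formula $\varphi$ with $\expr(\varphi) \leq e_X$ distinguishes $p$ from $q$. So it suffices to show that such a distinguishing formula exists exactly when the attacker has a winning budget below $e_X$, and then to replace ``a winning budget'' by ``a minimal winning budget.''

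First I would establish the central bridge: there is a $\varphi$ distinguishing $p$ from $q$ with $\expr(\varphi) \leq e_X$ if and only if there is an $e \in \attackerWin(\attackerPos{p,\{q\}})$ with $e \leq e_X$. For the forward direction, \refLem{lem:distinction-completeness} immediately gives $\expr(\varphi) \in \attackerWin(\attackerPos{p,\{q\}})$, and $\expr(\varphi) \leq e_X$ supplies the witness. For the backward direction, given such an $e$, \refLem{lem:price-completeness} yields a strategy formula $\varphi \in \hmlStrategies(\attackerPos{p,\{q\}}, e)$; \refLem{lem:price-soundness} bounds $\expr(\varphi) \leq e \leq e_X$; and \refLem{lem:distinction-soundness} certifies that $\varphi$ distinguishes $p$ from every process in $\{q\}$. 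This is the core of the argument and the place where all four lemmas are consumed at once.

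Next I would pass from arbitrary winning budgets to minimal ones. Since $\attackerWin(\attackerPos{p,\{q\}})$ is upward-closed (\refProp{prop:up-win-budgets}) and the energies are well-partially-ordered, every $e \in \attackerWin(\attackerPos{p,\{q\}})$ dominates some $e' \in \attackerWinMin(\attackerPos{p,\{q\}}) = \Min(\attackerWin(\attackerPos{p,\{q\}}))$. Hence ``$\exists e \in \attackerWin(\attackerPos{p,\{q\}})$ with $e \leq e_X$'' is equivalent to ``$\exists e_{pq} \in \attackerWinMin(\attackerPos{p,\{q\}})$ with $e_{pq} \leq e_X$'': the nontrivial direction uses $e' \leq e \leq e_X$, the reverse is immediate from $\attackerWinMin \subseteq \attackerWin$. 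Substituting the two positive equivalences under the negation supplied by \refProp{prop:language-prices} and rewriting the resulting $\neg\exists$ as a universal quantifier turns ``no winning budget lies below $e_X$'' into ``every minimal winning budget is incomparable to or above $e_X$,'' which is exactly the claimed right-hand side.

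I expect the genuinely hard work to sit entirely in the already-deferred lemmas, so that the proof of the theorem itself is a quantifier-bookkeeping exercise. The one step demanding care is the reduction to $\attackerWinMin$: it hinges on the fact that an upward-closed subset of a well-partial order is generated by its finite antichain of minimal elements, so that testing whether ``some element is $\leq e_X$'' may be carried out against the minimal generators alone. Everything else is a direct substitution of the four lemmas into \refProp{prop:language-prices}.
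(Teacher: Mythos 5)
Your proposal is correct and matches the paper's own proof essentially step for step: the paper likewise combines \refProp{prop:language-prices} with the four lemmas --- \refLem{lem:distinction-completeness} for one direction, and \refLem{lem:price-completeness}, \refLem{lem:price-soundness}, \refLem{lem:distinction-soundness} for the other --- and handles the passage to $\attackerWinMin$ by noting that any winning budget dominates a minimal one. The only difference is presentational: the paper argues by contraposition in both directions, whereas you organize the same content as a chain of biconditionals with explicit quantifier bookkeeping.
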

\begin{proof}
  By contraposition, in both directions.
  \begin{itemize}
    \item Assume $p \not\bPreord{\mathit X} q$.
      This means some $\varphi$ with $\expr(\varphi) \leq e_X$ distinguishes $p$ from $q$.
      By \refLem{lem:distinction-completeness}, $\expr(\varphi) \in \attackerWin(\attackerPos{p, \{q\}})$.
      Then either $\expr(\varphi)$ or a lower price $e_{pq} \leq \expr(\varphi)$ are minimal winning budgets,
      \ie, $e_{pq} \in \attackerWinMin(\attackerPos{p, \{q\}})$ and $e_{pq} \leq e_X$.
    \item Assume there is $e_{pq} \in \attackerWinMin(\attackerPos{p, \{q\}})$ with $e_{pq} \leq e_X$.
      By \refLem{lem:price-completeness}, there is $\varphi \in \hmlStrategies(\attackerPos{p,\{q\}}, e_{pq})$.
      Due to \refLem{lem:distinction-soundness}, $\varphi$ must be distinguishing for $p$ and $q$,
      and due to \refLem{lem:price-soundness}, $\expr(\varphi) \leq e_{pq} \leq e_X$.
  \end{itemize}
\end{proof}

\noindent
The theorem basically means that by fixing an initial budget in $\gameSpectroscopy$, we can obtain a characteristic game for any notion from the spectrum.

\subsection{Becoming More Clever by Looking One Step Ahead}
\label{subsec:more-clever}

The spectroscopy energy game $\gameSpectroscopy$ of \refDef{def:spectroscopy-game} may branch exponentially with respect to $\relSize{Q}$ at conjunction challenges after $\attackerPos{p,Q}$.
For the spectrum we are interested in, we can drastically limit the sensible attacker moves to four options by a little lookahead into the enabled actions $\initials(q)$ of $q \in Q$ and $\initials(p)$.

\begin{definition}[Clever spectroscopy game]
  The \emph{clever spectroscopy game}, $\gameSpectroscopyClever$, is defined exactly like the previous spectroscopy energy game of \refDef{def:spectroscopy-game} with the restriction of the conjunction challenges
  \[
    \attackerPos{p,Q}
    \quad \gameMoveX{(0,-1,0,0,0,0)}_\blacktriangle
    \quad \defenderPos{p,Q \setminus Q_*, Q_*}
    \quad \text{with } Q_* \subseteq Q,
  \]
  to situations where
  $\; Q_* \in \{ \; \varnothing,\quad
               \{ q \in Q \mid \initials(q) \subseteq \initials(p) \},\quad
               \{ q \in Q \mid \initials(p) \subseteq \initials(q) \},\allowbreak\quad
               \{ q \in Q \mid \initials(p) = \initials(q) \} \; \}.$
\end{definition}

\begin{theorem}[Correctness of cleverness]
  \label{thm:correctness-clever}
  Assume modal depth of positive clauses $e_4 \in \{0, 1, \infty\}$, $e_4 \leq e_3$, and that modal depth of negative clauses $e_5 > 1$ implies $e_3 = e_4$. Then, the attacker wins $\gameSpectroscopyClever[\attackerPos{p_0,Q_0}, e]$ precisely if they win $\gameSpectroscopy[\attackerPos{p_0,Q_0}, e]$.
\end{theorem}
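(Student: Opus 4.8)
The plan is to prove the ``precisely if'' as two separate implications about attacker winning strategies. The easy direction is that an attacker win in $\gameSpectroscopyClever[\attackerPos{p_0,Q_0},e]$ transfers to $\gameSpectroscopy[\attackerPos{p_0,Q_0},e]$: since $\gameSpectroscopyClever$ is obtained from $\gameSpectroscopy$ only by \emph{deleting} conjunction-challenge moves (all other attacker moves and the entire defender repertoire are untouched), any attacker strategy in $\gameSpectroscopyClever$ is already a legal attacker strategy in $\gameSpectroscopy$ inducing exactly the same plays, hence still winning. The substance is the converse.

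For the converse I would take an attacker winning strategy in $\gameSpectroscopy$ and rewrite it challenge-by-challenge into one using only clever conjunction challenges. The sole non-clever moves are $\attackerPos{p,Q}\gameMoveX{(0,-1,0,0,0,0)}\defenderPos{p,Q\setminus Q_*,Q_*}$ with an arbitrary revival set $Q_*$, so the central claim to establish is: whenever the attacker wins such a position with the post-challenge budget $e'$, then one of the four clever sets $Q_*'$ is also winning for $e'$. Two facts make re-partitioning sound. First, winning the revival branch $\attackerPos{p,Q_*}$ is antitone in the set, because a single observation distinguishing $p$ from all of $Q_*$ also distinguishes it from any subset. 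Second, the answer branches $\attackerPos[\land]{p,q}$ are local to the pair $(p,q)$ and independent across $q$. Since $\defenderPos{p,Q\setminus Q_*,Q_*}$ is a defender position, an attacker win there decomposes into winning the revival for $Q_*$ together with winning an answer for every $q\in Q\setminus Q_*$, and I may freely re-choose the split as long as the revival still covers its part and each answer part stays winnable.

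The decisive observation is that, under the hypotheses, whether a single $q$ is \emph{answerable} (winnable at $\attackerPos[\land]{p,q}$ without using the one deep positive clause) is governed purely by $\initials(p)$ versus $\initials(q)$. Entering an answer applies $\updMin{3,4}$ to the fourth component, and because $e_4\le e_3$ this leaves the full $e_4$-budget intact; with $e_4=1$ a flat clause $\hmlObs{a}\hmlTrue$ answers exactly the $q$ with $\initials(p)\not\subseteq\initials(q)$, while with $e_4=0$ no positive answer survives. Dually, a shallow negative clause $\hmlNeg\hmlObs{b}\hmlTrue$, available iff $e_5\ge 1$ and $e_6\ge 1$, answers exactly the $q$ with $\initials(q)\not\subseteq\initials(p)$; the hypothesis $e_5>1\Rightarrow e_3=e_4$ ensures that genuinely deep negative clauses only occur when revival and other positives already coincide, so I never have to analyse deep negatives in the deep-revival regime ($e_3>e_4$, where consequently $e_5\le 1$ and every non-revival clause is shallow). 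Taking $Q_*'$ to be the set of \emph{non}-answerable processes then yields precisely a clever set: $\{q\mid\initials(p)=\initials(q)\}$ when both flat positives and shallow negatives are usable, $\{q\mid\initials(p)\subseteq\initials(q)\}$ when only flat positives are, $\{q\mid\initials(q)\subseteq\initials(p)\}$ when only shallow negatives are, and $\varnothing$ when $e_3=e_4$ (so any would-be revival clause, being positive of depth $\le e_3=e_4$, fits as an other-positive answer). As every non-answerable $q$ must already have been handled by the deep positive clause in the original strategy, $Q_*'$ is contained in the original revival set, so by antitonicity the revival branch stays won, and the freshly built flat/shallow answers for $Q\setminus Q_*'$ fit the budget by the component bookkeeping above.

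The one configuration escaping a clever set is $e_4=0$ with $e_3>0$ and no usable negatives ($e_5=0$ or $e_6=0$): here nothing is answerable, so such a conjunction could only be won with $Q_*=Q$ and no answers. I would dispatch this by noting that a conjunction whose only surviving clause is the revival is redundant: the attacker drops the challenge and plays the revival's leading observation directly from $\attackerPos{p,Q}$, an unrestricted and strictly cheaper move (no second-component cost, no capping of the first component at $e_3$). Iterating the rewrite over all conjunction challenges, by induction on the tree of winning plays consistent with the attacker strategy (well-founded since infinite plays are won by the defender, \cf~\refProp{prop:declining-energies}), yields a fully clever winning strategy. I expect the main obstacle to be exactly the third paragraph: verifying that the non-answerable set coincides with a clever set in each regime, and that reusing the old revival together with the reconstructed shallow answers never exceeds the original budget $e$ — pinpointing that $e_4\le e_3$ is what makes the $\updMin{3,4}$-update harmless and that the privileged deepest positive branch is what necessitates all three non-trivial initials-based sets.
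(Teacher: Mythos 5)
Your proposal is correct and follows essentially the same route as the paper's own proof: the trivial subset argument for one direction, and for the converse a per-challenge simulation that picks the clever set per energy regime ($Q'=\varnothing$ when $e_3=e_4$; the three initials-based sets for the $e_4\in\{0,1\}$, $e_5\in\{0,1\}$ combinations), establishes $Q'\subseteq Q_*$ via one-step (negated) observations exactly as the paper does by contraposition, uses antitonicity of the revival in the set, and dismisses the degenerate all-revival conjunction as a redundant detour the attacker can skip. Your explicit handling of $e_6=0$ and the ``answerability'' framing are slightly more careful restatements of the paper's case analysis, not a different argument.
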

\begin{proof}
  The implication from the clever spectroscopy game $\gameSpectroscopyClever$ to the full spectroscopy game $\gameSpectroscopy$ is trivial as the attacker moves in ${\gameMove_\blacktriangle}$ are a subset of those in ${\gameMove_\triangle}$ and the defender has the same moves in both games.
  For the other direction, we have to show that any move $\attackerPos{p,Q} \gameMoveX{(0,-1,0,0,0,0)}_\triangle\defenderPos{p, Q \setminus Q_*, Q_*}$ winning at energy level $e$ can be simulated by a winning move $\attackerPos{p,Q} \gameMoveX{(0,-1,0,0,0,0)}_\blacktriangle\defenderPos{p, Q \setminus Q', Q'}$.
  Full proof in~%
  \ifthenelse{\boolean{fullversion}}
    {the appendix on page~\pageref{prf:correctness-clever}}
    {\cite{bisping2023equivalenceEnergyGamesReport}}.
\end{proof}

\section{Computing Equivalences}
\label{sec:computing-equivalences}

The previous section has shown that attacker winning budgets in the spectroscopy energy game characterize distinguishable processes and, dually, that the defender's wins characterize equivalences.
We now examine how to actually compute the winning budgets of both players.

\subsection{Computation of Attacker Winning Budgets}
\label{subsec:computation-winning}

The winning budgets of the attacker (\refDef{def:strategies}) are characterized inductively:

\begin{itemize}
  \item Where the defender is stuck, $g \in G_\defenderSubscript$ and $g \centernot{\gameMove}$,
  the attacker wins with any budget, $(0,0,0,0,0,0) \in \attackerWinMin(g)$.
  \item Where the defender has moves, $g \in G_\defenderSubscript$ and $g \gameMoveX{u_i} g'_i$ (for some indexing $i \in I$ over all possible moves),
   the attacker wins if they have a budget equal or above to all budgets that might be necessary after the defender's move:
   If $\energyUpdate(e, u_i) \in \attackerWin(g'_i)$ for all $i \in I$, then $e \in \attackerWin(g)$.
  \item Where the attacker moves, $g \in G_\attackerSubscript$ and $g \gameMoveX{u} g'$,
  $\energyUpdate(e, u) \in \attackerWin(g')$ implies $e \in \attackerWin(g)$.
\end{itemize}

\noindent
By \refProp{prop:up-win-budgets}, it suffices to find the finite set of minimal winning budgets, $\attackerWinMin$.
Turning this into a computation is not as straightforward as in other energy game models.
Due to the $\mathtt{min}_D$-updates, the energy update function $\energyUpdate(\cdot, u)$ is neither injective nor surjective.

We must choose an inversion function $\energyUpdateInv$ that picks minimal solutions and that minimally ``casts up'' inputs that are outside the image of $\energyUpdate(\cdot, u)$,
i.e., such that $\energyUpdateInv(e', u) = \inf \{ e \mid e' \leq \energyUpdate(e, u) \}$.
We compute it as follows:

\begin{definition}[Inverse update]
  \label{def:inverse-update}
  The \emph{inverse update} function is defined as
  $\energyUpdateInv(e', u) \mathrel{\defEq}
    \sup (\{ e \} \cup \{ m(i) \mid
      \exists D \ldotp u_i = \mathtt{min}_D \})$
  with $e_i = e'_i - u_i$ for all $i$ where $u_i \in \{0,-1\}$
    and $e_i = e'_i$ otherwise,
  and with $(m(i))_j = e'_i$ for $u_i = \mathtt{min}_D$ and $j \in D$,
    and $(m(i))_j = 0$ otherwise, for all $i, j$.
\end{definition}

\begin{example}
  Let $u \defEq (\updMin{1,3},\updMin{1,2},-1,-1)$. $(3,4,0,1) \notin \mathrm{img}(\energyUpdate(\cdot, u))$, but:
  \begin{align*}
    \smash{\energyUpdateInv((3,4,0,1), u)} & = \sup \{ (3,4,1,2), (3,0,3,0), (4,4,0,0) \} = (4,4,3,2) \\
    \smash{\energyUpdate((4,4,3,2), u)} & = (3,4,2,1) \geq (3,4,0,1)\\
    \smash{\energyUpdateInv((3,4,2,1), u)} & = \sup \{ (3,4,3,2), (3,0,3,0), (4,4,0,0) \} = (4,4,3,2)
  \end{align*}
\end{example}

\begin{algorithm}[t]
  \Fn{$\varname{compute\_winning\_budgets}(\game=(G,G_\defenderSubscript,\gameMove,w))$}{

    $\varname{attacker\_win}\, := [g\mapsto \{\} \mid g\in G]$

    $\varname{todo}\, := \{g \in G_\defenderSubscript \mid g \not \gameMoveX{\cdot} \}$

    \While{$\varname{todo} \neq \varnothing $}{

      $\varname{g} := \KwSty{some}\; \varname{todo}$

      $\varname{todo} := \varname{todo} \setminus \{ \varname{g} \}$

      \eIf{$\varname{g} \in G_\attackerSubscript$}{

        $\varname{new\_attacker\_win} :=
          \Min ( \varname{attacker\_win}[\varname{g}] \cup \{\energyUpdateInv(\mathit{e'}, u) \mid {\varname{g} \gameMoveX{u} g'} \land \mathit{e'} \in \varname{attacker\_win}[g']\})$
        \label{code:attacker-pos-update}

      } {

        $\varname{defender\_post} := \{g' \mid \varname{g} \gameMoveX{u} g'\}$

        $\varname{options} :=
          \{(g', \energyUpdateInv(\mathit{e'}, u)) \mid
            {\varname{g} \gameMoveX{u} g'} \land {\mathit{e'} \in \varname{attacker\_win}[g']}\}\}$

        \eIf{$\varname{defender\_post} \subseteq \domain(\varname{options})$ \label{code:option-covering}}{

          $\varname{new\_attacker\_win} :=
            \Min(
              \{\sup_{g' \in \varname{defender\_post}} \mathit{strat}(g') \mid
                {\mathit{strat} \in (G \rightarrow \energies)} \land
                \forall g'\! \ldotp \mathit{strat}(g') \in \varname{options}(g') \})$
          \label{code:defender-pos-update}

        } {

          $\varname{new\_attacker\_win} := \varnothing$

        }

      }

      \If{$\varname{new\_attacker\_win} \neq \varname{attacker\_win}[\varname{g}]$}{

        $\varname{attacker\_win}[\varname{g}] := \varname{new\_attacker\_win}$

        $\varname{todo} := \varname{todo} \cup \{ g_p \mid \exists u \ldotp g_p \gameMoveX{u} \varname{g} \}$

      }

    }

    $\attackerWinMin := \varname{attacker\_win}$

    \KwRet{$\attackerWinMin$}

  }
  \caption{\label{alg:game-algorithm}Algorithm for computing attacker winning budgets of declining energy game $\game$.}
\end{algorithm}

\noindent
With $\energyUpdateInv\!$, we only need to find the $\attackerWinMin$ relation as a least fixed point of the inductive description.
This is done by \refAlgo{alg:game-algorithm}.
Every time a new way of winning a position for the attacker is discovered, this position is added to the $\varname{todo}$.
Initially, these are the positions where the defender is stuck.
The update at an attacker position in \refLine{code:attacker-pos-update} takes the inversely updated budgets ($\energyUpdateInv$) of successor positions to be tentative attacker winning budgets.
At a defender position, the attacker only wins if they have winning budgets for all follow-up positions (\refLine{code:option-covering}).
Any supremum of such budgets covering all follow-ups will be winning for the attacker (\refLine{code:defender-pos-update}).
At both updates, we only select the minima as a finite representation of the infinitely many attacker budgets.

\subsection{Complexity and How to Flatten It}
\label{subsec:complexity}

For finite games, \refAlgo{alg:game-algorithm} is sure to terminate in exponential time of game graph branching degree and dimensionality.

\begin{lemma}[Winning budget complexity, proof see
  \ifthenelse{\boolean{fullversion}}{page~\pageref{prf:complexity}}{\cite{bisping2023equivalenceEnergyGamesReport}}]
  \label{lem:complexity}
  For an $N$-dim\-en\-sio\-nal declining energy game with $\gameMove$ of branching degree $o$, \refAlgo{alg:game-algorithm} terminates in $\bigo(\relSize{\gameMove} \cdot \relSize{G}^N \cdot (o + \relSize{G}^{(N - 1) \cdot o}))$ time, using $\bigo(\relSize{G}^{N})$ space for the output.
\end{lemma}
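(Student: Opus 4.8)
The plan is to treat \refAlgo{alg:game-algorithm} as a least-fixed-point worklist computation over antichains of winning budgets and to bound, in turn, (i)~the magnitude of budget components, (ii)~the size of each antichain $\attackerWinMin(g)$, (iii)~how often each position can be recomputed, and (iv)~the cost of a single recomputation. The product of (iii) and (iv), summed over the positions via the worklist, will yield the time bound, whereas (i) and (ii) give the space bound directly.

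First I would bound the occurring energy values. By \refProp{prop:declining-energies}, energies only decline, and every observation move decrements the corresponding component by exactly one while $\updMin{\dots}$-updates never increase a component. Hence a \emph{minimal} winning budget is witnessed by a finite winning attack that visits no position twice along any branch---revisiting a position could only waste energy and thus never lowers the required budget. Consequently each branch has length below $\relSize{G}$, so every component of every element of $\attackerWinMin(g)$ lies in $\{0,\ldots,\relSize{G}\}$, and all relevant budgets live in the finite grid $\{0,\ldots,\relSize{G}\}^N \subseteq \energies$. An antichain in this grid has size $\bigo(\relSize{G}^{N-1})$ (fixing the first $N-1$ coordinates leaves the last essentially determined). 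Storing one such antichain per position gives the output-space bound $\bigo(\relSize{G}\cdot\relSize{G}^{N-1}) = \bigo(\relSize{G}^{N})$.

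Next I would count recomputations. The downward-closed set of winning budgets represented by $\varname{attacker\_win}[g]$ grows strictly every time the stored value changes, and a strictly increasing chain of downward-closed subsets of the grid $\{0,\ldots,\relSize{G}\}^N$ has length at most $\bigo(\relSize{G}^{N})$; hence each position is updated $\bigo(\relSize{G}^{N})$ times. Since a position is re-queued onto $\varname{todo}$ only when one of its successors changes, and the in-degrees of all positions sum to $\relSize{\gameMove}$, the total number of while-iterations is $\bigo(\relSize{\gameMove}\cdot\relSize{G}^{N})$. For the per-iteration cost I would split the two branches of the loop: at an attacker position (\refLine{code:attacker-pos-update}) the work is dominated by iterating over the at most~$o$ outgoing moves and applying $\energyUpdateInv$, contributing the additive $o$; at a defender position (Lines~\ref{code:option-covering}--\ref{code:defender-pos-update}) the attacker must cover \emph{all} successors at once, forcing an enumeration of the Cartesian product of the per-successor option antichains, i.e.\ $\bigo((\relSize{G}^{N-1})^{o}) = \bigo(\relSize{G}^{(N-1)\cdot o})$ suprema. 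Multiplying the iteration count by this per-iteration cost yields $\bigo(\relSize{\gameMove}\cdot\relSize{G}^{N}\cdot(o+\relSize{G}^{(N-1)\cdot o}))$.

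The main obstacle is the defender step: because the attacker wins a defender position only by simultaneously answering every defender move, one cannot avoid enumerating combinations of winning budgets across the up to~$o$ successors, which is precisely the source of the $\relSize{G}^{(N-1)\cdot o}$ factor. Establishing that this enumeration is both \emph{sufficient} (every minimal covering budget arises as some $\sup$ of one choice per successor) and \emph{sound} relies on the defining property of $\energyUpdateInv$ from \refDef{def:inverse-update}, namely that it returns $\inf\{e \mid e' \leq \energyUpdate(e,u)\}$ even though $\energyUpdate(\cdot,u)$ is neither injective nor surjective under the $\updMin{\dots}$-operations. A secondary subtlety is the ``no repeated position'' argument underpinning the value bound $\{0,\ldots,\relSize{G}\}$, since the game graph itself may contain cycles; here the declining property of \refProp{prop:declining-energies} is exactly what licenses pruning repeated positions from a cheapest attack.
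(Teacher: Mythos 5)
Your proposal is correct and follows essentially the same route as the paper's own proof: bounding budget components by simple-path length to confine everything to the grid $\{0,\ldots,\relSize{G}\}^N$, the $\relSize{G}^{N-1}$ antichain bound for space, a grid-size bound on proper updates per position propagated through the $\relSize{\gameMove}$ moves for the iteration count, and the split between the $\bigo(o)$ attacker-side cost and the $\relSize{G}^{(N-1)\cdot o}$ Cartesian-product enumeration at defender positions. The only slip is cosmetic: the sets represented by $\varname{attacker\_win}[g]$ are \emph{upward}-closed (\refProp{prop:up-win-budgets}), not downward-closed, though your chain-length argument is unaffected since any strictly growing chain of subsets of the grid has length at most the grid's cardinality.
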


\begin{lemma}[Full spectroscopy complexity]
  \label{lem:spectroscopy-complexity}
  Time complexity of computing winning budgets for the full spectroscopy energy game $\gameSpectroscopy$ is in $2^{\bigo(\relSize{\proc} \cdot 2^{\relSize{\proc}})}$.
\end{lemma}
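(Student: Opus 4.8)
The plan is to read off the three parameters of the general bound in \refLem{lem:complexity} --- dimensionality $N$, number of positions $\relSize{G}$, and branching degree $o$ --- for the concrete game $\gameSpectroscopy$, and then substitute. The dimensionality is immediate: $N = 6$ by \refDef{def:spectroscopy-game}. For the position count I would enumerate the three kinds of positions: attacker positions $\attackerPos{p,Q}$ over $p\in\proc$, $Q\in\powerSet{\proc}$ number $\relSize{\proc}\cdot 2^{\relSize{\proc}}$; defender positions $\defenderPos{p,Q,Q_*}$ add a second subset, at most $\relSize{\proc}\cdot(2^{\relSize{\proc}})^2$; and clause positions $\attackerPos[\land]{p,q}$ only $\relSize{\proc}^2$. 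Thus $\relSize{G} = 2^{\bigo(\relSize{\proc})}$, and $\relSize{\gameMove}\le\relSize{G}\cdot o$ is no worse.

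The crucial parameter is the branching degree $o$. I would observe that the largest out-degree occurs at conjunction challenges $\attackerPos{p,Q}\gameMove\defenderPos{p,Q\setminus Q_*,Q_*}$, which range over all $Q_*\subseteq Q$ and hence offer up to $2^{\relSize{\proc}}$ successors, whereas observation moves, conjunction answers and revivals, and positive/negative decisions each branch only polynomially in $\relSize{\proc}$ and $\relSize{\actions}$. Therefore $o = \bigo(2^{\relSize{\proc}})$ --- the branching is itself exponential in the number of processes.

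Substituting into $\bigo(\relSize{\gameMove}\cdot\relSize{G}^N\cdot(o+\relSize{G}^{(N-1)\cdot o}))$ then closes the argument. The factors $\relSize{\gameMove}$, $\relSize{G}^N$ (with the constant $N = 6$), and $o$ are each only singly exponential, $2^{\bigo(\relSize{\proc})}$. The term $\relSize{G}^{(N-1)\cdot o}$ dominates: its logarithm is $(N-1)\cdot o\cdot\log_2\relSize{G} = 5\cdot\bigo(2^{\relSize{\proc}})\cdot\bigo(\relSize{\proc}) = \bigo(\relSize{\proc}\cdot 2^{\relSize{\proc}})$, so the term equals $2^{\bigo(\relSize{\proc}\cdot 2^{\relSize{\proc}})}$ and swallows every other factor, giving the claimed $2^{\bigo(\relSize{\proc}\cdot 2^{\relSize{\proc}})}$.

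The only delicate point --- hence the main obstacle --- is the branching degree: one must recognize that $o$ is exponential rather than polynomial, owing to the subset-quantified conjunction challenges, and that this exponential $o$ appears inside the exponent $(N-1)\cdot o$ of the general bound. It is precisely the product of this exponential branching with $\log\relSize{G} = \bigo(\relSize{\proc})$ that lifts a singly-exponential position count to a doubly-exponential running time; this is also exactly the blow-up that \refSubsec{subsec:more-clever} avoids by capping the conjunction challenges at four options.
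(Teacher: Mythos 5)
Your proposal is correct and follows essentially the same route as the paper: fix $N=6$, bound $\relSize{G}$, $o$, and $\relSize{\gameMove}$ for $\gameSpectroscopy$, substitute into \refLem{lem:complexity}, and observe that the dominant term $\relSize{G}^{(N-1)\cdot o}$ with $o \in \bigo(2^{\relSize{\proc}})$ and $\log \relSize{G} \in \bigo(\relSize{\proc})$ gives $2^{\bigo(\relSize{\proc}\cdot 2^{\relSize{\proc}})}$. Your intermediate bounds are slightly coarser than the paper's (you count defender positions as arbitrary pairs of subsets, $\relSize{\proc}\cdot 4^{\relSize{\proc}}$, where the paper exploits disjointness of $Q$ and $Q_*$ to get $\relSize{\proc}\cdot 3^{\relSize{\proc}}$, and you bound $\relSize{\gameMove}\leq\relSize{G}\cdot o$ rather than tracking $\relSize{\step{\cdot}}$ separately), but all of these remain $2^{\bigo(\relSize{\proc})}$, so the conclusion is unaffected.
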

\begin{proof}
  Out-degrees $o$ in $\gameSpectroscopy$ can be bounded in $\bigo(2^{\relSize{\proc}})$,
  the whole game graph $\relSize{\gameMove_\triangle} \in \bigo( \relSize{\step{\cdot}} \cdot 2^{\relSize{\proc}} + \relSize{\proc}^2 \cdot 3^{\relSize{\proc}})$,
  and game positions $\relSize{G_\triangle} \in \bigo(\relSize{\proc} \cdot 3^{\relSize{\proc}})$.
  Insert with $N=6$ in \refLem{lem:complexity}.
  Full proof in~%
  \ifthenelse{\boolean{fullversion}}
    {the appendix on page~\pageref{prf:spectroscopy-complexity}}
    {\cite{bisping2023equivalenceEnergyGamesReport}}.
\end{proof}

\noindent
We thus have established the approach to be double-exponential.
The complexity of the previous spectroscopy algorithm~\cite{bjn2022decidingAllBehavioralEqs} has not been calculated.
One must presume it to be equal or higher as the game graph has Bell-numbered branching degree and as the algorithm computes formulas, which entails more options than the direct computation of energies.
This is what lies behind the introduction's observation that moderate nondeterminism already renders \cite{bjn2022decidingAllBehavioralEqs} unusable.

Our present energy game reformulation allows us to use two ingredients to do way better than double-exponentially when focussing on the common linear-time--branching-time spectrum:

First, \refSubsec{subsec:more-clever} has established that most of the partitionings in attacker conjunction moves can be disregarded by looking at the initial actions of processes.

Second, Fahrenberg et al.~\cite{fjls2011energyGamesMulti} have shown that considering just ``capped'' energies in a grid $\energies_K = \{0,\ldots,K\}^N$ can reduce complexity.
Such a \emph{flattening of the lattice} turns the space of possible energies into constant factor $(K+1)^N$ (with $(K+1)^{N-1}$-sized antichains) independent of input size.
For \refAlgo{alg:game-algorithm}, space complexity needed for $\varname{attacker\_win}$ drops to $\bigo(\relSize{G})$ and
time complexity to $\relSize{\gameMove} \cdot 2^{\bigo(o)}$.
If we are only interested in finitely many notions of equivalence as in the case of \refFig{fig:ltbt-spectrum}, we can always bound the energies to range to the maximal appearing number plus one.
The last number represents all numbers outside the bound up to infinity.

\begin{lemma}[Clever spectroscopy complexity]
  \label{lem:spectroscopy-complexity-clever}
  Time complexity of computing winning budgets for the clever spectroscopy energy game $\gameSpectroscopyClever$ with capped energies is in $2^{\bigo(\relSize{\proc})}$.
\end{lemma}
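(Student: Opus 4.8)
The plan is to apply the capped-energy refinement of \refLem{lem:complexity} discussed just above, where flattening the lattice to $\energies_K$ brings the running time down to $\relSize{\gameMove_\blacktriangle} \cdot 2^{\bigo(o)}$, with $o$ the branching degree of the game and with $N = 6$ and $K$ both constant (since \refFig{fig:ltbt-spectrum} fixes finitely many notions, $K$ may be chosen as the largest finite coordinate plus one). It therefore suffices to show that for the clever game $\gameSpectroscopyClever$ the branching degree is $o \in \bigo(\relSize{\proc})$ and the number of moves is $\relSize{\gameMove_\blacktriangle} \in 2^{\bigo(\relSize{\proc})}$; then $2^{\bigo(o)} = 2^{\bigo(\relSize{\proc})}$ and the product stays in $2^{\bigo(\relSize{\proc})}$.

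First I would bound the branching degree position by position. Attacker positions $\attackerPos{p,Q}$ have observation moves, one per outgoing transition of $p$, plus---this being the decisive point---only the four conjunction challenges permitted by the clever restriction instead of the $2^{\relSize{Q}}$ subsets available in $\gameSpectroscopy$; hence their out-degree is bounded by the transition-system out-degree of $p$ plus $4$, which is $\bigo(\relSize{\proc})$. Defender positions $\defenderPos{p,Q,Q_*}$ have at most one revival move and one answer move per $q \in Q$, so their out-degree is $\leq \relSize{Q} + 1 \leq \relSize{\proc} + 1$. Attacker clause positions $\attackerPos[\land]{p,q}$ offer exactly the positive and the negative decision, so out-degree $2$. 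Taking the maximum yields $o \in \bigo(\relSize{\proc})$.

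Next I would count positions and moves. There are $\bigo(\relSize{\proc}\cdot 2^{\relSize{\proc}})$ attacker positions $\attackerPos{p,Q}$ and $\bigo(\relSize{\proc}^2)$ clause positions $\attackerPos[\land]{p,q}$; every reachable defender position arises as one of the four clever successors of some $\attackerPos{p,Q}$, so they too number $\bigo(\relSize{\proc}\cdot 2^{\relSize{\proc}})$. Thus $\relSize{G_\blacktriangle} \in \bigo(\relSize{\proc}\cdot 2^{\relSize{\proc}}) = 2^{\bigo(\relSize{\proc})}$, and since each position has out-degree $o \in \bigo(\relSize{\proc})$ we get $\relSize{\gameMove_\blacktriangle} \in \bigo(\relSize{G_\blacktriangle}\cdot\relSize{\proc}) = 2^{\bigo(\relSize{\proc})}$. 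Substituting into $\relSize{\gameMove_\blacktriangle}\cdot 2^{\bigo(o)}$ gives $2^{\bigo(\relSize{\proc})}\cdot 2^{\bigo(\relSize{\proc})} = 2^{\bigo(\relSize{\proc})}$, as claimed.

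The step I expect to carry the real weight is the branching-degree bound, and it is precisely where the contribution of \refSubsec{subsec:more-clever} is needed: without the clever restriction of conjunction challenges to four options, the out-degree would be exponential in $\relSize{Q}$, and then the factor $2^{\bigo(o)}$ coming from the defender update in \refLine{code:defender-pos-update} (which enumerates one option per successor, each option set being a capped antichain of constant size) would become double-exponential, matching \refLem{lem:spectroscopy-complexity}. \refThm{thm:correctness-clever} guarantees that this pruning loses no attacker wins for the coordinates of \refFig{fig:ltbt-spectrum}, so $\gameSpectroscopyClever$ computes the same spectrum at the reduced cost.
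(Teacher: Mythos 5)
Your proposal is correct and follows essentially the same route as the paper's own proof: bound the out-degree of $\gameSpectroscopyClever$ by $\bigo(\relSize{\proc})$ via the four-option clever restriction, bound $\relSize{G_\blacktriangle}$ and $\relSize{\gameMove_\blacktriangle}$ by $2^{\bigo(\relSize{\proc})}$, and insert into the flattened (capped-energy) bound $\relSize{\gameMove}\cdot 2^{\bigo(o)}$ from \refLem{lem:complexity}. The only difference is presentational: you spell out the position-by-position out-degree and position counts that the paper simply asserts, which is a welcome elaboration rather than a deviation.
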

\begin{proof}
  Out-degrees $o$ in $\gameSpectroscopyClever$ can be bounded in $\bigo(\relSize{\proc})$,
  the whole game graph $\relSize{\gameMove_\blacktriangle} \in \bigo( \relSize{\step{\cdot}} \cdot 2^{\relSize{\proc}} + \relSize{\proc}^2 \cdot 2^{\relSize{\proc}})$,
  and game positions $\relSize{G_\blacktriangle} \in \bigo(\relSize{\proc} \cdot 2^{\relSize{\proc}})$. Inserting in the flattened version of \refLem{lem:complexity} yields:
  \begin{align*}
    \bigo(\relSize{\gameMove_\blacktriangle} \cdot 2^{C_0 \cdot o})\;
  =\; & \bigo((\relSize{\step{\cdot}} \cdot 2^{\relSize{\proc}} + \relSize{\proc}^2 \cdot 2^{\relSize{\proc}}) \cdot 2^{C_1 \cdot \relSize{\proc}})\\
  =\; & \bigo((\relSize{\step{\cdot}} + \relSize{\proc}^2) \cdot 2^{C_2 \cdot \relSize{\proc}})\\
  =\; & \bigo(\relSize{\step{\cdot}} \cdot 2^{C_2 \cdot \relSize{\proc}}).
  \end{align*}
\end{proof}
Deciding trace equivalence in nondeterministic systems is PSPACE-hard and will thus take at least exponential time.
Therefore, the exponential time of the ``clever'' spectroscopy algorithm restricted to a finite spectrum is about as good as it may get, asymptotically speaking.

\subsection{Equivalences and Distinguishing Formulas from Budgets}
\label{subsec:obtaining-eqs}

For completeness, let us briefly flesh out how to actually obtain equivalence information from the minimal attacker winning budgets $\attackerWinMin(\attackerPos{p, \{q\}})$ we compute.

\begin{definition}
  For an antichain $\mathit{Mn} \subseteq \energies$ characterizing an upper part of the energy space, the complement antichain
  $\overline{\mathit{Mn}} \defEq \Min \;
    ( \energies_\infty \cap 
    (\{ (\sup E') - (1,\ldots,1) \mid E' \subseteq \mathit{Mn} \} \cup
    \{ e(i) \in \energies_\infty \mid (e(i))_i = (\inf \mathit{Mn})_i - 1 \land \forall j \neq i \ldotp (e(i))_j = \infty \}) )$
  has the complement energy space as its downset.
\end{definition}

\noindent
$\defenderWinMax(\attackerPos{p, \{q\}}) = \overline{\attackerWinMin(\attackerPos{p, \{q\}})}$ characterizes \emph{all} preordering formula languages and thus equivalences defined in terms of expressiveness prices for $p$ and $q$.
This might contain multiple, incomparable, notions from the spectrum.
Taking both directions, $\overline{\attackerWinMin(\attackerPos{p, \{q\}}) \cup \attackerWinMin(\attackerPos{q, \{p\}})}$, will thus characterize the finest intersection of equivalences to equate $p$ and $q$.

If we just wonder which of the equivalences from the spectrum hold, we may establish this more directly by checking which of them are not dominated by attacker wins.

From the information, we can also easily build witness relations to certify that we return sound equivalence results.
In particular, the pairs won with arbitrary attacker budgets, $\{(p,q) \mid (\infty,\infty,\infty,\infty,\infty,\infty) \in \defenderWinMax(\attackerPos{p, \{q\}})\}$ are a bisimulation.
Similarly, the strategy formulas of \refDef{def:strategies} can directly be computed to explain inequivalence.

If we use symbolic winning budgets capped as proposed at the end of \refSubsec{subsec:complexity}, the formula reconstruction will be harder and the $\overline{\attackerWinMin(\attackerPos{p, \{q\}})}$ might be below the maximal defender winning budgets if these exceed the bound.
But this will not matter as long as we choose a cap beyond the natural numbers that characterize our spectrum.

\section{Exploring Minimizations}
\label{sec:implementation}

\begin{table}[b]
  \caption{\label{tab:benchmark-results}Sample systems, sizes, and benchmark
  results.}
  \centering
  \pgfkeys{/pgf/number format/.cd,
    fixed, precision=2, 1000 sep={\text{,}}}
  \addtolength{\tabcolsep}{0.9pt}
  \begin{adjustbox}{center}
    \pgfplotstabletypeset[
      col sep=comma,
      header=true,
      column type=r,
      columns={system, states, bisimquot, spectromovesold, spectrotimeold, spectromovesnonclever, spectrotimenonclever, spectromoves, spectrotime, enabledness, trace, simulation},
      every head row/.style={ before row=\toprule,after row=\midrule},
      every even row/.style={ before row={\rowcolor[gray]{0.9}}},
      every last row/.style={ after row=\bottomrule},
      columns/system/.style={column name={\bf system}, string type, column type={l|}, postproc cell content/.style={@cell content={\texttt{##1}}}},
      columns/states/.style={column name={$\proc$}},
      columns/bisimquot/.style={column name={\pgfutilensuremath{\proc_{/\bEquiv{B}}}}, column type/.add={}{|}},
      columns/spectromovesold/.style={column name={\cite{bjn2022decidingAllBehavioralEqs}-\pgfutilensuremath{\gameMove}}},
      columns/spectrotimeold/.style={column name={\bf t/s}, divide by=1000, column type/.add={}{|}},
      columns/spectromovesnonclever/.style={column name={\pgfutilensuremath{\gameMove_\triangle}}},
      columns/spectrotimenonclever/.style={column name={\bf t/s}, divide by=1000, column type/.add={}{|}},
      columns/spectromoves/.style={column name={\pgfutilensuremath{\gameMove_\blacktriangle}}},
      columns/spectrotime/.style={column name={\bf t/s}, divide by=1000, column type/.add={}{|}},
      columns/enabledness/.style={column name={\pgfutilensuremath{\proc_{/\bEquiv{E}}}}},
      columns/trace/.style={column name={\pgfutilensuremath{\proc_{/\bEquiv{T}}}}},
      columns/simulation/.style={column name={\pgfutilensuremath{\proc_{/\bEquiv{1S}}}}},
      ]{data/benchmark.csv}
  \end{adjustbox}
\end{table}

Our algorithm can be used to analyze the equivalence structure of moderately-sized real-world transition systems. In this section, we take a brief look at its performance on the VLTS (``very large transition systems'') benchmark suite~\cite{garavel2017vlts} and return to our initial Peterson example.

The energy spectroscopy algorithm has been added to the Linear-Time--Branching-Time Spectroscope of~\cite{bjn2022decidingAllBehavioralEqs} and can be tried on transition systems at \url{https://equiv.io/}.

Table~\ref{tab:benchmark-results} reports the results of running the implementation of~\cite{bjn2022decidingAllBehavioralEqs} and this paper's implementation in variants using the spectroscopy energy game $\gameSpectroscopy$ and the clever spectroscopy energy game $\gameSpectroscopyClever$.
We tested on the VLTS examples of up to 25,000~states and the Peterson example (\refFig{fig:ts-peterson}).
The table lists the $\proc$-sizes of the input transition systems and of their bisimilarity quotient system $\proc_{/\bEquiv{B}}$.
The spectroscopies have been performed on the bisimilarity quotient systems by constructing the game graph underneath positions comparing all pairs of enabledness-equivalent states.
The middle three groups of columns list the resource usage for the three implementations using: the \cite{bjn2022decidingAllBehavioralEqs}-spectroscopy, the energy game $\gameSpectroscopy$, and the clever game $\gameSpectroscopyClever$.
For each group, the first column reports traversed game size, and the second gives the time the spectroscopy took in seconds.
Where the tests ran out of memory or took longer than five minutes (in the Java Virtual Machine with 8~GB heap space, at 4~GHz, single-threaded), the cells are left blank.
The last three columns list the output sizes of state spaces reduced with respect to enabledness $\bEquiv{E}$, traces $\bEquiv{T}$, and simulation $\bEquiv{1S}$---as one would hope, all three algorithms returned the same results.

From the output, we learn that the VLTS examples, in a way, lack diversity:
Bisimilarity $\bEquiv{B}$ and trace equivalence $\bEquiv{T}$ mostly coincide on the systems (third and penultimate column).

Concerning the algorithm itself, the experiments reveal that the computation time grows mostly linearly with the size of the game move graph.
Our algorithm can deal with bigger examples than~\cite{bjn2022decidingAllBehavioralEqs} (which fails at \texttt{peterson}, \texttt{vasy\_10\_56} and \texttt{cwi\_1\_2}, and takes more than 500 seconds for \texttt{vasy\_8\_24}).
Even where \cite{bjn2022decidingAllBehavioralEqs} has a smaller game graph (e.g.\ \texttt{cwi\_3\_14}), the exponential formula construction renders it slower.
Also, the clever game graph $\gameMove_\blacktriangle$ indeed is much smaller than $\gameMove_\triangle$ for examples with a lot of nondeterminism such as \texttt{peterson}.

Of those terminating, the heavily nondeterministic \texttt{cwi\_1\_2} is the most expensive example.
As many coarse notions must record the nondeterministic options, this blowup is to be expected.
If we compare to the best similarity algorithm by Ranzato and Tapparo~\cite{rt2010efficientSimulation}, they report their algorithm SA to tackle \texttt{cwi\_1\_2} single-handedly.
Like our implementation, the prototype of SA~\cite{rt2010efficientSimulation} ran out of memory while determining similarity for \texttt{vasy\_18\_73}.
This is in spite of SA theoretically having optimal complexity and similarity being less complex (cubic) than trace equivalence, which we need to cover.
The benchmarks in~\cite{rt2010efficientSimulation} failed at \texttt{vasy\_10\_56}, and \texttt{vasy\_25\_25}, which might be due to 2010's tighter memory requirements (they used 2 GB of RAM) or the degree to which bisimilarity and enabledness in the models is exploited.

\section{Conclusion and Related Work}
\label{sec:conclusion}

This paper has connected two strands of research in the field of system analysis:
The strand of \emph{equivalence games on transition systems} starting with Stirling's bisimulation game~\cite{stirling1993modal,shr1995hornsatGames,cd2008gameCharacetrizations,bjn2022decidingAllBehavioralEqs} and the strand of \emph{energy games for systems of bounded resources}~\cite{em1979positionalMeanPayoff,bjk2010evassReachability,chaloupka2010zReachability2dVASS,fjls2011energyGamesMulti,amss2013parityGamesVASS,reichert2015reachabilityGamesCounters,vcdh2015complexityMeanPayoffEnergy,fh2022twoBoundedCounterGames,kh2022energyGamesResourceBounded}.

The connection rests on the insight that levels of equivalence correspond to resources available to an attacker who tries to tell two systems apart.
This parallel is present in recent work within the security domain \cite{hm2021ePassportBisim} just as much as in the first thoughts on observable nondeterminism by Hennessy and Milner~\cite{hm1980hml}.

The paper has not examined the precise relationship of the games of \refSec{sec:priced-game} to the whole zoo of
VASS, energy, mean-payoff, monotonic~\cite{ab2008monotonicGames}, and counter games.
The spectroscopy energy game deviates slightly from common multi-energy games due to $\mathtt{min}_D$-updates and due to the attacker being energy-bound (instead of the defender).
As the energies cannot be exhausted by defender moves, the game can also be interpreted as a VASS game~\cite{bjk2010evassReachability,amss2013parityGamesVASS} where the attacker is stuck if they run out of energy.
Our algorithm complexity matches that of general lower-bounded $N$-dimensional energy games~\cite{fjls2011energyGamesMulti}.
Links between our declining energy games and other games from the literature might enable slight improvements of the algorithm.
For instance, reachability in VASS games can turn polynomial~\cite{chaloupka2010zReachability2dVASS}.

In the strand of generalized game characterizations for equivalences~\cite{shr1995hornsatGames,cd2008gameCharacetrizations,bjn2022decidingAllBehavioralEqs}, this paper extends applicability for real-world systems.
The implementation performs on par with the most efficient similarity algorithm~\cite{rt2010efficientSimulation}.
Given that among the hundreds of equivalence algorithms and tools most primarily address bisimilarity~\cite{gl2022eqChecking}, a tool for coarser equivalences is a worthwhile addition.
Although our previous algorithm~\cite{bjn2022decidingAllBehavioralEqs} is able to solve the spectroscopy problem, its reliance on super-exponential partitions of the state space makes it ill-fit for transition systems with significant nondeterminism.
In comparison, our new algorithm also needs one less layer of complexity because it determines equivalences without constructing distinguishing formulas.

These advances enable a spectroscopy of systems saturated by weak transitions.
We can thus analyze weak equivalences such as in the example of Peterson's mutex.
For special weak equivalences without a strong counterpart such as branching bisimilarity~\cite{glabbeek1993ltbt}, deeper changes to the modal logic are necessary~\cite{bj2023ltbtsSilentSteps}.

The increased applicability has allowed us to exhaustively consider equivalences on the smaller systems of the widely-used VLTS suite~\cite{garavel2017vlts}.
The experiments reveal that the spectrum between trace equivalence and bisimilarity mostly collapses for the examined systems.
It may often be reasonable to specify systems in such a way that the spectrum collapses.
In a benchmark suite, however, a lack of semantic diversity can be problematic:
For instance, otherwise sensible techniques like polynomial-time reductions~\cite{cm2019ndaReductionLanguageInclusion} will not speed up language inclusion testing,
and nuances of the weak equivalence spectrum~\cite{bn2019coupledsimTacas} will falsely seem insignificant.
One may also overlook errors and performance degradations that appear only for transition systems where equal traces do not imply equivalent branching behavior.
We hope this blind spot does not affect the validity of any of the numerous studies relying on VLTS benchmarks.

\subsubsection*{Acknowledgments.}
This work benefited from discussion with Sebastian Wolf, with David N.\ Jansen, with members of the LFCS Edinburgh, and with the MTV research group at TU Berlin, as well as from reviewer comments.

\subsubsection*{Data availability.}

\ifthenelse{\boolean{fullversion}}
{The peer-reviewed shortened version of this paper (published at CAV 2023~\cite{bisping2023equivalenceEnergyGames}) can be found on
\url{https://doi.org/10.1007/978-3-031-37706-8_5}.}
{Proofs and updates are to be found in the report version of this paper~\cite{bisping2023equivalenceEnergyGamesReport}.
}
The Scala source is on GitHub: \url{https://github.com/benkeks/equivalence-fiddle/}.
A webtool implementing the algorithm runs on \url{https://equiv.io/}.
An artifact including the benchmarks is archived on Zenodo~\cite{bisping2023ltbtsZenodo3}.

\bibliographystyle{splncs04}

\bibliography{similarities}

\ifthenelse{\boolean{fullversion}}
{
\appendix

\section{Correctness and Complexity Proofs}


\begin{postponedProof}[\refLem{lem:bisimulation-game}]
  \label{prf:bisimulation-game}
  $p_0$ and $q_0$ are bisimilar
  iff the defender wins the game $\gameSpectroscopy[\attackerPos{p_0, \{q_0\}}, e_0]$ for every initial energy budget $e_0$,
  \ie if $(\infty, \infty, \infty, \infty, \infty, \infty) \in \defenderWinMax(\attackerPos{p_0, \{q_0\}})$.
\end{postponedProof}
\begin{proof}
  First note that, for $p_0$ and $q_0$ to be bisimilar, they need to be related by a bisimulation relation $(p_0, q_0) \in \rel{R}$, where $\rel{R}$ is symmetric and has the simulation property that, for every $(p,q) \in \rel{R}$, $p \step{a} p'$ implies there is a $q'$ with $q \step{a} q'$ and $(p',q') \in \rel{R}$.
  \begin{itemize}
    \item Assume $p_0$ is bisimilar to $q_0$.
      Then, there is a bisimulation relation $\rel{R}$ with $(p_0, q_0) \in \rel{R}$.
      Construct \emph{some} defender strategy $s_\defenderSubscript$, where
      \[s_\defenderSubscript(\rho \cdot \defenderPos{p,Q,Q_*}) \mathrel{\defEq}
        \begin{cases}
          \attackerPos[\land]{p,q} & \text{if there is $q \in Q$ with $(p,q) \in \rel{R}$ to pick,}\\
          \attackerPos{p,Q_*} & \text{if there is $q \in Q_*$ with $(p,q) \in \rel{R}$,}\\
          \mathrm{undefined} & \text{if no such $q$ exists.}
        \end{cases}
      \]
      We prove inductively that plays with the defender applying this strategy will only reach positions with the invariant that $(p,q) \in \rel{R}$ at $\attackerPos[\land]{p,q}$, respectively, for some $q \in Q$ at $\attackerPos{p,Q}$, and for some $q \in Q \cup Q_*$ at $\defenderPos{p,Q,Q_*}$, rendering the picks well-defined.
      \begin{itemize}
        \item Initially, with $\rho = (\attackerPos{p_0,q_0})$ the invariant holds as $(p_0, q_0) \in \rel{R}$.
        \item At steps following $\rho \cdot \attackerPos{p,Q}$, the invariant is maintained due to the simulation property of $\rel R$.
        \item At steps following $\rho \cdot \defenderPos{p,Q,Q_*}$, the invariant is maintained due to the definition of $s_\defenderSubscript$.
        \item At steps following $\rho \cdot \attackerPos[\land]{p,q}$, symmetry of $\rel R$ ensures the invariant.
      \end{itemize}
      As this shows that the defender will not get stuck following $s_\defenderSubscript$ regardless of energy,
      the defender wins with every energy, \ie, $(\infty, \infty, \infty, \infty, \infty, \infty) \in \defenderWinMax(\attackerPos{p_0, \{q_0\}})$.
    \item Assume $(\infty, \infty, \infty, \infty, \infty, \infty) \in \defenderWinMax(\attackerPos{p_0, \{q_0\}})$.
      Construct \[\rel{R} \mathrel{\defEq} \{(p,q) \mid (\infty, \infty, \infty, \infty, \infty, \infty) \in \defenderWinMax(\attackerPos{p, \{q\}})\}.\]
      $\rel{R}$ must be a bisimulation relating $p_0$ and $q_0$.
      \begin{itemize}
        \item Clearly, $(p_0, q_0) \in \rel{R}$ because of the assumption.
        \item $\rel{R}$ is symmetric.
          As the energy level is unbounded, the attacker may always move $\attackerPos{p, \{q\}} \gameMove \defenderPos{p, \{q\}, \varnothing} \gameMove \attackerPos[\land]{p, q} \gameMove \attackerPos{q, \{p\}}$.
          Therefore, the defender wins from $\attackerPos{p, \{q\}}$ iff they win from $\attackerPos{q, \{p\}}$.
        \item $\rel{R}$ is a simulation.
          With unbounded energy levels, the attacker may always move $\attackerPos{p, \{q\}} \gameMove \attackerPos{p', Q'} \gameMove \defenderPos{p', Q', \varnothing}$, where $p \step{a} p'$ and $Q \step{\ccsInm{a}} Q'$.
          There, the defender only wins if there is a $q'$ such that they win after $\defenderPos{p', Q', \varnothing} \gameMove \attackerPos[\land]{p', q'} \gameMove \attackerPos{p', \{q'\}}$.
          Therefore, the defender winning $\attackerPos{p, \{q\}}$ unboundedly and $p \step{a} p'$ implies unbounded winning of $\attackerPos{p', \{q'\}}$ for some $q'$ with $q \step{a} q'$.
      \end{itemize}
  \end{itemize}
\end{proof}

\begin{remark}
  For finite-state systems, \refProp{prop:language-prices} and \refThm{thm:correctness} would already prove \refLem{lem:bisimulation-game}.
  The preceding proof of \refLem{lem:bisimulation-game} is more general as it also goes through for infinite-state systems.
\end{remark}

\begin{postponedProof}[\refLem{lem:price-soundness}, price soundness]
  \label{prf:price-soundness}
  $\varphi \in \hmlStrategies(\attackerPos{p,Q}, e)$ implies that $\expr(\varphi) \leq e$ and that $\expr(\varphi) \in \attackerWin(\attackerPos{p,Q})$.
\end{postponedProof}
\begin{proof}
  By induction on the structure of $\varphi$ with arbitrary $p,Q,e$.
  \begin{itemize}
    \item \emph{Base case $\hmlAndS \varnothing \in \hmlStrategies(\attackerPos{p,Q}, e)$.}
      This formula must stem from a move $\attackerPos{p,Q} \gameMoveX{(0,-1,0,0,0,0)} \defenderPos{p,Q' \setminus \{*\},Q_*}$
      for some $Q_* \subseteq Q$ and $Q' \setminus \{*\} = Q \setminus Q_*$
      with $\hmlAndS \varnothing \in \hmlStrategies(\allowbreak\defenderPos{p,Q' \setminus \{*\}, Q_*},e + (0,-1,0,0,0,0))$.
      The definition of $\hmlStrategies$ allows the latter only if the defender has no moves, that is, if $Q = Q' = Q_* = \varnothing$.
      Therefore, the attacker wins at the defender position with $(0,0,0,0,0,0)$ and thus $\expr(\hmlAndS \varnothing) = (0,1,0,0,0,0) \in \attackerWin(\attackerPos{p,Q})$.
      Also, $e$ must be above or equal $(0,1,0,0,0,0)$.
    \item \emph{Case $\hmlObs{b}\varphi \in \hmlStrategies(\attackerPos{p,Q}, e)$.}
      By the definition of $\hmlStrategies$ and the game, there must be $\attackerPos{p,Q} \gameMoveX{(-1,0,0,0,0,0)} \attackerPos{p',Q'}$
      updating $e' = \energyUpdate(e, (-1,0,0,0,0,0)) = (e + (-1,0,0,0,0,0)) \in \attackerWin(\attackerPos{p',Q'})$
      such that $\varphi \in \hmlStrategies(\attackerPos{p',Q'}, e')$.
      With the induction hypothesis, we know that $\expr(\varphi) \in \attackerWin(\attackerPos{p',Q'})$ and that $\expr(\varphi) \leq e'$.
      With the definition of $\expr$, we have that $\expr(\hmlObs{b}\varphi) = \expr(\varphi) + (1,0,0,0,0,0) \leq e' + (1,0,0,0,0,0) = e$,
      and by the definition of $\attackerWin$, also $\expr(\hmlObs{b}\varphi) \in \attackerWin(\attackerPos{p,Q})$.
    \item \emph{Case $\hmlAnd{q}{Q'}\psi_q \in \hmlStrategies(\attackerPos{p,Q}, e)$ with $Q' \neq \varnothing$.}
      This must stem from a move $\attackerPos{p,Q} \gameMoveX{(0,-1,0,0,0,0)} \defenderPos{p,Q' \setminus \{*\},Q_*}$
      for some $Q_* \subseteq Q$ and $Q' \setminus \{*\} = Q \setminus Q_*$
      with $\hmlAnd{q}{Q'}\psi_q \in \hmlStrategies(\defenderPos{p,Q' \setminus \{*\}, Q_*},\allowbreak e + (0,-1,0,0,0,0))$
      such that
        $* \notin Q'$
        or (if $* \in Q'$) $\defenderPos{p,Q' \setminus \{*\}, Q_*} \gameMoveX{(\updMin{1,3},0,0,0,0,0)} \attackerPos{p,Q_*}$
          and $\psi_* \in \hmlStrategies(\attackerPos{p,Q_*},\allowbreak (\min(e_1,e_3),e_2-1,e_3,e_4,e_5,e_6))$,
      and such that, for each clause $q \in Q' \setminus \{*\}$, $\defenderPos{p,Q' \setminus \{*\}, Q_*} \gameMoveX{(0,0,0,\updMin{3,4},0,0)} \attackerPos[\land]{p,q}$,
      as well as $(e_1,e_2-1,e_3,\min(e_3,e_4),e_5,e_6)\allowbreak \in \attackerWin(\attackerPos[\land]{p,q})$,
      and $\psi_q \in \hmlStrategies(\attackerPos[\land]{p,q}, (e_1,e_2-1,e_3,\min(e_3,e_4),e_5,e_6))$.
      Now let us examine all $\psi_q$ for $q \in Q'$:
      \begin{itemize}
        \item For $\psi_*$ (if $* \in Q'$), the induction hypothesis gives that $\expr(\psi_*) \in \attackerWin(\attackerPos{p,Q_*})$ and that $\expr(\psi_*) \leq e_* = (\min(e_1,e_3),e_2-1,e_3,e_4,e_5,e_6)$. Also, $\psi_*$ must be a positive observation because of the definition of $\hmlStrategies$.
        \item If $\psi_q$ is a negation, $\psi_q = \hmlNeg \varphi_q$,
          it must be due to $\attackerPos[\land]{p,q} \gameMoveX{(\updMin{1,5},0,0,0,0,-1)} \attackerPos{q,\{p\}}$,
          with $e_q$ equalling $(\min(e_1,e_5),e_2-1,e_3,\min(e_3,e_4),e_5,e_6-1) \in \attackerWin(\attackerPos{q,\{p\}})$
          and $\varphi_q \in \hmlStrategies(\attackerPos{q,\{p\}}, e_q)$ being an observation.
          By the induction hypothesis, $\expr(\varphi_q) \in \attackerWin(\attackerPos{q,\{p\}})$ and $\expr(\varphi_q) \leq e_q$.
        \item If $\psi_q$ with $q \neq *$ is positive, $\psi_q = \varphi_q$,
          it must be due to a move like $\attackerPos[\land]{p,q} \gameMoveX{(\updMin{1,4},0,0,0,0,0)} \attackerPos{p,\{q\}}$,
          with $e_q = (\min(e_1,e_3,e_4),e_2-1,e_3,\allowbreak\min(e_3,e_4),e_5,e_6) \in \attackerWin(\attackerPos{p,\{q\}})$
          and $\varphi_q \in \hmlStrategies(\attackerPos{p,\{q\}}, e_q)$ being an observation.
          By the induction hypothesis, $\expr(\varphi_q) \in \attackerWin(\attackerPos{p,\{q\}})$ and $\expr(\varphi_q) \leq e_q$.
      \end{itemize}
      We know by the definition of $\expr$ and with the obtained inequalities that
      \begin{align*}
        & \expr(\hmlAnd{q}{Q'}\psi_q) \\ = &
        \begin{pmatrix}
          \sup ( \{ \expr_1(\psi_q) \mid q \in Q' \})\\
          1 + \sup ( \{ \expr_2(\psi_q) \mid q \in Q' \})\\
          \sup ( \{ \expr_3(\psi_q) \mid q \in Q' \} \cup \{ \expr_1(\psi_q) \mid \psi_q \text{ positive}\} )\\
          \sup ( \{ \expr_4(\psi_q) \mid q \in Q' \} \cup \{ \expr_1(\psi_q) \mid q \in Q' \setminus \{*\} \land \psi_q \text{ positive}\})\\
          \sup ( \{ \expr_5(\psi_q) \mid q \in Q' \} \cup \{ \expr_1(\psi_q) \mid q \in Q' \setminus \{*\} \land \psi_q \text{ negative}\})\\
          \sup ( \{ 1 + \expr_6(\varphi_q) \mid \psi_q = \hmlNeg \varphi_q \} \cup \{ \expr_6(\psi_q) \mid \psi_q \text{ positive} \})
        \end{pmatrix}
        \\ \leq &
        \begin{pmatrix}
          \sup ( \{ (e_q)_1 \mid q \in Q' \})\\
          1 + \sup ( \{ (e_q)_2 \mid q \in Q' \})\\
          \sup ( \{ (e_q)_3 \mid q \in Q' \} \cup \{ (e_q)_1 \mid \psi_q \text{ positive}\} )\\
          \sup ( \{ (e_q)_4 \mid q \in Q' \} \cup \{ (e_q)_1 \mid q \in Q' \setminus \{*\} \land \psi_q \text{ positive}\} )\\
          \sup ( \{ (e_q)_5 \mid q \in Q \} \cup \{ (e_q)_1 \mid q \in Q' \setminus \{*\} \land \psi_q \text{ negative}\} ) \\
          \sup ( \{ 1 + (e_q)_6 \mid \psi_q = \hmlNeg \varphi_q \} \cup \{ (e_q)_6 \mid \psi_q \text{ positive} \})
        \end{pmatrix}
        \\ = &
        \begin{pmatrix}
          \sup ( \{ \min(e_1, e_3) \mid * \in Q'\} \!\cup\! \{ \min(e_1, e_3, e_4) \mid \psi_q \text{ positive}\}\! \cup \! \{ \min(e_1, e_5) \mid \psi_q \text{ negative}\} )\\
          1 + e_2 - 1\\
          \sup ( \{ e_3 \} \cup \{ \min(e_1, e_3) \mid \psi_q \text{ positive}\} )\\
          \sup ( \{ e_4 \} \cup \{ \min(e_1, e_3, e_4) \mid q \in Q' \setminus \{*\} \land \psi_q \text{ positive}\} )\\
          \sup ( \{ e_5 \} \cup \{ \min(e_1, e_5) \mid \psi_q \text{ negative}\} )\\
          \sup ( \{ 1 + e_6 - 1 \mid \psi_q \text{ negative}\} \cup \{ e_6 \} )\\
        \end{pmatrix}
        \\ \leq &
        \begin{pmatrix}
          e_1\\
          e_2\\
          e_3\\
          e_4\\
          e_5\\
          e_6
        \end{pmatrix}
      \end{align*}
      and, from the definition of $\attackerWinMin$, that the expressiveness price at the defender position is a supremum of the budgets that are winning in the next moves so that $\expr(\hmlAnd{q}{Q'}\psi_q) \in \attackerWin(\attackerPos{p,Q})$.
  \end{itemize}
\end{proof}

\begin{postponedProof}[\refLem{lem:price-completeness}, price completeness]
  \label{prf:price-completeness}
  $e_0 \in \attackerWin(\attackerPos{p_0,Q_0})$ implies there are elements in $\hmlStrategies(\attackerPos{p_0,Q_0}, e_0)$.
\end{postponedProof}
\begin{proof}
  $e_0 \in \attackerWin(\attackerPos{p_0,Q_0})$ means that the attacker has a winning strategy $s_\attackerSubscript$ such that all plays consistent with the strategy that start at energy level $e_0$ lead to a position where the defender is stuck.
  Without loss of generality, we may suppose that the attacker strategy does not use conjunction challenges immediately after conjunction answers or revivals. (If it did, we could transform it to a strategy where the attacker would make up their mind at the first conjunction of a sequence.)
  Consider the tree of such $s_\attackerSubscript$-plays, in particular the recurring attacker nodes of the form $\attackerPos{p,Q}$
  at energy level $e$, which must be $e = \energyLevel_{\rho \cdot \attackerPos{p,Q}} \in \attackerWin(\attackerPos{p,Q})$ after play $\rho$.
  Let us induct over this tree, proving $\hmlStrategies(\attackerPos{p,Q}, e)$ to be nonempty at each node.
  \begin{itemize}
    \item \emph{Case $s_\attackerSubscript(\rho \cdot \attackerPos{p,Q}) = \attackerPos{p', Q'}$.}
      This means that $\attackerPos{p,Q} \gameMoveX{(-1,0,0,0,0,0)} \attackerPos{p',Q'}$
      with $p \step{b} p'$
      updating $e' = \energyUpdate(e, (-1,0,0,0,0,0)) \in \attackerWin(\attackerPos{p',Q'})$. By the induction hypothesis, there must be a $\varphi \in \hmlStrategies(\attackerPos{p',Q'}, e')$.
      Thus, $\hmlObs{b}\varphi \in \hmlStrategies(\attackerPos{p,Q}, e)$ by the definition of $\hmlStrategies$.
    \item \emph{Case $s_\attackerSubscript(\rho \cdot \attackerPos{p,Q}) = \defenderPos{p, Q \setminus Q_*, Q_*}$.}
      This leads to move $\attackerPos{p,Q} \gameMoveX{(0,-1,,0,0,0,0)} \defenderPos{p,Q \setminus Q_*, Q_*}$,
      followed by branching of attacker-won moves $\defenderPos{p,Q \setminus Q_*, Q_*}\allowbreak \gameMoveX{(\updMin{1,3},0,0,0,0,0)} \attackerPos{p,Q_*}$ for $Q_* \neq \varnothing$
        and $\defenderPos{p,Q \setminus Q_*, Q_*} \gameMoveX{(0,0,0,\updMin{3,4},0,0)} \attackerPos[\land]{p,q} \allowbreak\gameMoveX{u_q} s_\attackerSubscript(\attackerPos[\land]{p,q})$ for each $q \in Q \setminus Q_*$.
      By our assumption that $s_\attackerSubscript$ does not nest conjunctions and the argument of the first case together with the induction hypothesis, there must be an observation $\varphi_* \in \hmlStrategies(s_\attackerSubscript(\attackerPos{p,Q_*}))$ for $Q_* \neq \varnothing$ as well as a positive or negative observation $\varphi_q \in \hmlStrategies(s_\attackerSubscript(\attackerPos[\land]{p,q}))$ for each $q \in Q \setminus Q_*$.
      By the definition of $\hmlStrategies$,
      this means that either $\hmlAnd{q}{Q}\psi_q \in \hmlStrategies(\defenderPos{p,Q \setminus Q_*, Q_*}, e)$ (for $Q_* = \varnothing$) or
      $\hmlAnd{q}{Q \setminus Q_* \cup \{ *\}}\psi_q \in \hmlStrategies(\defenderPos{p,Q \setminus Q_*, Q_*}, e)$,
      with $\psi_q = \hmlNeg\varphi_q$ if $u_q = (\updMin{1,5},0,0,0,0,-1)$ and $\psi_q = \varphi_q$ otherwise.
      These must also be included in $\hmlStrategies(\attackerPos{p,Q}, e)$.
  \end{itemize}
\end{proof}

\begin{postponedProof}[\refLem{lem:distinction-soundness}, distinction soundness]
  \label{prf:distinction-soundness}
  Every $\varphi \in \hmlStrategies(\attackerPos{p,Q}, e)$ distinguishes $p$ from every $q \in Q$.
\end{postponedProof}
\begin{proof}
  By induction on the structure of $\varphi$ with arbitrary $p,Q,e$.
  \begin{itemize}
    \item \emph{Case $\hmlObs{b}\varphi \in \hmlStrategies(\attackerPos{p,Q}, e)$.}
      By the definition of $\hmlStrategies$ and the game, there must be a move $\attackerPos{p,Q} \gameMoveX{(-1,0,0,0,0,0)} \attackerPos{p',Q'}$
      with $p \step{b} p'$
      and $Q \step{\ccsInm{b}} Q'$
      updating $e' = e + (-1,0,0,0,0,0)$
      such that $\varphi \in \hmlStrategies(\attackerPos{p',Q'}, e')$.
      By induction hypothesis, $\varphi$ distinguishes $p'$ from every $q' \in Q'$,
      \ie, $p' \in \hmlSemantics{\varphi}{}{}$ and $Q' \cap \hmlSemantics{\varphi}{}{} = \varnothing$. By the semantics of HML, then also $p \in \hmlSemantics{\hmlObs{b}\varphi}{}{}$ and $Q \cap \hmlSemantics{\hmlObs{b}\varphi}{}{} = \varnothing$.
    \item \emph{Case $\hmlAnd{q}{Q'}\psi_q \in \hmlStrategies(\attackerPos{p,Q}, e)$.}
      This is due to a move $\attackerPos{p,Q} \gameMoveX{(0,-1,0,0,0,0)} \defenderPos{p,Q' \setminus \{*\},Q_*}$
      for some $Q_* \subseteq Q$ and $Q' \setminus \{*\} = Q \setminus Q_*$
      with $\hmlAnd{q}{Q'}\psi_q \in \hmlStrategies(\defenderPos{p,Q' \setminus \{*\}, Q_*},\allowbreak e + (0,-1,0,0,0,0))$
      such that
        $* \notin Q'$
        or there is a move $\defenderPos{p,Q' \setminus \{*\}, Q_*} \gameMoveX{(\updMin{1,3},0,0,0,0,0)} \attackerPos{p,Q_*}$
          with $\psi_* \in \hmlStrategies(\attackerPos{p,Q_*},\allowbreak (\min(e_1,e_3),e_2-1,e_3,e_4,e_5,e_6))$,
      and such that, for each $q \in Q' \setminus \{*\}$, $\defenderPos{p,Q' \setminus \{*\}, Q_*} \gameMoveX{(0,0,0,\updMin{3,4},0,0)} \attackerPos[\land]{p,q}$,
      $(e_1,e_2-1,e_3,\min(e_3,e_4),e_5,e_6) \in \attackerWin(\attackerPos[\land]{p,q})$
      and $\psi_q \in \hmlStrategies(\attackerPos[\land]{p,q}, (e_1,e_2-1,e_3,\min(e_3,e_4),e_5,e_6))$.
      Now let us examine all $\psi_q$:
      \begin{itemize}
        \item If there is a $\psi_*$, it is from $\hmlStrategies(\attackerPos{p,Q_*}$.
          By the induction hypothesis, $p \in \hmlSemantics{\varphi_*}{}{}$ and $Q_* \cap {\hmlSemantics{\varphi_*}{}{}} = \varnothing$.
        \item If $\psi_q$ is a negation clause, $\psi_q = \hmlNeg \varphi_q$,
          it is due to $\attackerPos[\land]{p,q} \gameMoveX{(\updMin{1,5},0,0,0,0,-1)} \attackerPos{q,\{p\}}$,
          and $\varphi_q \in \hmlStrategies(\attackerPos{q,\{p\}}, (\min(e_1,e_5),e_2-1,\min(e_3,e_4),e_4,e_5,\allowbreak e_6-1))$.
          By the induction hypothesis, we know that $q \in \hmlSemantics{\varphi_q}{}{}$ and not $p \in \hmlSemantics{\varphi_q}{}{}$.
        \item If $\psi_q$ is a positive formula, $\psi_q = \varphi_q$,
          it must be due to $\attackerPos[\land]{p,q} \gameMoveX{(\updMin{1,4},0,0,0)} \attackerPos{p,\{q\}}$,
          and $\varphi_q \in \hmlStrategies(\attackerPos{p,\{q\}}, (\min(e_1,e_3,e_4),e_2-1,\min(e_3,e_4),e_4,\allowbreak e_5,e_6))$.
          By the induction hypothesis, we know that $p \in \hmlSemantics{\varphi_q}{}{}$ and not $q \in \hmlSemantics{\varphi_q}{}{}$.
      \end{itemize}
      All in all, every positive clause holds for $p$ and none of the formulas in the negative clauses holds for $p$.
      Therefore, $p \in \hmlSemantics{\hmlAnd{q}{Q'}\psi_q}{}{}$.
      For every $q \in Q \setminus Q_*$, $\psi_q$ is either a positive clause that does not hold for $q$, or a negative clause negating a formula that holds for $q$.
      If there are $q \in Q_*$, $\psi_*$ is false for all of them.
      Consequently, no $q \in \hmlSemantics{\hmlAnd{q}{Q'}\psi_q}{}{}$.
  \end{itemize}
\end{proof}

\begin{postponedProof}[\refLem{lem:distinction-completeness}, distinction completeness]
  \label{prf:distinction-completeness}
  If $\varphi$ distinguishes $p$ from every $q \in Q$, then $\expr(\varphi) \in \attackerWin(\attackerPos{p, Q})$.
\end{postponedProof}
\begin{proof}
  By induction on the structure of $\varphi$ with arbitrary $p,Q$.
  \begin{itemize}
    \item \emph{Case $\hmlObs{b}\varphi$.}
      So $p \in \hmlSemantics{\hmlObs{b}\varphi}{}{}$
      and $Q \cap \hmlSemantics{\hmlObs{b}\varphi}{}{} = \varnothing$.
      By the semantics of HML, there are $p',Q'$ with $p \step{b} p'$
      and $Q \step{\ccsInm{b}} Q'$
      such that $p' \in \hmlSemantics{\varphi}{}{}$ and $Q' \cap \hmlSemantics{\varphi}{}{} = \varnothing$.
      By induction hypothesis, $\expr(\varphi) \in \attackerWin(\attackerPos{p', Q'})$.
      Thus, as the game must contain $\attackerPos{p,Q} \gameMoveX{(-1,0,0,0,0,0)} \attackerPos{p',Q'}$,
      and using the definition of formula prices,
      $\energyUpdateInv(\expr(\varphi), (-1,0,0,0,0,0)) = \expr(\varphi) - (-1,0,0,0,0,0) = \expr(\hmlObs{b}\varphi) \in \attackerWin(\attackerPos{p, Q}))$.
    \item \emph{Case $\hmlAndS \Psi$.}
      So $p \in \hmlSemantics{\hmlAndS \Psi}{}{}$
      and $Q \cap \hmlSemantics{\hmlAndS \Psi}{}{} = \varnothing$.
      Consider all $\Psi' \subseteq \Psi$ that cover $Q$ such that there is a $\psi_q \in \Psi'$ precisely for every $q \in Q$ and
      $Q \cap \hmlSemantics{\hmlAndS \Psi'}{}{} = Q \cap \hmlSemantics{\hmlAnd{q}{Q}\psi_q}{}{} = \varnothing$.
      By the semantics of HML, $p \in \hmlSemantics{\varphi_q}{}{}$ for every positive $\psi_q = \varphi_q$, and $p \notin \hmlSemantics{\bar\varphi_q}{}{}$ for every negative $\psi_q = \hmlNeg \bar\varphi_q$.
      Also, for every $q$, there must either be a positive $\psi_q = \varphi_q$ such that $q \notin \hmlSemantics{\varphi_q}{}{}$ or a negative $\psi_q = \hmlNeg \bar\varphi_q$ such that $q \in \hmlSemantics{\bar\varphi_q}{}{}$.
      So, for every $q$, either $\varphi_q$ distinguishes $p$ from $q$ or $\bar\varphi_q$ distinguishes $q$ from $p$.
      If there are any positive $\psi_q$, consider any way of selecting a highest $\psi_q$, denote it by $\psi_*$ and denote $Q_* = Q \cap \hmlSemantics{\psi_*}{}{}$ as well as $Q' = Q \setminus Q_*$.
      With the induction hypothesis,
      $\expr(\varphi_*) = e_* \in \attackerWin(\attackerPos{p, Q_*})$
      and, for $q \in Q'$,
      each $\expr(\varphi_q) = e_q \in \attackerWin(\attackerPos{p, \{q\}})$ and $\expr(\bar\varphi_q) = \bar e_q \in \attackerWin(\attackerPos{q, \{p\}})$.
      So if there are $\varphi_q$ from positive clauses, then $\energyUpdateInv(e_q , (\updMin{1,4},0,0,0,0,0)) = \sup(e_q, ((e_q)_1, 0, 0, (e_q)_1, 0, 0)) \in \attackerWin(\attackerPos[\land]{p, q})$,
      and, for negative clauses, $\energyUpdateInv(\bar e_q ,(\updMin{1,5},0,0,0,0,-1))$ evaluates to the $\sup(\bar e_q - (0,0,0,0,0,-1), \allowbreak ((\bar e_q)_1, 0, 0, 0, (\bar e_q)_1, 0)) \in \attackerWin(\attackerPos[\land]{p, q})$.
      Both kinds of levels are reached through $(0,0,0,\updMin{3,4},0,0)$ updates.
      The attacker hence wins at the defender position $\defenderPos{p, Q', Q_*}$ regardless of the defender decisions if the energy level is at
      \begin{align*}
        e_\attackerSubscript \defEq \sup (
          & \; \{ e_q, ((e_q)_1, 0, (e_q)_1, (e_q)_1, 0, 0) \\
          & \qquad \mid q \in Q' \land \psi_q = \varphi_q \land e_q = \expr(\varphi_q) \} \\
          \cup
          & \; \{ \bar e_q - (0,0,0,0,0,-1), ((\bar e_q)_1, 0, (\bar e_q)_4, (\bar e_q)_4, (\bar e_q)_1, 0) \\
          & \qquad \mid q \in Q' \land \psi_q = \hmlNeg\bar\varphi_q \land \bar e_q = \expr(\bar\varphi_q) \} \\
          \cup
          & \; \{ e_*, ((e_*)_1, 0, (e_*)_1, 0, 0, 0) \\
          & \qquad \mid Q_* \neq \varnothing \} \; )
      \end{align*}
      or above.
      $\energyUpdateInv(e_\attackerSubscript, (0,-1,0,0,0,0)) = e_\attackerSubscript + (0,1,0,0,0,0) \in \attackerWin(\attackerPos{p, Q})$ by the game structure.
      $e_\attackerSubscript + (1,0,0,0,0,0)$ equals $\expr(\hmlAnd{q}{Q' \cup \{*\}}\psi_q)$, which must be below or equal $\expr(\hmlAndS \Psi)$ as the definition is in terms of suprema and $\Psi' \subseteq \Psi$.
      With the upward closure of winning budgets, $\expr(\hmlAndS \Psi) \in \attackerWin(\attackerPos{p,Q})$.
  \end{itemize}
\end{proof}

\begin{postponedProof}[\refThm{thm:correctness-clever}, correctness of cleverness]
  \label{prf:correctness-clever}
  Assume $e_4 \in \{0, 1, \infty\}$, $e_4 \leq e_3$, and that $e_5 > 1$ implies $e_3 = e_4$. Then, the attacker wins $\gameSpectroscopyClever[\attackerPos{p_0,Q_0}, e]$ precisely if they win $\gameSpectroscopy[\attackerPos{p_0,Q_0}, e]$.
\end{postponedProof}
\begin{proof}
  The implication from the clever spectroscopy game $\gameSpectroscopyClever$ to the full spectroscopy game $\gameSpectroscopy$ is trivial as the attacker moves in ${\gameMove_\blacktriangle}$ are a subset of those in ${\gameMove_\triangle}$ and the defender has the same moves in both games.
  For the other direction, we have to show that any move $\attackerPos{p,Q} \gameMoveX{(0,-1,0,0,0,0)}_\triangle\defenderPos{p, Q \setminus Q_*, Q_*}$ winning at energy level $e$ can be simulated by a winning move $\attackerPos{p,Q} \gameMoveX{(0,-1,0,0,0,0)}_\blacktriangle\defenderPos{p, Q \setminus Q', Q'}$.
  Without loss of generality, we assume that the attacker winning strategy for $\gameSpectroscopy$ does not immediately nest conjunctions, \ie, that the attacker will play an observation at the next occasion.
  Therefore, all $q \in Q$ that can only be beaten by negation moves must be in $Q \setminus Q_*$.
  If $e_3 = e_4$, then the attacker in the clever game can just use $\attackerPos{p,Q} \gameMoveX{(0,-1,0,0,0,0)}_\blacktriangle\defenderPos{p, Q, \varnothing}$.
  By the assumption, this leaves the cases where $e_5 \in \{0,1\}$.
  \begin{itemize}
    \item If $e_5 = 0$, the attacker cannot be using negative decisions.
    \begin{itemize}
      \item If $e_4 = 0$, the attacker winning in $\gameSpectroscopy$ means that $Q \setminus Q_*$ must equal $\varnothing$.
        Either $Q = \varnothing$, and the attacker wins immediately in $\gameSpectroscopy$ as well as in $\gameSpectroscopyClever$;
        or the attacker is moving through the circle $\attackerPos{p,Q} \gameMove_\triangle\defenderPos{p, \varnothing, Q} \gameMove_\triangle \attackerPos{p,Q}$, which does not contribute to the attacker winning and can be simulated in $\gameSpectroscopyClever$ by not moving (for now).
      \item Otherwise, $e_4 = 1$.
        We simulate by $\attackerPos{p,Q} \gameMoveX{(0,-1,0,0,0,0)}_\blacktriangle\defenderPos{p, Q \setminus Q', Q'}$ with $Q' = \{ q \in Q \mid \initials(p) \subseteq \initials(q) \}$.
        As $e$ suffices for a win in $\gameSpectroscopy$, the attacker wins after all $\defenderPos{p, Q \setminus Q', Q'} \gameMove_\blacktriangle \attackerPos[\land]{p, q}$ with $q \in Q \setminus Q' = \{ q \in Q \mid \initials(p) \not\subseteq \initials(q) \}$ moves by observing one of the $a \in \initials(p) \setminus \initials(q)$.
        For $\defenderPos{p, Q \setminus Q', Q'} \gameMove_\blacktriangle \attackerPos{p, Q'}$, let us note that necessarily $Q' \subseteq Q_*$, by contraposition:
          If $q \notin Q_*$, there must be attacker winning moves after $\defenderPos{p, Q \setminus Q_*, Q_*} \gameMove_\triangle \attackerPos[\land]{p, q}$ winning at energy level $e'$ with $e'_1 = 1$ and $e'_5 = 0$;
          these imply there is $a \in \initials(p) \setminus \initials(q)$; therefore $q \notin \{ q \in Q \mid \initials(p) \subseteq \initials(q) \} = Q'$.
        From the attacker winning $\attackerPos{p,Q} \gameMove_\triangle\defenderPos{p, Q \setminus Q_*, Q_*} \gameMove_\triangle \attackerPos{p,Q_*}$ with $e$ and $Q' \subseteq Q_*$, we may conclude that they also win $\attackerPos{p,Q'}$ in $\gameSpectroscopy$ with appropriately updated $e$,
        since shrinking the $Q$-side can only make it easier for the attacker to win.
        In case of $\attackerPos{p,Q} \gameMove_\blacktriangle\defenderPos{p, Q \setminus Q', Q'} \gameMove_\blacktriangle \attackerPos{p,Q'}$ with $e$, the attacker can thus proceed by the strategy transferred from $\attackerPos{p,Q'}$ in $\gameSpectroscopy$.
    \end{itemize}
    \item If $e_5 = 1$, the attacker may play negative one-observation subgames in conjunctions.
      By a similar argument as in the $e_4=1$ part of the previous case, we transfer wins as follows.
      \begin{itemize}
        \item For $e_4=0$, the attacker chooses $\attackerPos{p,Q} \gameMoveX{(0,-1,0,0,0,0)}_\blacktriangle\defenderPos{p, Q \setminus Q', Q'}$ with $Q' = \{ q \in Q \mid \initials(q) \subseteq \initials(p) \}$.
          For each $q \in Q \setminus Q'$, the reason is that $a \in \initials(q) \setminus \initials(p)$ can be used to win through negative observations after $\defenderPos{p, Q \setminus Q', Q'} \gameMove_\blacktriangle \attackerPos[\land]{p, q}$.
          For $\defenderPos{p, Q \setminus Q', Q'} \gameMove_\blacktriangle \attackerPos{p, Q'}$, we again establish $Q' \subseteq Q_*$ by contraposition:
            If $q \notin Q_*$, there must be attacker winning moves after $\defenderPos{p, Q \setminus Q_*, Q_*} \gameMove_\triangle \attackerPos[\land]{p, q}$ winning at energy level $e'$ with $e'_1 = 1$ and $e'_5 = 1$ through negation;
            these imply there is $a \in \initials(q) \setminus \initials(p)$; therefore $q \notin \{ q \in Q \mid \initials(q) \subseteq \initials(p) \} = Q'$.
        \item For $e_4=1$, we simply combine the previous two arguments to establish $\attackerPos{p,Q} \gameMoveX{(0,-1,0,0,0,0)}_\blacktriangle\defenderPos{p, Q \setminus Q', Q'}$ with $Q' = \{ q \in Q \mid \initials(p) = \initials(q) \}$ as appropriate.
          In particular, for each $q \in Q \setminus Q'$, there either is $a \in \initials(p) \setminus \initials(q)$ that can be observed to win after $\defenderPos{p, Q \setminus Q', Q'} \gameMove_\blacktriangle \attackerPos[\land]{p, q}$ or $a \in \initials(q) \setminus \initials(p)$ to be negatively observed.
          $\defenderPos{p, Q \setminus Q', Q'} \gameMove_\blacktriangle \attackerPos{p, Q'}$ is covered by the fact that winning attacker moves after $\defenderPos{p, Q \setminus Q_*, Q_*} \gameMove_\triangle \attackerPos[\land]{p, q}$ for $q \in Q \setminus Q_*$ through one-step observations or negated observations again justify $Q' \subseteq Q_*$.
      \end{itemize}
  \end{itemize}
\end{proof}

\begin{postponedProof}[\refLem{lem:complexity}, winning budget complexity]
  \label{prf:complexity}
  For an $N$-dimensional declining energy game with $\gameMove$ of branching degree $o$, \refAlgo{alg:game-algorithm} terminates in $\bigo(\relSize{\gameMove} \cdot \relSize{G}^N \cdot (o + \relSize{G}^{(N - 1) \cdot o}))$ time, using $\bigo(\relSize{G}^{N})$ space for the output.
\end{postponedProof}
\begin{proof}
  A position is only updated if a successor has been updated in a way leading to the discovery of winning budgets below those that already were known.
  Each energy can trigger such an update only once at each position.
  So, the first tentative set of winning budgets assigned bounds the further updates in the order of energies below it.
  These are are polynomially bounded by the size of the $N$-dimensional hypercube / grid containing these first tentative budgets and the zero-vector.
  Each first assignment is bounded in each dimension by the length of simple paths originating from a position, which can be over-approximated by $\relSize{G}$ for each position.

  Two things follow from this:
  First, the amount of antichains in the grid of occurring tentative budgets is bounded by $\relSize{G}^{N-1}$,
  leading to a space complexity due to $\varname{attacker\_win}$ of $\relSize{G} \cdot \relSize{G}^{N-1}$.
  Second, the points in the grid bound the proper updates a position can experience to $\relSize{G}^N$.
  Collectively, these can trigger at most move-many updates, so $\relSize{\gameMove} \cdot \relSize{G}^N$ bounds the updates.
  Every update must consider up to out-degree $o$ many successors.
  At defender nodes, the update may take $\relSize{G}^{(N-1) \cdot o}$ combinations into account.
  This culminates in a time complexity of $\bigo(\relSize{\gameMove} \cdot \relSize{G}^N \cdot (o + \relSize{G}^{(N - 1) \cdot o}))$.
\end{proof}

\begin{postponedProof}[\refLem{lem:spectroscopy-complexity}, full spectroscopy complexity]
  \label{prf:spectroscopy-complexity}
  Time complexity of computing winning budgets for the full spectroscopy energy game $\gameSpectroscopy$ is in $2^{\bigo(\relSize{\proc} \cdot 2^{\relSize{\proc}})}$.
\end{postponedProof}
\begin{proof}
  Out-degrees $o$ in $\gameSpectroscopy$ can be bounded in $\bigo(2^{\relSize{\proc}})$,
  the whole game graph $\relSize{\gameMove_\triangle} \in \bigo( \relSize{\step{\cdot}} \cdot 2^{\relSize{\proc}} + \relSize{\proc}^2 \cdot 3^{\relSize{\proc}})$,
  and game positions $\relSize{G_\triangle} \in \bigo(\relSize{\proc} \cdot 3^{\relSize{\proc}})$.
  Inserting with $N=6$ in \refLem{lem:complexity} yields:
  \begin{align*}
      & \bigo(\relSize{\gameMove_\triangle} \cdot \relSize{G}^N \cdot (o + \relSize{G}^{(N - 1) \cdot o}))\\
  =\; & \bigo((\relSize{\step{\cdot}} \cdot 2^{\relSize{\proc}} + \relSize{\proc}^2 \cdot 3^{\relSize{\proc}})
        \cdot (\relSize{\proc} \cdot 3^{\relSize{\proc}})^6
        \cdot (2^{\relSize{\proc}} + (\relSize{\proc} \cdot 3^{\relSize{\proc}})^{5 \cdot C_1 \cdot 2^{\relSize{\proc}}})) \\
  =\; & \bigo((\relSize{\step{\cdot}} \cdot 2^{\relSize{\proc}} \cdot \relSize{\proc}^6 \cdot 3^{6 \cdot \relSize{\proc}}
        + \relSize{\proc}^8 \cdot 3^{7 \cdot \relSize{\proc}})
        \cdot (2^{\relSize{\proc}} + \relSize{\proc}^{C_2 \cdot 2^{\relSize{\proc}}} \cdot 3^{\relSize{\proc} \cdot C_2 \cdot 2^{\relSize{\proc}}})) \\
  =\; & \bigo(
        \relSize{\step{\cdot}} \cdot 2^{2 \cdot \relSize{\proc}} \cdot \relSize{\proc}^6 \cdot 3^{6 \cdot \relSize{\proc}}
      + \relSize{\proc}^8 \cdot 3^{7 \cdot \relSize{\proc}} \cdot 2^{\relSize{\proc}} \\
    & + \relSize{\step{\cdot}} \cdot 2^{\relSize{\proc}} \cdot \relSize{\proc}^{6 + C_2 \cdot 2^{\relSize{\proc}}} \cdot 3^{6 \cdot \relSize{\proc} + \relSize{\proc} \cdot C_2 \cdot 2^{\relSize{\proc}}}
      + \relSize{\proc}^{8 + C_2 \cdot 2^{\relSize{\proc}}} \cdot 3^{7 \cdot \relSize{\proc} + \relSize{\proc} \cdot C_2 \cdot 2^{\relSize{\proc}}})\\
  =\; & \bigo(
        \relSize{\step{\cdot}} \cdot 2^{C_3 \cdot \relSize{\proc}} \cdot \relSize{\proc}^6
      + \relSize{\proc}^8 \cdot 2^{C_4 \cdot \relSize{\proc}} \\
    & + \relSize{\step{\cdot}} \cdot \relSize{\proc}^{C_5 \cdot 2^{\relSize{\proc}}} \cdot 2^{C_6 \relSize{\proc} \cdot 2^{\relSize{\proc}}}
      + \relSize{\proc}^{C_7 \cdot 2^{\relSize{\proc}}} \cdot 2^{C_8 \cdot \relSize{\proc} \cdot 2^{\relSize{\proc}}})\\
  =\; & \bigo(
      \relSize{\step{\cdot}} \cdot 2^{(C_5 \log \relSize{\proc} + C_6 \relSize{\proc}) \cdot 2^{\relSize{\proc}}})
  \end{align*}
\end{proof}
}{}

\end{document}